\documentclass[journal, onecolumn,10pt,twoside]{IEEEtran}

\usepackage[T1]{fontenc}
\usepackage{amsmath, amsfonts, amssymb, mathrsfs}
\usepackage[amsmath,thmmarks]{ntheorem}
\usepackage{enumitem}
\usepackage{tikz}
\usepackage{booktabs}
\usepackage{bm}
\usepackage{graphicx}
\usepackage{subfigure}
\usepackage{cite}

\usepackage{manfnt}     

\theoremstyle{plain}
\theoremheaderfont{\itshape\bfseries}
\theorembodyfont{\normalfont\itshape}
\theoremseparator{:}
\theoremsymbol{}
\newtheorem{theorem}{Theorem}
\newtheorem{lemma}[theorem]{Lemma}
\newtheorem{corollary}[theorem]{Corollary}
\newtheorem{proposition}[theorem]{Proposition}

\newtheorem{definition}{Definition}

\theoremstyle{plain}
\newtheorem{question}{Question}

\theoremstyle{nonumberplain}
\theoremheaderfont{\itshape}
\theorembodyfont{\normalfont}
\theoremseparator{:}
\theoremsymbol{}
\newtheorem{remark}{Remark}

\theoremstyle{plain}
\theoremheaderfont{\itshape\bfseries}
\theorembodyfont{\normalfont}
\theoremseparator{:}

\def\TT{ \mathbb{T} }							
\def\dD{ \partial\mathbb{D} }			

\def\d{ \mathrm{d} }							
\def\T{ \mathrm{T} }							
\def\FN{ \mathbb{F} }
\def\NN{ \mathbb{N} }													
\def\RN{ \mathbb{R} }													
\def\QN{ \mathbb{Q} }													
\def\CN{ \mathbb{C} }													
\def\RNc{ \mathbb{R}_{\mathrm{c}} }						
\def\B{ \mathcal{B} }								  				
\def\Bc{ \mathcal{B}_{\mathrm{c}} }						
\def\C{ \mathcal{C} }								  				
\def\Cc{ \mathcal{C}_{\mathrm{c}} }						
\def\Csemi{ \mathbf{C}_{1} }									
\def\Cweak{ \mathbf{C}_{2} }									
\def\H12{ H^{1/2} }														
\def\H12c{ H^{1/2}_{\mathrm{c}} }							
\def\W{ \mathcal{A} }													
\def\Energy{\mathrm{E}}
\def\bsx{ \boldsymbol x}											
\def\va{ \bold a}
\def\vx{ \bold x}
\def\vu{ \bold u}
\def\vv{ \bold v}
\def\vzero{ \bold 0}
\def\mNull{ \bold 0}
\def\mA{ \bold A}
\def\mB{ \bold B}
\def\mE{ \bold E}
\def\mF{ \bold F}
\def\mI{ \bold I}
\def\mJ{ \bold J}
\def\mK{ \bold K}
\def\mM{ \bold M}
\def\mN{ \bold N}
\def\mP{ \bold P}
\def\mQ{ \bold Q}

\def\na{ n_{\mathrm{a}} }
\def\nb{ n_{\mathrm{b}} }
\def\Pd{ \Pi_{1} }														
\def\Pu{ \Sigma_{1} }													

\def\A{ \mathcal{A} }													
\def\TM{ \mathrm{TM} }              					

\DeclareMathOperator*{\ind}{ind}			 

\IEEEoverridecommandlockouts

\begin{document}

\title{Singular Linear Differential-Algebraic Equations with Constant Coefficients Necessarily Admit Non-Computable Solutions}

\author{Holger~Boche,~\IEEEmembership{Fellow, IEEE} and Volker~Pohl,~\IEEEmembership{Member, IEEE}
and H.~Vincent~Poor,~\IEEEmembership{Life Fellow, IEEE} and Frank~H.P.~Fitzek,~\IEEEmembership{Fellow, IEEE}
\thanks{
H.~Boche is with the Chair of Theoretical Information Technology, Technical University of Munich, Arcisstra{\ss}e 21, 80333 M{\"u}nchen, Germany, e-mail: boche@tum.de.
He is also
with the Munich Center for Quantum Science and Technology (MCQST), Munich, Germany, and with the Munich Quantum Valley (MQV), Munich, Germany.
V.~Pohl is with the Chair of Theoretical Information Technology, Technical University of Munich, Arcisstra{\ss}e 21, 80333 M{\"u}nchen, Germany, e-mail: volker.pohl@tum.de.
H.~V.~Poor is with the Department of Electrical and Computer Engineering, Princeton University, Princeton, NJ~08544, USA, e-mail: poor@princeton.edu.
F.H.P.~Fitzek is with the Deutsche Telekom Chair for Communication Networks, Dresden University of Technology, Dresden, Germany, e-mail: frank.fitzek@tu-dresden.de 
}
\thanks{
This work was partly supported by the German Federal Ministry of Research, Technology and Space (BMFTR) within the national initiative on 6G Communication Systems through the transfer hub \emph{6G-life} under Grants 16KIS2413K and 16KIS2414,
by the German Research Foundation (DFG) as part of Germany's Excellence Strategy -- EXC 2050/2 -- Project ID 390696704 -- Cluster of Excellence "Centre for Tactile Internet with Human-in-the-Loop" (CeTI) of Technische Universit{\"a}t Dresden,
and by the Bavarian State Ministry of Science and the Arts within the project \emph{Next Generation AI Computing (GAIn)}.
The work of H.~V.~Poor was partly supported by Princeton Language and Intelligence.
The paper was also supported by the LTI TUM collaboration with Princeton through the BMFTR as part of the 6G-Atlantic Bridges project.}
}


\maketitle
\begin{abstract}
If a regular linear differential-algebraic equation has a unique solution, this solution is known to be Turing computable.
In contrast, this paper shows that singular linear differential-algebraic equations exhibit fundamentally different computability properties.
More precisely, we show that for every linear differential-algebraic equation (LDAE) with constant computable coefficients, there exists a computable input function with strong regularity properties such that the LDAE admits a unique solution, yet this solution is not Turing computable.
In fact, the solution is non-computable at certain computable times.
\end{abstract}
\begin{IEEEkeywords}
Differential-algebraic equations, Turing computability
\end{IEEEkeywords}

\section{Introduction}
\label{sec:intro}

In this paper, we investigate \emph{linear differential-algebraic equations (LDAE)} with constant coefficients of the form
\begin{align}
\label{equ:LDAE}
\begin{split}
	\mE \vx'(t) &= \mA \vx(t) + \vu(t)\,,
	\qquad t\in\TT\;,\\
	\text{with}\qquad
	\vx(0) &= \vx_{0}\;,
\end{split}
\end{align}
on intervals $\TT = [0,T] \subset \RN$.
Here $\mE,\mA \in \RN^{m\times n}$ are coefficient matrices and the vector $\vx_{0} \in \RN^{n}$ contains the initial values of the LDAE.
The vector valued function $\vu : \TT \to \RN^{m}$ is often called the \emph{input} of the LDAE.
For given initial values $\vx_{0}\in\RN^{n}$, we call $\vu$ \emph{admissible} if \eqref{equ:LDAE} admits at least one solution.
Throughout this paper, we assume that the coefficient matrices $\mE$ and $\mA$ and the vector $\vx_{0}$ are \emph{computable}, meaning that all of their entries are computable real numbers.

LDAEs are a widely accepted tool for modeling constrained dynamical systems \cite{KungelMehrmann_DAEs} in numerous applications, such as power systems, nonlinear circuits or control systems \cite{HillMareels_TCS90,Maheshawari_TCS21,Sau_TCAS18,Schon_CDC2003}.
In general, they can be viewed as a combination of an ordinary differential equation (ODE) with some algebraic equations relating the single components of the independent variable $\vx$. 
LDAEs usually arise when a physical process is modeled by an ODE whose state variables are subject to additional constraints.
A typical example of such an algebraic constraint arises when a mass point, whose motion is described by Newton's ODEs, is restricted to a certain subspace.
Another typical example arises from Kirchhoff's circuit laws, which restrict the voltages and currents in a electrical network to follow certain rules (cf. also the example in Section~\ref{sec:ExampleKirchhoff}).

Only for a limited number of LDAEs, it is possible to derive closed form expressions for the solution.
The vast majority of LDAEs that appear in practical applications require numerical methods to compute an approximation of the solution.
Since digital computers are the primary platform for such computations, a fundamental question is whether solutions of LDAEs are generally Turing computable \cite{Turing_1937,Turing_1938}.
The central question is whether it is possible to compute an approximation of the solution under an effective control of the approximation error.
As we will show, the answer to this question depends strongly on the structure of the underlying LDAE. 


In general, one distinguishes two different cases:
\emph{a) Regular LDAEs:} If the coefficient matrix $\mE$ in \eqref{equ:LDAE} is non-singular, then we can multiply the first line of \eqref{equ:LDAE} with $\mE^{-1}$ to obtain an ODE with constant coefficients:
\begin{align*}
	\vx'(t) &= \widetilde{\mA}\vx(t) + \widetilde{\vu}(t)\,,
	\qquad t\in\TT\,,\\
	\text{with}\qquad \vx(0) &= \vx_{0}\;,
\end{align*}
and with the coefficient matrix $\widetilde{\mA} = \mE^{-1}\mA$ and input vector $\widetilde{\vu}(t) = \mE^{-1}\vu(t)$.
In this case, the answer to the question whether the solution is Turing computable is already well known:
Assume all initial values in $\vx_{0}$ are computable numbers and $\vu$ is an input vector of computable continuous functions. 
Then the ODE has a unique solution $\vx$ and because of its uniqueness, $\vx$ is always a computable continuous function \cite{Graca_CompDGL_2021}.
Moreover, there exists many efficient and well established numerical algorithms to solve ODEs with a unique solution.

\emph{b) Singular LDAEs:}
The situation changes dramatically if $\mE$ is singular.
Then \eqref{equ:LDAE} can no longer be converted to an ODE and the fundamental question whether its solution is Turing computable seems to be open.
However, one has already observed that numerical methods that work good for solving regular LDAEs often fail if they are applied to singular LDAEs. \cite{Petzold_DAEnotODE_SIAM82}.

So the aim of this paper is to study the Turing computability of solutions of \emph{singular LDAEs}.
We will show that these solutions exhibit fundamentally different computability properties from those of ODEs.
To this end, we consider singular LDAEs \eqref{equ:LDAE} with computable coefficient matrices $\mE$ and $\mA$ and with computable initial values $\vx_{0}$. 
Then we consider only computable input vectors $\vu$ such that \eqref{equ:LDAE} possesses a unique solution.
For these kind of equations we ask the following
\begin{question}
Is it true that, as in the case of ODEs, the unique solution of a singular LDAE is always Turing computable?
\end{question}
We will show in this paper that the answer is negative.
In fact, we are going to show that one can always find a $\vu$ that has additional strong regularity properties (e.g. smoothness with computable derivatives, finite Dirichlet energy, absolute continuous) such that the singular LDAE
\eqref{equ:LDAE} has a unique solution $\vx$ but such that this solution is not a computable continuous function.
Even more, we will show that on any arbitrary small interval $[0,T] \subset \RN$ there exists computable points $\tau\in\TT$ at which the solution $\vx(\tau)$ is not computable.
Thus the solution does not even satisfies the weakest form of computability, namely Banach-Mazur computability.

The remainder of the paper has the following structure:
Section~\ref{sec:Notation} introduces the main notation and function spaces used in this paper, and
Section~\ref{sec:Computability} shortly recalls notation and concepts from computability analysis.
Section~\ref{sec:LDAE} provides the necessary background on the theory of singular LDAEs.
Section~\ref{sec:MainResults} then presents our main results and shows that the solutions of singular LDAEs are generally not computable.
Section~\ref{sec:Proof_AuxResultsAndMainLemma} contains the long and technical proof of the fundamental Lemma~\ref{lem:MainLemma}.

\section{General Notation}
\label{sec:Notation}

Vectors in  $\RN^{n}$ will be denoted by boldface lower case letters.
For example, $\va \in \RN^{n}$ denotes a column vector $\va = (a_{1}, a_{2}, \dots, a_{N})^{\T}$
with entries $a_{k}\in\RN$, $k = 1,\dots, n$.
Matrices are denoted by boldface upper case letters. For example $\mA \in \CN^{m,n}$ denotes a matrix with $m$ rows $n$ columns and entries $a_{\ell,k}$, $\ell = 1,\dots m$ and $k = 1,\dots n$.
We recall that a square matrix $\mA \in \CN^{n\times n}$ is said to be \emph{nilpotent} if there exists a positive integer $m\in\NN$ such that $\mA^{m} = 0$.
The smallest integer $m$ that satisfies this equation is said to be the \emph{index of nilpotency} of $\mA$.
Note that the index of nilpotency of a $n\times n$ matrix is always smaller or equal to $n$.

Let $\TT = [0,T] \subset \RN$ be an arbitrary closed interval on the real axis $\RN$ starting at zero.
Then $\C(\TT)$ stands for the Banach space of continuous functions on $\TT$ with the usual norm
\begin{equation*}
	\left\|f\right\|_{\infty}
	= \max_{t\in\TT} \left|f(t)\right|\,.
\end{equation*}
The set of $k$-times continuously differentiable functions on $\TT$ is denoted by $\C^{k}(\TT)$, and we write $f^{(\ell)}$, $\ell \leq k$, for the $\ell$-th derivative of $f \in \C^{k}(\TT)$.
For $1\leq p < \infty$, $L^{p}(\TT)$ denotes the usual Banach space of integrable functions on $\TT$ with norm
$\left\|f\right\|_{p} = \left( \int_{\TT} \left|f(t)\right|\, \d t \right)^{1/p}$.
A function $f\in\C(\TT)$ is said to be \emph{absolute continuous}, if there exists a $g\in L^{1}(\TT)$ such that $f(t) = f(0) + \int^{t}_{0} g(\tau)\, \d\tau$ for all $t\in\TT$.

Spaces of vector-valued function on $\TT$, i.e. $n$-dimensional vectors whose entries are functions in a certain function space are denoted by $\C(\TT,\CN^{n})$, $\C^{k}(\TT,\CN^{n})$, $L^{p}(\TT,\CN^{n})$, etc.
The corresponding functions are also denoted by lower-case boldface letters, e.g. $\vx\in \C(\TT,\CN^{n})$ is a vector
\begin{equation*}
	\vx(t) = \left(x_{1}(t), \dots, x_{n}(t) \right)^{\T}\,,
	\quad t\in\TT\,,
\end{equation*}
whose entries $x_{k}$, $1\leq k \leq n$ are continuous functions on $\TT$.

For any $f\in L^{1}(\TT)$, its \emph{Fourier coefficients} are given by
\begin{equation}
\label{equ:FourierCoef}
	a_{n}(f) = \frac{2}{T}\int^{T}_{0} f(t)\, \cos\big(\tfrac{2\pi}{T} n t\big)\, \d t\,,
	\qquad\text{and}\qquad
	b_{n}(f) = \frac{2}{T}\int^{T}_{0} f(t)\, \sin\big(\tfrac{2\pi}{T} n t\big)\, \d t\,,\qquad n=0,1,2,\dots,
\end{equation}
and any $f\in L^{2}(\TT) \subset L^{1}(\TT)$ can be recovered from its Fourier coefficients by the harmonic series
\begin{equation}
\label{equ:FourierSeries}
	f(t)
	= \frac{a_{0}(f)}{2} + \sum^{\infty}_{n=1} \left[ a_{n}(f)\cos\big(\tfrac{2\pi}{T} n t\big) + b_{n}(f)\sin\big(\tfrac{2\pi}{T} n t\big) \right],
\end{equation}
where the sum converges in the norm of $L^{2}(\TT)$ and equality in \eqref{equ:FourierSeries} means equality in $L^{2}(\TT)$.
The \emph{Wiener algebra}, denoted by $\W(\TT)$, is the space of all absolutely convergent Fourier series on $\TT$ with norm
\begin{equation*}
	\left\|f\right\|_{\W(\TT)} = \textstyle\frac{\left| a_{0}(f) \right|}{2} + \sum_{n\in\NN} \left|a_{n}(f) \right| + \left|b_{n}(f) \right| < \infty\;.
\end{equation*}
The absolute summability of the Fourier series guarantees its uniform convergence and so one has $\A(\TT) \subset \C(\TT)$.

According to Parseval's identity, one has
\begin{equation*}
	\left\|f\right\|^{2}_{2}
	= \frac{\left|a_{0}(f)\right|^{2}}{4} + \frac{1}{2}\sum^{\infty}_{n=0} \left( \left|a_{n}(f)\right|^{2} + \left|b_{n}(f)\right|^{2} \right)\,.
\end{equation*}
Often, this value is called the energy of $f$ and so $L^{2}(\TT)$ is said to be the space of functions on $\TT$ with finite energy.
Nevertheless, from a physical point of view the so-called \emph{Dirichlet energy} \cite{Ross_DirichletSpace,Arcozzi_DirichletSpace11} is often of much more physical importance.
For $f \in L^{2}(\TT)$, it is given by
\begin{equation*}
	\Energy(f) 
	= \frac{1}{2}\sum^{\infty}_{n=1} n \left( \left|a_{n}(f)\right|^{2} + \left|b_{n}(f)\right|^{2} \right)\;,
\end{equation*}
and we write
\begin{equation*}
	H^{1/2}(\TT) = \left\{ f \in L^{2}(\TT)\ :\ \Energy(f) < \infty \right\}
\end{equation*}	
for the Banach space of all $L^{2}(\TT)$-functions of finite Dirichlet energy,
with the norm
\begin{equation}
\label{equ:H12_norm}
	\left\|f\right\|_{H^{1/2}}
	= \left( \tfrac{\left|a_{0}(f)\right|^{2}}{4} + \Energy(f) \right)^{1/2}\,.
\end{equation}
Note that $H^{1/2}(\TT) = W^{\frac{1}{2},2}(\TT)$ is the usual Sobolev-space with index $1/2$ \cite{Adams_SobolevSpaces,Evans_PDEs}.

Functions of the form \eqref{equ:FourierSeries} but where the sum contains only finitely many non-zero terms are called \emph{trigonometric polynomials} on $\TT$.

\section{Computability analysis}
\label{sec:Computability}

We briefly recall some of its notions and concepts from computability analysis \cite{PourEl_Computability,Weihrauch_ComputableAnalysis}. 
Its theory is based on the notion of a \emph{classical Turing machine}, which is an algorithmic device that receives rational numbers as input and, according to a fixed computational procedure, yields rational numbers as output. 

\subsection{Computable numbers}

Classical Turing machines only operate with rational numbers.
Real numbers needs to be approximated \emph{effectively} by rational numbers:

\begin{definition}
\label{def:CompNumber}
An $x\in\RN$ is said to be \emph{computable} if there exists a classical Turing machine $\TM_{x} : \NN\to\QN$ such that
\begin{equation}
\label{equ:DefCompNumb}
	\left|x-\TM_{x}(M)\right| \leq 2^{-M},
	\qquad\text{for all}\ M\in\NN\;.
\end{equation}
$\RNc$ denotes the closed subfield of all computable real numbers.
\end{definition}
Effective convergence means that one is able to control the approximation error: 
To every $M\in\NN$ one can algorithmically determine an approximation $r_{M} = \TM_{x}(M)\in\QN$ such that the approximation error satisfies \eqref{equ:DefCompNumb}.

\begin{remark}
The term $2^{-M}$ in relation \eqref{equ:DefCompNumb} of Definition~\ref{def:CompNumber} (and in all subsequent definitions involving effective convergence) may be replaced by any functions $r : \NN \to \RN$ such that $r(M)$ monotonically decreases to zero as $M \to\infty$.
\end{remark}

In a similar way, one defines \emph{computable sequences}:
\begin{definition}
A sequence $\bsx = \left\{ x_{n} \right\}_{n\in\NN}\subset\RN$ is said to be \emph{computable} if there exists a single classical Turing machine $\TM_{\bsx} : \NN\times\NN \to\QN$ such that for every $n\in\NN$
\begin{equation*}
	\left|x_{n} - \TM_{\bsx}(n,M)\right| \leq 2^{-M},
	\quad\text{for all}\ M\in\NN\,.
\end{equation*}
\end{definition}

\begin{remark}
Note that, according to the previous definition, every element $x_{n}$ of a computable sequence $\bsx$ is a computable number.
Moreover, the previous definition can easily be extended to sequences with more than one index.
The important point is always, that there exists a \emph{single} classical Turing machine that es able to effectively compute every element of the sequence.
\end{remark}

\subsection{Semi-computable numbers}

For a non-computable $x\in\RN$, one may ask to which extent it is not computable.
\emph{Zheng} and \emph{Weihrauch} introduced a hierarchy of non-computable numbers \cite{ZhengWeihrauch_MathLog01} that characterizes the degree of non-computability.
Some sets from the first layer of this hierarchy are needed in this paper.
Following \cite{ZhengWeihrauch_MathLog01}, we define two sets of \emph{semi-computable} numbers:

\begin{definition}[Semi-computable numbers] 
An $x \in\RN$ is said to be \emph{left-computable}
if there exists a computable sequence $\left\{ \zeta_{n} \right\}_{n\in\NN} \subset\RNc$ such that
\begin{equation*}
	\zeta_{n+1} \geq \zeta_{n}\,,\ \text{for all}\ n\in\NN\,,
	\quad\text{and}\quad
	\lim_{n\to\infty} \zeta_{n} = x\;.
\end{equation*}
An $x \in\RN$ is said to be \emph{right-computable}
if there exists a computable sequence $\left\{ \xi_{n} \right\}_{n\in\NN}\subset\RNc$ such that
\begin{equation}
\label{equ:propXi}
	\xi_{n+1} \leq \xi_{n}\,,\ \text{for all}\ n\in\NN\,,
	\quad\text{and}\quad
	\lim_{n\to\infty} \xi_{n} = x\;.
\end{equation}
We denote by $\Pu$ and $\Pd$ the set of all left- and right-computable numbers, respectively,
and $\Csemi = \Sigma_{1} \cup \Pi_{1}$ is the set of all semi-computable numbers.
\end{definition}
A semi-computable number is generally not computable, i.e. $\RNc \subsetneq \Pu$ and $\RNc \subsetneq \Pd$.
However, it is not hard to see \cite{PourEl_Computability} that if $x\in\RN$ is left- \emph{and} right-computable, then it is also computable in the sense of Def.~\ref{def:CompNumber}.

\begin{proposition}
\label{prop:Delta1isUnion}
A number $x \in\RN$ belongs to $\RNc$ if and only if $x \in \Sigma_{1} \cap \Pi_{1}$, that is $\RNc = \Sigma_{1} \cap \Pi_{1}$.
\end{proposition}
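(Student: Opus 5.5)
We prove the two inclusions separately. The inclusion $\RNc \subseteq \Sigma_{1} \cap \Pi_{1}$ is the easy direction. Given a computable $x$ with Turing machine $\TM_{x}$ as in Definition~\ref{def:CompNumber}, we set
\begin{equation*}
	\zeta_{n} = \TM_{x}(n) - 2^{-n}
	\qquad\text{and}\qquad
	\xi_{n} = \TM_{x}(n) + 2^{-n}.
\end{equation*}
These are computable sequences of rationals with $\zeta_{n} \leq x \leq \xi_{n}$, and both converge to $x$.

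Monotonicity is not automatic, so we repair it in one of two ways. One option is to pass to running extrema, $\tilde\zeta_{n} = \max_{k\leq n} \zeta_{k}$ and $\tilde\xi_{n} = \min_{k\leq n}\xi_{k}$. These are still computable, since a finite max or min of rationals is computable. They are still bounded by $x$ from the respective side, so they remain squeezed and converge to $x$. Alternatively, one checks directly that $\TM_{x}(n+1) - 2^{-(n+1)} \geq x - 2^{-n} \geq \TM_{x}(n) - 2^{-n}\cdot 2$, which would need a shifted index; the running max/min avoids this bookkeeping. Either way $x \in \Sigma_{1}\cap\Pi_{1}$.

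For the converse, let $\{\zeta_{n}\}$ be an increasing computable sequence and $\{\xi_{n}\}$ a decreasing computable sequence, both converging to $x$. Then $\zeta_{n} \leq x \leq \xi_{n}$ for all $n$, and $\xi_{n} - \zeta_{n} \to 0$ monotonically. Given $M$, the algorithm searches $n = 1,2,\dots$ for an index where a rational approximation certifies $\xi_{n} - \zeta_{n} < 2^{-M-1}$. Concretely, it computes $q_{n} = \TM_{\xi}(n,M+3)$ and $p_{n} = \TM_{\zeta}(n,M+3)$ and tests whether $q_{n} - p_{n} < 2^{-M-1}$ (using $2^{-M-1} + 2\cdot 2^{-M-3}$ slack as needed). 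Because the gap tends to zero, the search terminates. The machine then outputs $p_{n}$. The true gap is at most $2^{-M-1} + 2^{-M-2}$, so
\begin{equation*}
	\left|x - p_{n}\right| \leq \left|x - \zeta_{n}\right| + 2^{-M-3} \leq 2^{-M}.
\end{equation*}
This gives a single Turing machine $\TM_{x}$ satisfying \eqref{equ:DefCompNumb}, so $x \in \RNc$.

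The main subtlety is exactly this step. We cannot decide exact comparisons of computable reals, so termination must rest on strict inequalities checked with rational approximations of sufficiently high precision. Choosing the precision $M+3$ and a strict threshold handles this.
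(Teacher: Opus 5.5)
Your proof is correct in both directions. The paper gives no proof of its own: it only remarks that the statement is \emph{not hard to see} and cites Pour-El and Richards. Your argument is the standard one from that literature.

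In the forward direction, the running extrema of $\TM_{x}(n)\mp 2^{-n}$ give computable monotone approximations from below and from above.

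In the converse, the search terminates because $q_{n}-p_{n}\le(\xi_{n}-\zeta_{n})+2^{-M-2}$ and $\xi_{n}-\zeta_{n}\to 0$. At the stopping index you get $|x-p_{n}|\le(\xi_{n}-\zeta_{n})+2^{-M-3}<2^{-M-1}+2^{-M-2}+2^{-M-3}<2^{-M}$.

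Two cosmetic remarks:
\begin{itemize}
\item The \emph{alternatively} aside can be dropped. If you want a direct construction instead, $\TM_{x}(n)-3\cdot 2^{-n}$ is already nondecreasing and $\TM_{x}(n)+3\cdot 2^{-n}$ already nonincreasing, with no running max or min needed.
\item Monotonicity of the gap $\xi_{n}-\zeta_{n}$ is never used; only its convergence to zero matters.
\end{itemize}
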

$\RNc$ is closed under the usual arithmetic operations addition, subtraction, multiplication and division.
This is not true for $\Sigma_{1}$, $\Pi_{1}$, and $\Csemi$.
In fact, there exist left computable numbers $a,b\in\Sigma_{1}$ such that $x = a-b \notin \Csemi$.
This motivated \cite{WeihrauchZhengHierarchy_98} to define the following set of \emph{weakly computable numbers}:

\begin{definition}[Weakly computable numbers]
\label{def:WeakComputableNr}
An $x \in\RN$ is said to be weakly computable if there exist $a,b\in\Sigma_{1}$ such that $x = a-b$.
The set of all weakly computable numbers is denoted by $\Cweak$.
\end{definition}

\begin{remark}
$\RNc$, $\Sigma_{1}$, $\Pi_{1}$, and $\Csemi$ are proper subsets of $\Cweak$.
\end{remark}
For us, it will be important that there exist numbers $x \in \Cweak$ that are not computable.

\subsection{Computable functions}

We consider functions on closed intervals $\TT = [0,T]$  with $T > 0$.
Such an interval is said to be computable if $T\in\RNc$.
There are different notions that describe the computability of such functions.
The weakest and most natural of these notions is the so called Banach--Mazur computability.

\begin{definition}[Banach--Mazur computable function]
\label{def:BMCompFunc}
Let $\TT = [0,T]$ be a computable interval. A function $f : \TT\to\RN$ is called \emph{Banach--Mazur computable} (or \emph{sequentially computable}) if,
for every computable sequence $\left\{ t_{n} \right\}_{n\in\NN} \subset \TT$ the sequence $\left\{ f(t_{n}) \right\}_{n\in\NN}$ is computable.
\end{definition}

\begin{remark}
In particular, a Banach-Mazur computable function maps computable numbers onto computable numbers.
\end{remark}

Banach-Mazur computability is the weakest form of computability.
It does not impose any continuity requirements of $f$ on the entire set $\TT$.
Thus a Banach-Mazur computable function might not be continuous on $\TT$.
In particular, if the sequence $\left\{ t_{n} \right\}_{n\in\NN}$ effectively converges to $t\in\TT$ then this does not necessarily imply that also $\left\{ f(t_{n}) \right\}_{n\in\NN}$ converges effectively.

For us, the notion of a computable continuous function will be of more importance. 
This form of computability can be defined by an effective version of Weierstrass approximation theorem.
To this end, we say that a sequence $\left\{ \varphi_{n} \right\}_{n\in\NN}$ is a \emph{computable sequence of trigonometric polynomials} on $\TT$ if the functions $\varphi_{n}$ have the form
\begin{equation}
\label{equ:TrigPoly}
	\varphi_{n}(t)
	= \tfrac{a_{n}(0)}{2} + \sum^{\TM(n)}_{k=1} \!\!\!\left[ a_{k}(n)\, \cos\left(\tfrac{2\pi}{T}k t\right) + b_{k}(n)\, \sin\left(\tfrac{2\pi}{T}k t\right) \right],	
\end{equation}
for all $t\in\TT$ and
wherein $\left\{ a_{k}(n) \right\}_{n,k\in\NN}, \left\{ b_{k}(n) \right\}_{n,k\in\NN}$ are computable (double) sequences and $\TM : \NN\to \NN$ is classical Turing machine that computes for every $n\in\NN$ the degree $\TM(n)$ of the trigonometric polynomial $\varphi_{n}$.

Let $\B$ be an arbitrary Banach spaces of functions on $\TT$.
A function $f \in\B$ is said to be computable (as a function of $\B$) if it can effectively be approximated by computable trigonometric polynomials in the following sense:

\begin{definition}[Computable functions in Banach spaces]
\label{def:ComputableBanachSpace}
Let $\B$ be a Banach space of functions on $\TT$.
1) A function $f : \B \to\RN$ is called \emph{$\B$-computable} if there exists a computable sequence $\left\{ \varphi_{n} \right\}_{n\in\NN}$ of trigonometric polynomials that effectively converges to $f$,
i.e. there exists a classical Turing machine $\TM : \NN\to\NN$ such that
\begin{equation*}
	n\geq \TM(M)
	\quad\text{implies}\quad
	\left\|f - \varphi_{n} \right\|_{\B} < 2^{-M}\,.
\end{equation*}
2) A sequence $\left\{ f_{n} \right\}_{n\in\NN} \subset \B$ is said to be $\B$-computable if there exists a computable double sequence $\left\{ \varphi_{n,k} \right\}_{n,k\in\NN}$ of trigonometric polynomials and classical Turing machine $\TM : \NN\times\NN\to\NN$ such that
\begin{equation*}
	k \geq \TM(n,N)
	\quad\text{implies}\quad
	\left\|\varphi_{n,k} - f_{n} \right\|_{\B} \leq 2^{-N}\,.
\end{equation*}
The set of all $\B$-computable function is denoted by $\Bc$.
\end{definition}
Subsequently, we need in particular the two space $\Cc(\dD)$ and $H^{1/2}_{\mathrm{c}}(\dD)$.
A function $f \in \Cc(\dD)$ is called a \emph{computable continuous function}.

\begin{remark}
Note that if $\big\{ f_{n} \big\}_{n\in\NN} \subset\B$ is a computable sequence that effectively converges to $f$ in the sense of Def.~\ref{def:ComputableBanachSpace} then $\big\{ \left\|f_{n}\right\|_{\B} \big\}_{n\in\NN}$ is a computable sequence that effectively converges to $\left\|f\right\|_{\B}$ \cite{PourEl_Computability}.
In particular, $\left\|f\right\|_{\B} \in\RNc$ for every computable $f \in\B$.
\end{remark}

\begin{remark}
Note that every computable continuous function is also Banach--Mazur computable \cite{PourEl_Computability}. 
Conversely there are Banach--Mazur computable functions that are not computable continuous functions.
\end{remark}

\section{Singular LDAEs}
\label{sec:LDAE}

We briefly recall the general structure of LDAEs \eqref{equ:LDAE} and properties of their solutions.

\subsection{Canonical form of LDAEs}

If $\mP\in\RN^{m\times m}$ and $\mQ\in\RN^{n\times n}$ are two non-singular matrices then we can transform \eqref{equ:LDAE} into an equivalent form by changing the coordinates in $\RN^{m}$ and $\RN^{n}$ according to
\begin{equation}
\label{equ:CoordinateTranfor}
	\vx(t) = \mQ\, \widetilde{\vx}(t)
	\qquad\text{and}\qquad
	\vu(t) = \mP\, \widetilde{\vu}(t)\,,
	\quad\text{for all}\ t\in\TT\,.
\end{equation}
This yields an equivalent LDAE
\begin{align}
\label{equ:LDAE_tilde}
\begin{split}
	\widetilde{\mE} \widetilde{\vx}'(t) &= \widetilde{\mA} \widetilde{\vx}(t) + \widetilde{\vu}(t)\,,
	\qquad t\in\RN\;,\\
	\text{with}\qquad
	\widetilde{\vx}(0) &= \vx_{0} = \widetilde{\vx}_{0}\;,
	\end{split}
\end{align}
with $\widetilde{\vx}_{0} = \mQ^{-1}\, \vx_{0}$ and with the coefficient matrices
\begin{equation}
\label{equ:TrafoMatrices}
	\widetilde{\mE} = \mP^{-1}\mE\mQ
	\qquad\text{and}\qquad
	\widetilde{\mA} = \mP^{-1}\mA\mQ\,.
\end{equation} 
It is clear that instead of \eqref{equ:LDAE}, we can always consider the transformed LDAE \eqref{equ:LDAE_tilde}, because of the bijective mappings \eqref{equ:CoordinateTranfor} between the solutions of both equations.

\begin{definition}[Strong equivalence]
\label{equ:StrongEquivalence}
Two pairs of matrices $(\mE,\mA)$ and $(\widetilde{\mE},\widetilde{\mA})$ with $\mE,\mA,\widetilde{\mE},\widetilde{\mA} \in \RN^{m\times n}$
are said to be \emph{strongly equivalent} if there exist non-singular matrices $\mP\in\RN^{m\times m}$ and $\mQ\in\RN^{n\times n}$ such that \eqref{equ:TrafoMatrices} holds.
\end{definition}

\begin{remark}
Note that the order of the matrices matters, i.e. $(\mE,\mA)$ and $(\mA,\mE)$ are generally different matrix pairs.
\end{remark}

\begin{remark}
Pairs $(\mE,\mA)$ of matrices of the same size are also known as \emph{matrix pencils}. Often they are associated with the the matrix polynomial $\lambda\mE - \mA$ or with the matrix polynomial $\lambda\mE - \mu\mA$ in two independent variables \cite{Gantmacher1}.
\end{remark}
Definition~\ref{equ:StrongEquivalence} establishes an equivalence relation on the set of all matrix pairs of the same size.
We only consider LDAEs \eqref{equ:LDAE} for which the pair $(\mE,\mA)$ is regular, in the following sense:

\begin{definition}[Regular matrix pairs]
A matrix pair $(\mE,\mA)$ with $\mE,\mA \in \RN^{m\times n}$ is said to be \emph{regular} if $m=n$ and if the corresponding \emph{characteristic polynomial}
$p(\lambda) = \det(\lambda\mE - \mA)$
is not the zero polynomial.
\end{definition}

\begin{remark}
Note that even if the matrix pair $(\mE,\mA)$ is regular, one or even both matrices of this pair might be singular.
In particular, there exists \emph{singular} LDAEs with a \emph{regular} pair $(\mE,\mA)$ of coefficient matrices.
In fact, this paper focuses mainly on such LDAEs.
\end{remark}
It is easy to see that if a matrix pair $(\mE,\mA)$ is strongly equivalent to a regular matrix pair, then $(\mE,\mA)$ itself is regular.

Having an equivalence relation as given by Def.~\ref{equ:StrongEquivalence}, one may ask for particular simple canonical forms.
Based on works of Weierstrass and Kronecker, Gantmacher \cite{Gantmacher1} has given a complete characterization of general matrix pairs together with a corresponding so called \emph{Kronecker canonical form}.

Since we consider only regular matrix pairs, the following canonical form will be sufficient for us.

\begin{theorem}
\label{thm:CanonicalRegularPencil}
Every regular matrix pair $(\mE,\mA)$ with $\mE,\mA \in \RN^{n\times n}$ is strongly equivalent to a matrix pair of the form
\begin{equation}
\label{equ:MazrixWeierstrassCF}
	\big( \widetilde{\mE},\widetilde{\mA} \big)
	= \left(\left[\begin{array}{cc}\mI_{n_\mathrm{a}} & \mNull\\ \mNull &\mN\end{array}\right],\left[\begin{array}{cc}\mJ & \mNull\\ \mNull & \mI_{n_{\mathrm{b}}} \end{array}\right]\right)
\end{equation}
where $\mI_{n_{\mathrm{a}}}$ and $\mI_{n_{\mathrm{b}}}$ are identity matrices of size $n_{\mathrm{a}}\times n_{\mathrm{a}}$ and $n_{\mathrm{b}}\times n_{\mathrm{ab}}$, respectively, 
where $\mJ \in \RN^{n_{\mathrm{a}}\times n_{\mathrm{a}}}$ is regular matrix in Jordan canonical form, 
where $\mN \in \RN^{n_{\mathrm{b}} \times n_{\mathrm{b}}}$ is a nilpotent matrix in Jordan canonical form,
and with $n_{\mathrm{a}} + n_{\mathrm{b}} = n$.
\end{theorem}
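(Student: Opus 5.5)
The plan is to reduce the pencil to a single matrix by inverting $\lambda_0\mE-\mA$ at a suitable point $\lambda_0$, and then to split that matrix into an invertible part and a nilpotent part. Throughout we use that strong equivalence (Def.~\ref{equ:StrongEquivalence}) is transitive: a composition of admissible transformations with non-singular $\mP,\mQ$ is again admissible.

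\textbf{Step 1 (choice of $\lambda_0$).} Since $(\mE,\mA)$ is regular, $p(\lambda)=\det(\lambda\mE-\mA)$ is a nonzero polynomial. It therefore has finitely many roots, and we may pick a real (even rational) $\lambda_0$ with $p(\lambda_0)\neq 0$. Take $\mP=\lambda_0\mE-\mA$ and $\mQ=\mI$. This transforms the pair into $(\widehat{\mE},\widehat{\mA})$ with
\begin{equation*}
\widehat{\mE}=(\lambda_0\mE-\mA)^{-1}\mE,
\qquad
\widehat{\mA}=(\lambda_0\mE-\mA)^{-1}\mA .
\end{equation*}
The two matrices are linked by $\lambda_0\widehat{\mE}-\widehat{\mA}=\mI$, that is, $\widehat{\mA}=\lambda_0\widehat{\mE}-\mI$. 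Thus the whole pencil is now determined by the single matrix $\widehat{\mE}$.

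\textbf{Step 2 (splitting $\widehat{\mE}$).} We apply the Fitting decomposition, or equivalently the real Jordan form grouped by zero versus nonzero eigenvalues, to $\widehat{\mE}$. This gives a real non-singular $\mathbf{S}$ with
\begin{equation*}
\mathbf{S}^{-1}\widehat{\mE}\mathbf{S}=\mathrm{diag}(\mathbf{W},\mM),
\end{equation*}
where $\mathbf{W}$ is invertible of size $n_{\mathrm a}$ and $\mM$ is nilpotent of size $n_{\mathrm b}$. Using $\mP=\mQ=\mathbf{S}$, the pair becomes
\begin{equation*}
\big(\mathrm{diag}(\mathbf{W},\mM),\ \mathrm{diag}(\lambda_0\mathbf{W}-\mI,\ \lambda_0\mM-\mI)\big).
\end{equation*}
The block $\lambda_0\mM-\mI$ is invertible because $\mM$ is nilpotent; its inverse is a finite Neumann series.

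\textbf{Step 3 (normalizing the blocks).} Multiply from the left by $\mathrm{diag}\big(\mathbf{W}^{-1},(\lambda_0\mM-\mI)^{-1}\big)$, with $\mQ=\mI$. The pair then becomes
\begin{equation*}
\big(\mathrm{diag}(\mI_{\na},\mN_0),\ \mathrm{diag}(\mJ_0,\mI_{\nb})\big),
\qquad
\mJ_0=\lambda_0\mI-\mathbf{W}^{-1},\quad
\mN_0=(\lambda_0\mM-\mI)^{-1}\mM .
\end{equation*}
The matrix $\mN_0$ is nilpotent, because $\mM$ commutes with $(\lambda_0\mM-\mI)^{-1}$ and hence $\mN_0^k=(\lambda_0\mM-\mI)^{-k}\mM^k$. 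Finally, we bring $\mJ_0$ and $\mN_0$ to Jordan form by similarities $\mathbf{S}_1,\mathbf{S}_2$, using $\mP=\mQ=\mathrm{diag}(\mathbf{S}_1,\mathbf{S}_2)$. This transformation leaves the identity blocks unchanged and yields \eqref{equ:MazrixWeierstrassCF}.

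The only delicate point is the field. $\mN$ has only the eigenvalue $0$, so its Jordan form is real. $\mJ$, however, may have complex eigenvalues, so over $\RN$ one must take the real Jordan form, or else allow complex $\mP,\mQ$; I would state this explicitly. Apart from this, the main step is Step~1: replacing the pencil by the single matrix $\widehat{\mE}$ through the identity $\lambda_0\widehat{\mE}-\widehat{\mA}=\mI$. After that, everything is standard linear algebra.
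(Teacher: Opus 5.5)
Your proof is correct and is essentially the Weierstrass--Gantmacher construction that the paper sketches in Appendix~\ref{app:ComputablePQ}: choose $c$ (your $\lambda_0$) with $\det(c\mE-\mA)\neq 0$, left-multiply by $(\mA-c\mE)^{-1}$ to reduce the pencil to the single matrix $\mE_1=(\mA-c\mE)^{-1}\mE$ (your $-\widehat{\mE}$), split it into an invertible and a nilpotent block, and normalize each block. Your version is slightly cleaner: the Fitting decomposition keeps the splitting step over $\RN$, the singular and non-singular cases of $\mE$ need no separate treatment, and you correctly point out that $\mJ$ must be taken in real Jordan form when the pencil has non-real finite eigenvalues, which already happens in the example of Section~\ref{sec:ExampleKirchhoff}. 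Note also that your $\mJ$ is invertible exactly when $\mA$ is, since $\det(\lambda\mE-\mA)$ is a non-zero constant multiple of $\det(\lambda\mI-\mJ)$; so ``regular'' for $\mJ$ in the statement cannot mean invertible (e.g.\ $n=1$, $\mE=1$, $\mA=0$), and you were right not to claim it.
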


\begin{remark}
Note that $\na$ or $\nb$ might be zero.
If $\nb = 0$ then $\widetilde{\mE}$ and $\mE$ are regular matrices. In this case \eqref{equ:LDAE} is a \emph{regular LDAE}.
If $\nb > 0$ then $\widetilde{\mE}$ and $\mE$ are singular and the corresponding LDAE \eqref{equ:LDAE} is a \emph{singular LDAE}.
\end{remark}
The lower diagonal block $\mN$ of the matrix $\widetilde{\mE}$ in the canonical form \eqref{equ:MazrixWeierstrassCF} is a nilpotent matrix.
Its index of nilpotency plays a particular role in characterizing solutions of singular LDAEs:

\begin{definition}[Index of an LDAE]
Let $(\mE,\mA)$ be a regular matrix pair associated with an LDAE \eqref{equ:LDAE}.
The index of nilpotency of the matrix $\mN$ in the canonical form \eqref{equ:MazrixWeierstrassCF} is called the \emph{index of the LDAE} \eqref{equ:LDAE} and is denoted by $\ind(\mE,\mA)$.
\end{definition}

\begin{remark}
Thus $\ind(\mE,\mA) = 0$ if and only if the corresponding LDAE \eqref{equ:LDAE} is regular, otherwise $\ind(\mE,\mA) \geq 1$.
\end{remark}

Throughout the remainder of this paper, we always assume that the coefficients of \eqref{equ:LDAE}, i.e. the entries of the matrices $\mE$ and $\mA$ and of the vector $\vx_{0}$ are computable numbers:

\begin{definition}[Computable matrices and vectors]
A matrix $\mM \in \RN^{m\times n}$ or a vector $\vv\in\RN^{n}$ is said to be a \emph{computable matrix} or a \emph{computable vector}, respectively, if all of its entries are computable real numbers.
\end{definition}

\begin{remark}
If $\mE$ and $\mA$ are computable matrices then the characteristic polynomial $p(\lambda)$ is a computable polynomial (a polynomial whose coefficients are computable numbers) of degree $r\leq n$.
\end{remark}
If $\mE$ and $\mA$ are computable matrices then also the coefficient matrices $\widetilde{\mE}$ and $\widetilde{\mA}$ of the canonical form \eqref{equ:MazrixWeierstrassCF} and the transformation matrices $\mP$ and $\mQ$ in \eqref{equ:CoordinateTranfor} are computable matrices.
This is shown by the following lemma.

\begin{lemma}
\label{lem:Computable_PQ}
Let $(\mE,\mA)$ be a regular pair of computable matrices, and let $(\widetilde{\mE},\widetilde{\mA})$ be the corresponding canonical form \eqref{equ:MazrixWeierstrassCF}.
Then the matrices $\widetilde{\mE}$, $\widetilde{\mA}$ and the matrices $\mP$ and $\mQ$ which transform $(\mE,\mA)$ onto $(\widetilde{\mE},\widetilde{\mA})$ via \eqref{equ:TrafoMatrices} are computable matrices.
\end{lemma}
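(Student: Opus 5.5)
The lemma asserts only that \emph{some} computable $\mP$, $\mQ$ exist; it does not ask for an algorithm that produces them uniformly from $(\mE,\mA)$. The plan is therefore non-uniform. Redo the classical Weierstrass construction step by step, and each time a step needs discrete information that is not decidable from approximations (ranks, which minor is nonzero, Jordan block sizes, which root of a polynomial is meant), hard-code that finite information into the Turing machine. With the discrete data fixed, every remaining step is a finite composition of field operations, matrix inversion with a known nonzero determinant, and root extraction. Two basic facts carry the argument: $\RNc$ is a field, and every root of a polynomial with computable coefficients is a computable (complex) number.

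\emph{Step 1: reduce to the pair $(\mE_0,\mI)$.} Since $p(\lambda)=\det(\lambda\mE-\mA)$ is a nonzero polynomial, it has finitely many roots. Pick a rational $\lambda_0$ with $p(\lambda_0)\neq 0$. Then $\mM=\lambda_0\mE-\mA$ is a computable matrix with nonzero computable determinant, so $\mM^{-1}$ is computable by Cramer's rule. Put $\mE_0=\mM^{-1}\mE$; then $\mM^{-1}\mA=\lambda_0\mE_0-\mI$, and $\mE_0$ is computable.

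\emph{Step 2: Fitting decomposition.} Split $\RN^n=\operatorname{ran}(\mE_0^n)\oplus\ker(\mE_0^n)$. Let $r$ be the rank of $\mE_0^n$ and fix an $r\times r$ minor that is nonzero; both are finite data we hard-code. From them, computable bases of the range (suitable columns) and of the kernel (solve with the fixed minor inverted) follow. Collect these bases in $\mQ_1$; this block-diagonalizes $\mE_0$ as $\operatorname{diag}(\mE_1,\mN_1)$ with $\mE_1$ invertible and $\mN_1$ nilpotent. Left-multiplying by $\operatorname{diag}(\mE_1^{-1},(\lambda_0\mN_1-\mI)^{-1})$, both inverses computable, produces $\operatorname{diag}(\mI,\mN_2)$ and $\operatorname{diag}(\mJ_1,\mI)$. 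Here $\mJ_1=\mE_1^{-1}(\lambda_0\mE_1-\mI)$, and $\mN_2=(\lambda_0\mN_1-\mI)^{-1}\mN_1$ is still nilpotent because it commutes with $\mN_1$.

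\emph{Step 3: Jordan forms (main obstacle).} Bring $\mJ_1$ and $\mN_2$ to Jordan form via computable similarities $\mS$, and apply $\mS^{-1}(\cdot)\mS$ to both blocks, which preserves the identity blocks.
\begin{itemize}
\item For $\mN_2$: the eigenvalue is $0$. The ranks of $\mN_2^k$ are finite data, and Jordan chains are built from computable bases of the nested kernels as in Step 2.
\item For $\mJ_1$: its eigenvalues are roots of the computable characteristic polynomial, hence computable, with the multiplicities and the labelling of the roots hard-coded. The generalized eigenspaces $\ker(\mJ_1-\mu\mI)^k$ again get computable bases once their dimensions and a nonzero minor are fixed.
\end{itemize}
The delicate point, and the reason the argument must be non-uniform, is that ranks and multiplicities cannot be decided from approximations. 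After fixing them, the only operations left are division by numbers known to be nonzero and root finding, both of which preserve computability.

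If complex eigenvalues occur and real matrices are required, use the real Jordan form. Its blocks are built from the real and imaginary parts of computable complex numbers and of computable eigenvectors, so it is computable as well.

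Finally, $\mP$ and $\mQ$ are products of the finitely many computable transformations above, and $\widetilde{\mE}=\mP^{-1}\mE\mQ$, $\widetilde{\mA}=\mP^{-1}\mA\mQ$ are computable because $\RNc$ is a field.
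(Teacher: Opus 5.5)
Your proposal is correct and follows essentially the same route as the paper's appendix sketch, which is Gantmacher's construction. You shift by a point $\lambda_0$ with $p(\lambda_0)\neq 0$, multiply by $(\lambda_0\mE-\mA)^{-1}$, split off the invertible and nilpotent parts, normalize each block and take Jordan forms; computability comes from $\RNc$ being a field and from roots of computable polynomials being computable. The differences are minor or in your favor: you split via $\operatorname{ran}(\mE_0^n)\oplus\ker(\mE_0^n)$ rather than via a full Jordan form of $\mE_0$, you make the non-uniform hard-coding of ranks explicit, and you handle complex eigenvalues with the real Jordan form, which the paper's sketch glosses over.
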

\begin{remark}
A sketch of proof of this statement is given in Appendix~\ref{app:ComputablePQ}.
\end{remark}


If $(\mE,\mA)$ is a regular matrix pair with $\mE,\mA \in \RN^{n\times n}$, then $p(\lambda) = \det(\lambda\mE - \mA)$ is a non-zero polynomial in $\lambda$ of degree $r \leq n$.
A scalar $\lambda\in\CN$ is said to be an \emph{eigenvalue} of $(\mE,\mA)$ if $p(\lambda) = 0$.
If $r < n$ then $(\mE,\mA)$ has necessarily at least one "eigenvalue at infinity".
With these notation, the first diagonal block in the canonical form \eqref{equ:MazrixWeierstrassCF} is related to the finite eigenvalues of the regular $(\mE,\mA)$, whereas the second diagonal block is related to the eigenvalues at infinity of $(\mE,\mA)$.
If $(\mE,\mA)$ has only finite eigenvalues, the second block in \eqref{equ:MazrixWeierstrassCF} does not exit, and $n = \na$. In this case \eqref{equ:LDAE} represents a regular LDAE.
If $(\mE,\mA)$ has at least one eigenvalue at infinity then $\mE$ is singular and \eqref{equ:LDAE} represents a singular LDAE.

\begin{remark}
To distinguish the finite and infinite eigenvalues of the matrix pair $(\mE,\mA)$, it is often more convenient to consider the polynomial $p(\lambda,\mu) = \det(\lambda\mE - \mu\mA)$ in two independent variables $\lambda$ and $\mu$.
For a regular pair $(\mE,\mA)$, this polynomial has always degree $n$ and can be factorized as
\begin{equation*}
	p(\lambda,\mu) = C_{0} \prod_{k} \left(\alpha_{k}\lambda - \mu\right)^{n_k}
\end{equation*}
with non-zero constants $C_{0}\in\RN$ and $\alpha_{k} \in\CN$ and with  $\sum_{k} n_k = n$.
Factors $\left(\alpha_{k}\lambda - \mu\right)^{n_k}$ with $\alpha_{k} \neq 0$ correspond to the finite eigenvalues $\lambda = 1/\alpha_{k}$ of $(\mE,\mA)$, whereas factors with $\alpha_{k}=0$ correspond to the eigenvalues at infinity.
\end{remark}

\subsection{General solution of LDAEs}

According to Theorem~\ref{thm:CanonicalRegularPencil} every singular LDAE \eqref{equ:LDAE} with a regular pair $(\mE,\mA)$ of coefficient matrices can be transformed to the following system of equations
\begin{align}
\label{equ:WeierstrassCanonForm}
\begin{split}
	\widetilde{\vx}'_{\mathrm{a}}(t) &= \mJ \widetilde{\vx}_{\mathrm{a}}(t) + \widetilde{\vu}_{\mathrm{a}}(t)\qquad \text{with}\qquad \widetilde{\vx}_{\mathrm{a}}(0) = \widetilde{\vx}_{\mathrm{a},0}\\
	\mN \widetilde{\vx}'_{\mathrm{b}}(t) &= \phantom{\mJ}\widetilde{\vx}_{\mathrm{b}}(t) + \widetilde{\vu}_{\mathrm{b}}(t)\qquad \text{with}\qquad \widetilde{\vx}_{\mathrm{b}}(0) = \widetilde{\vx}_{\mathrm{b},0}.
\end{split}	
\end{align}
with 
\begin{align}
\begin{split}
\label{equ:TrafoToCanonicalForm}
	\vx(t)
	&= \mQ \begin{pmatrix}
	\widetilde{\vx}_{\mathrm{a}}(t) \\ \widetilde{\vx}_{\mathrm{b}}(t)
	\end{pmatrix}
	= \mQ\, \widetilde{\vx}(t),	\quad t\in \TT\,,\\
	\vu(t)
	&= \mP \begin{pmatrix}
	\widetilde{\vu}_{\mathrm{a}}(t) \\ \widetilde{\vu}_{\mathrm{b}}(t)
	\end{pmatrix}
	= \mP\, \widetilde{\vu}(t), \quad t\in \TT\,,\\
	\vx_{0}
	&= \mQ \begin{pmatrix}
	\widetilde{\vx}_{\mathrm{a},0} \\ \widetilde{\vx}_{\mathrm{b},0}
	\end{pmatrix}
	= \mQ\, \widetilde{\vx}_{0}\,.
\end{split}
\end{align}

\paragraph{The regular part}
The first line of \eqref{equ:WeierstrassCanonForm} corresponds to a system of first order ODEs.
If $\widetilde{\vu}_{\mathrm{a}}\in \C(\TT,\RN^{n_{\mathrm{a}}})$ is a continuous function then, by the Picard–Lindelöf theorem, this ODE
has a unique solution $\widetilde{\vx}_{\mathrm{a}} \in \C^{1}(\TT,\RN^{\na})$ for arbitrary initial values $\widetilde{\vx}_{\mathrm{a},0}\in\RN^{\na}$.
The uniqueness of the solution then implies that $\widetilde{\vx}_{\mathrm{a}}$ is always computable function on $\TT$ \cite{Graca_CompDGL_2021}.

\paragraph{The singular part}
The second line of \eqref{equ:WeierstrassCanonForm} has the form
\begin{subequations}
\label{equ:OurEquation}
\begin{align}
\label{equ:OurEquation_a}
	\mN\, \widetilde{\vx}'_{\mathrm{b}}(t)
	&= \widetilde{\vx}_{\mathrm{b}}(t) + \widetilde{\vu}_{\mathrm{b}}(t)\,,
	\quad\ t\in\TT\,,\\
\label{equ:OurEquation_b}	
	\text{with}\qquad
	\widetilde{\vx}_{\mathrm{b}}(0) &= \widetilde{\vx}_{_{\mathrm{b}},0}\,,
\end{align}
\end{subequations}
wherein $\mN \in \RN^{n_{\mathrm{b}} \times n_{\mathrm{b}}}$ is a nilpotent matrix in Jordan canonical form with index of nilpotency $m \leq n_{\mathrm{b}}$,
wherein $\widetilde{\vu}_{\mathrm{b}} \in \C^{m}(\TT,\RN^{n_{\mathrm{b}}})$ is a given input function which is assumed to be sufficiently often continuously differentiable,
and where $\widetilde{\vx}_{\mathrm{b},0} \in \RN^{n_{\mathrm{b}}}$ contains the initial values of \eqref{equ:OurEquation}.

The nilpotent matrix $\mN \in \RN^{n_{\mathrm{b}} \times n_{\mathrm{b}}}$ with index of nilpotency $m\leq n_{\mathrm{b}}$ has generally the following block-diagonal Jordan canonical form
\begin{equation}
\label{equ:Nk}
	\mN = 
	\left(\begin{array}{cccc}
	\mN_{m_1} & \mNull & \cdots & \mNull\\
	\mNull & \mN_{m_2} & \cdots & \mNull\\
	\mNull & \cdots & \ddots & \vdots\\
	\mNull & \cdots & \mNull & \mN_{m_\ell}
	\end{array}\right)
\end{equation}
where each diagonal block $\mN_{m_k} \in \RN^{m_{k}\times m_{k}}{}$ has the form
\begin{equation}
\label{equ:Nmk}
	\mN_{m_k} = 
	\left(\begin{array}{ccccc}
	0 & 1 & 0 & \cdots & 0\\
	0 & 0 & 1 & \cdots & 0\\
	\vdots & \vdots & \ddots & \ddots & \vdots\\
	0 & \cdots & 0 & 0 & 1\\
	0 & \cdots & 0 & 0 & 0
	\end{array}\right)\,.
\end{equation}
Each $\mN_{m_k}$ has index of nilpotency $m_{k}$ and $\sum^{\ell}_{k=1} m_{k} = n_{\mathrm{b}}$.
Moreover, the index of nilpotency of $\mN$ in \eqref{equ:Nk} is given by
\begin{equation*}
	m = \max_{k = 1,2,\dots,\ell} m_{k}\;.
\end{equation*}
Thus \eqref{equ:OurEquation} decomposes into $\ell$ equations of the form
\begin{subequations}
\label{equ:SolutionElementaryN}
\begin{align}
\label{equ:SolutionElementaryN_a}
	\mN_{m}\, \vx'(t) &= \vx(t) + \vu(t)\,, \qquad t\in\TT\\
\label{equ:SolutionElementaryN_b}	
	\text{with}\quad \vx(0) &= \vx_{0}\,,
\end{align}
\end{subequations}
where $\mN_{m} \in \RN^{m \times m}$ is a nilpotent matrix of the form \eqref{equ:Nmk} with index of nilpotency $m$.
The simple structure $\mN_{m}$ makes it easy to solve \eqref{equ:SolutionElementaryN}  explicitly:

\begin{lemma}
\label{lem:SolutionElementaryEqu}
For an arbitrary $m\in\NN$, consider \eqref{equ:SolutionElementaryN_a} on $\TT = [0,T] \subset \RN$ with $T\in\RNc$.
The unique solution of \eqref{equ:SolutionElementaryN_a} is given by
\begin{equation}
\label{equ:Solution}
	\vx(t) = -\textstyle\sum^{m-1}_{k=0} \mN^{k}_{m}\, \vu^{(k)}(t)\,, \quad t\in\TT\,.
\end{equation}
\end{lemma}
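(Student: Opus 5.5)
We prove two things: the formula \eqref{equ:Solution} solves \eqref{equ:SolutionElementaryN_a}, and every solution must coincide with it. We assume throughout that $\vu\in\C^{m}(\TT,\RN^{m})$, as stipulated for the singular part, so that all derivatives appearing below exist and are continuous. Only nilpotency $\mN^{m}_{m}=\mNull$ and linearity of differentiation are used.

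\emph{Existence.} Define $\vx$ by \eqref{equ:Solution}. Then $\vx$ is continuously differentiable with $\vx'(t) = -\sum^{m-1}_{k=0}\mN^{k}_{m}\vu^{(k+1)}(t)$. Multiplying by $\mN_{m}$ and shifting the index gives
\begin{equation*}
	\mN_{m}\vx'(t) = -\textstyle\sum^{m}_{k=1}\mN^{k}_{m}\vu^{(k)}(t) = -\sum^{m-1}_{k=1}\mN^{k}_{m}\vu^{(k)}(t),
\end{equation*}
since the term with $k=m$ vanishes because $\mN^{m}_{m}=\mNull$. On the other hand, $\vx(t)+\vu(t) = -\sum^{m-1}_{k=1}\mN^{k}_{m}\vu^{(k)}(t)$, because the $k=0$ term of $\vx$ cancels $\vu$. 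Both sides agree, so \eqref{equ:SolutionElementaryN_a} holds on $\TT$.

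\emph{Uniqueness.} Let $\vx_1,\vx_2$ be two solutions, understood as differentiable functions satisfying \eqref{equ:SolutionElementaryN_a}. Their difference $\vv=\vx_1-\vx_2$ solves the homogeneous equation $\mN_{m}\vv' = \vv$. We show by induction on $j$ that $\mN^{m-j}_{m}\vv^{(i)}\equiv\vzero$ for every $i$ for which the derivative exists; the case $j=m$ then gives $\vv\equiv\vzero$.

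For $j=0$ the claim is trivial, since $\mN^{m}_{m}=\mNull$. For the inductive step, apply $\mN^{m-j-1}_{m}$ to the homogeneous equation:
\begin{equation*}
	\mN^{m-j-1}_{m}\vv = \mN^{m-j}_{m}\vv' = \vzero
\end{equation*}
by the inductive hypothesis. The equation also shows $\vv=\mN_{m}\vv'$, so $\vv$ has as much regularity as needed. Concretely, $\vv = \mN_{m}\vv' = \mN^{2}_{m}\vv'' = \dots$, provided the derivatives exist.

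This regularity bootstrapping is the only delicate point. To avoid higher derivatives altogether, we instead argue componentwise, which requires only one derivative. The last row of $\mN_{m}$ is zero, so the last row of the homogeneous equation gives $v_m=0$. The $k$-th row reads $v'_{k+1}=v_{k}$, so back-substitution yields $v_{m-1}=v'_{m}=0$, then $v_{m-2}=v'_{m-1}=0$, and so on down to $v_1=0$. Hence $\vv\equiv\vzero$ using only differentiability of each component.

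The same row-by-row argument applied to the inhomogeneous equation gives an alternative derivation of \eqref{equ:Solution}: $x_m=-u_m$ and $x_k = x'_{k+1}-u_k$ for $k<m$, which produces exactly the components of \eqref{equ:Solution}.

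Since the solution is determined by the equation alone, the initial condition \eqref{equ:SolutionElementaryN_b} cannot be imposed freely; it is consistent only when $\vx_{0} = -\sum_{k}\mN^{k}_{m}\vu^{(k)}(0)$. The lemma's claim concerns \eqref{equ:SolutionElementaryN_a} only, and the main obstacle was merely to perform uniqueness without assuming extra smoothness of $\vx$, which the componentwise argument handles.
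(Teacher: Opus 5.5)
Your argument is correct. The paper gives no proof of this lemma; it only says that the structure of $\mN_m$ makes the equation easy to solve explicitly. Your row-by-row back-substitution, $x_m=-u_m$ and $x_k=x_{k+1}'-u_k$ for $k<m$, is the omitted argument. These $m$ scalar relations are equivalent to the system, so they give existence, uniqueness and the formula at once.

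One point is worth making explicit. Your assumption $\vu\in\C^{m}$ matches the standing hypothesis the paper states for the singular part, and your matrix-form existence check needs it, since it differentiates $\vx$ as a whole. However, in the proof of Lemma~\ref{lem:NotCompSolutionN} the paper applies this lemma to $\vu_1=(0,\dots,0,u)^{\T}$ with $u\in\C^{m-1}$ and $u^{(m-1)}$ only absolutely continuous. There $x_1=-u^{(m-1)}$ need not be differentiable.

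Your componentwise version covers that case as well. The first column of $\mN_m$ is zero, so $\mN_m\vx'=(x_2',\dots,x_m',0)^{\T}$ never involves $x_1'$. Both the back-substitution and your uniqueness argument therefore only need $x_2,\dots,x_m$ to be differentiable. That holds as soon as the $k$-th entry of $\vu$ lies in $\C^{k-1}$, which is essentially the condition in the remark after the lemma.

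So present the componentwise computation as the proof and drop the abandoned induction passage. Its claim that $\vv=\mN_m\vv'$ gives ``as much regularity as needed'' is not justified as written. The induction is also unnecessary: phrased as $\mN_m^{m-j}\vv\equiv\vzero$, it uses only first derivatives and is just the componentwise argument in matrix notation.
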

\begin{remark}
Solution~\eqref{equ:Solution} shows in particular that for \eqref{equ:SolutionElementaryN_a} to be solvable, $\vu$ needs to be sufficiently differentiable, namely the $k$-th entry of $\vu$ needs to be $k-1$-times differentiable.
\end{remark}

Note that \eqref{equ:Solution} is the unique solution of \eqref{equ:SolutionElementaryN_a}, without requiring any initial conditions.
Consequently, if one considers the initial value problem \eqref{equ:SolutionElementaryN} with a fixed $\vu\in\C^{(m-1)}(\TT,\RN^{m})$, then there exists only one admissible vector of initial values, namely 
\begin{equation}
\label{equ:InitialCondN}
	\vx_{0}
	= -\textstyle\sum^{m-1}_{k=0} \mN^{k}_{m}\, \vu^{(k)}(0)\,.
\end{equation}
Conversely, for fixed initial values $\vx_{0} \in \RN^{m}$, there exist several functions $\vu \in \C^{(m-1)}(\TT,\RN^{m})$ that are compatible with the initial values $\vx_{0}$ in the sense that \eqref{equ:InitialCondN} is satisfied. In this case, we say that $\vu$ is an \emph{admissible} input.

Because of the block-diagonal form of the coefficient matrix \eqref{equ:Nk}, the statement of Lemma~\ref{lem:SolutionElementaryEqu} can easily be extended to equations of the form \eqref{equ:OurEquation}.

\begin{lemma}
\label{lem:SolutionNEqu}
Consider Equation \eqref{equ:OurEquation_a} on $\TT = [0,T] \subset \RN$ with $T\in\RNc$ and with a coefficient matrix $\mN\in\RN^{\nb\times \nb}$ of the form \eqref{equ:Nk} with index of nilpotency $1 \leq m \leq \nb$.
The unique solution of \eqref{equ:OurEquation_a} is given by 
\begin{equation}
\label{equ:SolutionE}
	\widetilde{\vx}_{\mathrm{b}}(t) = -\textstyle\sum^{m-1}_{k=0} \mN^{k} \widetilde{\vu}^{(k)}_{\mathrm{b}}(t)\,, 
	\qquad t\in\TT\,.
\end{equation}
\end{lemma}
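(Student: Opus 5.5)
The natural approach is to reduce to Lemma~\ref{lem:SolutionElementaryEqu} using the block-diagonal structure \eqref{equ:Nk}, and then reassemble the blocks. The key observation is that every power of $\mN$ is again block diagonal, namely $\mN^{k} = \mathrm{diag}(\mN^{k}_{m_1},\dots,\mN^{k}_{m_\ell})$.

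First, I partition $\widetilde{\vx}_{\mathrm{b}}$ and $\widetilde{\vu}_{\mathrm{b}}$ into sub-vectors $\vx^{[j]}$ and $\vu^{[j]}$ of lengths $m_j$, conforming to the blocks of \eqref{equ:Nk}. Since $\mN$ is block diagonal, \eqref{equ:OurEquation_a} decouples into the $\ell$ independent systems
\begin{equation*}
\mN_{m_j}\,(\vx^{[j]})'(t) = \vx^{[j]}(t) + \vu^{[j]}(t)\,,\qquad t\in\TT\,,\quad j=1,\dots,\ell\,.
\end{equation*}
Moreover, $\widetilde{\vx}_{\mathrm{b}}$ solves \eqref{equ:OurEquation_a} if and only if every $\vx^{[j]}$ solves its block equation.

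Second, I apply Lemma~\ref{lem:SolutionElementaryEqu} to each block. This yields the unique solution
\begin{equation*}
\vx^{[j]}(t) = -\textstyle\sum_{k=0}^{m_j-1}\mN^{k}_{m_j}(\vu^{[j]})^{(k)}(t)\,.
\end{equation*}
Since $m_j\le m$ and $\mN^{k}_{m_j}=\mNull$ for all $k\ge m_j$, the upper summation limit can be raised to $m-1$ without changing the sum. All blocks then share the common range $k=0,\dots,m-1$. Stacking them and using the block-diagonal form of $\mN^{k}$ gives exactly \eqref{equ:SolutionE}.

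Third, I address uniqueness, which follows from the blockwise uniqueness in Lemma~\ref{lem:SolutionElementaryEqu}. For a self-contained check, I would argue directly: the difference $\vv$ of two solutions satisfies $\mN\vv'=\vv$. Iterating gives $\vv=\mN\vv'=\mN^{2}\vv''=\dots=\mN^{m}\vv^{(m)}=\mNull$. This iteration is justified because $\vv=\mN\vv'$ shows that $\vv$ inherits whatever differentiability $\mN\vv'$ has, so the identity can be differentiated repeatedly, as far as the regularity of $\widetilde{\vu}_{\mathrm{b}}$ allows.

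The only mildly delicate point is this regularity bookkeeping. One needs $\widetilde{\vu}_{\mathrm{b}}\in\C^{m-1}$, or componentwise as in the remark after Lemma~\ref{lem:SolutionElementaryEqu}, so that the derivatives appearing in \eqref{equ:SolutionE} exist. One must also ensure the resulting $\widetilde{\vx}_{\mathrm{b}}$ is differentiable in exactly the components hit by $\mN$. Since this is already handled within Lemma~\ref{lem:SolutionElementaryEqu}, I expect no real obstacle here.
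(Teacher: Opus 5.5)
Your proposal is correct and follows the same route as the paper, which only remarks that the block-diagonal structure \eqref{equ:Nk} lets Lemma~\ref{lem:SolutionElementaryEqu} be extended blockwise; raising each block sum to the upper limit $m-1$ via $\mN_{m_j}^{k}=\mNull$ for $k\ge m_j$ is exactly the detail that extension needs. One small correction to your optional self-contained uniqueness check: the difference $\vv$ solves the homogeneous equation, so the regularity of $\widetilde{\vu}_{\mathrm{b}}$ plays no role, and the cleanest version is the identity $\mN^{k}\vv=\mN^{k+1}\vv'=(\mN^{k+1}\vv)'$, which starting from $\mN^{m}=\mNull$ gives $\mN^{k}\vv=\mNull$ successively for $k=m-1,\dots,0$ and needs only $\vv\in\C^{1}$.
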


\begin{remark}
Similarly as before, \eqref{equ:SolutionE} shows that the input $\widetilde{\vu}_{\mathrm{b}}$ needs to be sufficiently differentiable for \eqref{equ:OurEquation} to be solvable.
Moreover, for a fixed input $\widetilde{\vu}_{\mathrm{b}} \in \C^{m-1}(\TT,\RN^{\nb})$, there exists only one vector of initial values $\widetilde{\vx}_{\mathrm{b},0}\in\RN^{\nb}$ that is compatible with $\widetilde{\vu}_{\mathrm{b}}$.
Conversely, fixing the initial values $\widetilde{\vx}_{\mathrm{b},0}\in\RN^{\nb}$ there exists several admissible inputs $\widetilde{\vu}_{\mathrm{b}} \in \C^{m-1}(\TT,\RN^{\nb})$. 
\end{remark}

Since the index of nilpotency $m\in\NN$ of the matrix $\mN$ in the previous lemma is equal to the index of the LDAE \eqref{equ:LDAE},
Lemma~\ref{lem:SolutionNEqu} shows in particular that the index of an LDEA with regular coefficient matrix pair $(\mE,\mA)$ provides a necessary condition on the smoothness of $\vu$.
Thus if $m = \ind(\mE,\mA)$ is the index of an LDAE as in \eqref{equ:LDAE}, then $\vu$ in \eqref{equ:LDAE} needs to be at least $(m-1)$-times differentiable.

Moreover, the solution given by Lemma~\ref{lem:SolutionNEqu} illustrates that equations that are associated with the eigenvalues at infinity of $(\mE,\mA)$ describe a purely algebraic relation between $\widetilde{\vx}_{\mathrm{b}}(t)$ and $\widetilde{\vu}_{\mathrm{b}}(t)$ (including its derivatives up to order $m-1$).

\begin{figure}
\begin{center}
\begin{tikzpicture}[scale=0.8]	
	\draw[thick] (0,4) -- (1.2,4);
	\draw[thick] (2.5,4) -- (3.8,4);
	\draw[thick] (0,2) -- (1.775,2);
	\draw[thick] (1.925,2) -- (3.8,2);
	\draw[thick] (-0.1,4) circle (0.1);
	\draw[thick] (-0.1,2) circle (0.1);
	\draw[->,blue] (-0.1,3.8) -- (-0.1,2.2);
	\draw[blue] (-0.8,3)  node {$u_{0}(t)$};
	\draw[->,blue, bend angle = 45, bend left] (1.0,4.5) to (2.7,4.5);
	\draw[blue] (1.85,5.2)  node {$u_{\mathrm{L}}(t)$};
	\draw[->,blue, bend angle = 45, bend left] (4.2,3.85) to (4.2,2.15);
	\draw[blue] (5.2,3.0)  node {$u_{\mathrm{R}}(t)$};
	\draw[->,blue, bend angle = 45, bend left] (2.7,1.5) to (1.0,1.5);
	\draw[blue] (1.85,0.8)  node {$u_{\mathrm{C}}(t)$};
	\draw[thick] (3.8,2) -- (3.8,2.35);
	\draw[thick] (3.8,3.65) -- (3.8,4);
	\draw[very thick] (3.55,2.35) rectangle (4.05,3.65);
	\draw (3.2,3.0)  node {\bfseries R};
	\draw[very thick] (1.85-0.075,1.6) -- (1.85-0.075,2.4);
	\draw[very thick] (1.85+0.075,1.6) -- (1.85+0.075,2.4);	
	\draw (1.85,2.7)  node {\bfseries C};	
	\filldraw[very thick] (1.2,3.75) rectangle (2.5,4.25);
	\draw (1.85,3.5)  node {\bfseries L};
\end{tikzpicture}	
\end{center}
\vspace{-1ex}
\caption{A series connection of an inductor $L$, a resistor $R$, and a capacitor $C$ with input voltage $u_{0}$.}
\label{fig:LRC}
\end{figure}

\subsection{A simple example}
\label{sec:ExampleKirchhoff}

To illustrate the previously introduced concepts, we consider the elementary example of a resonant circuit in Fig.~\ref{fig:LRC}.
It consists of a series connection of an inductor $L$, resistor $R$, and capacitor $C$. 
The circuit is excited by an external time-varying voltage $u_{0}$, and we are interested in the time-dependent behavior of the voltages $u_{L}$, $u_{R}$, and $u_{C}$ across the inductor, the resistor, and the capacitor, respectively.

Using Kirchhoff's circuit laws and the well known voltage-current relations on resistors, capacitors and inductors, one obtains the following three equations relating the voltages over $L$, $R$, and $C$ with the driving voltage $u_{0}$:
\begin{align*}
	u_{\mathrm{L}}(t) + u_{\mathrm{R}}(t) + u_{\mathrm{C}}(t) &= u_{0}(t)\\
	\tfrac{1}{R} u'_{\mathrm{R}}(t) &= \tfrac{1}{L} u_{\mathrm{L}}(t)\\
	C\, u'_{\mathrm{C}}(t) &= \tfrac{1}{R} u_{\mathrm{R}}(t)\;.
\end{align*}
To simplify the notation, we assume in this example that $RC = 1$ and $L/R = 1$. 
Now we write the previous three equations in the form of an LDAE:
\begin{equation*}
	\begin{pmatrix}
	0 & 0 & 0\\
	0 & 1 & 0\\
	0 & 0 & 1
	\end{pmatrix}\!
	\begin{pmatrix}
		u'_{\mathrm{L}}(t) \\ u'_{\mathrm{R}}(t) \\ u'_{\mathrm{C}}(t)
	\end{pmatrix}
	\!= \!
	\begin{pmatrix}
	1 & 1 & 1\\
	1 & 0 & 0\\
	0 & 1 & 0
	\end{pmatrix}\!
	\begin{pmatrix}
	u_{\mathrm{L}}(t) \\ u_{\mathrm{R}}(t) \\ u_{\mathrm{C}}(t)
	\end{pmatrix}
	+ 
	\begin{pmatrix}
	-u_{0}(t) \\ 0 \\ 0
	\end{pmatrix}\! .
\end{equation*}
If one defines the vectors $\vx(t) = (u_{\mathrm{L}}(t), u_{\mathrm{R}}(t), u_{\mathrm{C}}(t))^{\T}$ and
$\vu(t) = (-u_{0}(t), 0, 0)^{\T}$ and the coefficient matrices
\begin{equation*}
	\mE = \begin{pmatrix}
	0 & 0 & 0\\
	0 & 1 & 0\\
	0 & 0 & 1
	\end{pmatrix}
	\qquad\text{and}\qquad
	\mA = 	\begin{pmatrix}
	1 & 1 & 1\\
	1 & 0 & 0\\
	0 & 1 & 0
	\end{pmatrix}\,,
\end{equation*}
one obtains an equation of the form \eqref{equ:LDAE}.
The initial values are assumed to be $\vx(0) = \vzero$.

It is easy to verify that
\begin{equation*}
	\det(\lambda\mE - \mu\mA)
	= -\left(\lambda^{2} + \mu\lambda + \mu^{2}\right)\mu\,,
\end{equation*}
which shows that the matrix pair $(\mE,\mA)$ is regular and 
that it has two finite eigenvalues and one eigenvalue at infinity.
Its index is $\ind(\mE,\mA) = 1$.
Now it is an easy exercise to transform $(\mE,\mA)$ into the canonical from \eqref{equ:MazrixWeierstrassCF}.
Applying the transformation \eqref{equ:CoordinateTranfor}, with the transformation matrices
\begin{equation*}
	\mQ = \begin{pmatrix}
	-1 & -1 & 1\\
	1 & 0 & 0\\
	0 & 1 & 0\\
	\end{pmatrix}
	\qquad\text{and}\qquad
	\mP = \begin{pmatrix}
	0 & 0 & 1\\
	1 & 0 & 1\\
	0 & 1 & 0\\
	\end{pmatrix}	
\end{equation*}
one obtains the block diagonal form
\begin{equation*}
	\begin{pmatrix}
	1 & 0 & 0\\
	0 & 1 & 0\\
	0 & 0 & 0
	\end{pmatrix}
	\widetilde{\vx}'(t)
	=
	\begin{pmatrix}
	-1 & -1 & 0\\
	1 & 0 & 0\\
	0 & 0 & 1
	\end{pmatrix}
	\widetilde{\vx}(t)
	+ 
	\widetilde{\vu}(t)\,,
\end{equation*}
with initial values $\widetilde{\vx}(0) = 0$.
The upper diagonal block corresponds to the two finite eigenvalues and is an ODE with constant coefficients for $\widetilde{x}_{1}$ and $\widetilde{x}_{2}$.
The lower diagonal block corresponds to the eigenvalue at infinity and it is a purely algebraic equation which relates $\widetilde{x}_{3}$ and $\widetilde{u}_{3}$.

\section{Non-computable Solutions of LDAEs}
\label{sec:MainResults}

We investigate now the computability of solutions of singular LDAEs of the form \eqref{equ:LDAE}.
Thereby, we always assume that the matrix pair $(\mE,\mA)$ is regular and that both matrices and the initial values $\vx_{0}$ are computable. 
In particular, we always assume that the matrices have size $n\times n$, that $\mE$ is singular, and that $\ind(\mE,\mA) = m > 0$.
Then, for an arbitrary $\vu$, the initial value problem \eqref{equ:LDAE} is not necessarily solvable, but $\vu$ has to be compatible with the initial values $\vx_{0}$.
If $\vu$ and $\vx_{0}$ are compatible then \eqref{equ:LDAE} has a unique solution.
In other words, for fixed initial values $\vx_{0}$ there exists a whole set $\mathcal{U}(\vx_{0}) \subset \C(\TT,\RN^{n})$ of inputs $\vu$ for which \eqref{equ:LDAE} has a unique solution.
Any such $\vu \in \mathcal{U}(\vx_{0})$ is said to be \emph{admissible} for the initial value problem \eqref{equ:LDAE}.

We are going to show that for all computable initial values $\vx_{0}\in\RNc^{n}$ there always exists an admissible $\vu\in\mathcal{U}(\vx_{0})$ such that the unique solution $\vx$ of \eqref{equ:LDAE} is \emph{not} computable.
In fact we even show that there exist points $\tau\in\TT\cap\RNc$ for which $\vx(\tau) \notin \RNc^{n}$.
It will be shown that this occurs even for functions $\vu\in\mathcal{U}(\vx_{0})$ with strong regularity properties, namely for $\vu \in \C^{m}(\TT,\CN^{n})$ for which all derivatives up to order $m-2$ are computable continuous functions, and where $\vu^{(m-1)}$ is an absolute continuous, computable function in the Wiener algebra $\W(\TT,\RN^{n})$ and in
$\H12c(\TT,\CN^{n})$ .

The proofs of our main results are based on particular functions that are given in the following lemma.

\begin{lemma}
\label{lem:MainLemma}
Let $m\in\NN$, $m\geq 1$, and $\delta \in \RNc$, $\delta > 0$ be arbitrary and let $\alpha_{*} \in \Cweak$ be an arbitrary weakly computable real number.
There exists a computable continuous function $u \in \Cc(\TT)$ on $\TT = [0,\delta] \subset \RN$ such that
\begin{enumerate}
\item $u \in \C^{m-1}(\TT)$
\item $u^{(\ell)} \in \Cc(\TT)$ for all $0 \leq \ell \leq (m-2)$
\item $u^{(m-1)}\in\A(\TT)$ is absolute continuous
\item $u^{(m-1)} \in \H12c(\TT)$
\item $u^{(m-1)}(0) \in \RNc$ and $u^{(m-1)}(\frac{\delta}{2}) = \alpha_{*}$.
\end{enumerate}
\end{lemma}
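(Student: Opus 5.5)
The plan is to construct $g := u^{(m-1)}$ first and then integrate it $m-1$ times from $0$. Concretely, I set $u(t) = \int_0^t \frac{(t-s)^{m-2}}{(m-2)!}\, g(s)\,\d s$ for $m\ge 2$, and $u=g$ for $m=1$. If $g\in\Cc(\TT)$, then all the iterated integrals $u^{(\ell)}$ with $\ell\le m-2$ are computable continuous functions, because integration is a computable operator on $\Cc(\TT)$. Properties 1) and 2) then follow immediately. So everything reduces to building a single $g$ on $[0,\delta]$ with four features: $g\in\A(\TT)$, $g$ absolutely continuous, $g\in\H12c(\TT)$, $g(0)\in\RNc$, and $g(\delta/2)=\alpha_*$.

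For $g$ I would exploit the gap between norm-computability and point evaluation. Write $\alpha_* = a-b$ with $a,b\in\Sigma_1$. Then there are computable nondecreasing sequences $\zeta_n\uparrow a$ and $\eta_n\uparrow b$, and I set $c_n := (\zeta_{n+1}-\zeta_n)-(\eta_{n+1}-\eta_n)$, taking $\zeta_0=\eta_0=0$. This gives a computable sequence with $\sum_n c_n = \alpha_*$ and $\sum_n |c_n| \le a+b<\infty$, though the convergence is not effective. Next I pick a computable family of smooth bumps $\phi_n$ supported in disjoint intervals $I_n\subset(0,\delta)$ accumulating at $\delta/2$ but not containing it, for instance centred at $\delta/2 + (-1)^n r_n$. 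These are normalised so that $\|\phi_n\|_{H^{1/2}}$ and $\|\phi_n\|_{\A}$ are tiny while $\phi_n$ is still large near $\delta/2$. That alone does not give $\sum c_n$ at $\delta/2$, so I modify the construction: each $\phi_n$ is a tent of height $1$ whose support \emph{contains} $\delta/2$, and whose width shrinks fast. Then $g=\sum_n c_n\phi_n$ satisfies $g(\delta/2)=\sum c_n=\alpha_*$, and $g(0)=0$ because every support lies in the interior.

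For computability in $\H12c$, I need the tail $\sum_{n>N} c_n\phi_n$ to be effectively small in the $H^{1/2}$, $\A$ and sup norms. The sup norm fails for tents of height $1$ unless the $|c_n|$ tail is effectively small, which it is not. This is where the construction must be arranged more carefully, and I expect it to be the main obstacle. The approach I would try first is to use a logarithmically scaled profile. The point evaluation $f\mapsto f(\delta/2)$ is unbounded on $H^{1/2}$, since $H^{1/2}$ contains unbounded functions like $\log\log$. So for each $n$ there are nice functions $\psi_n$ with $\psi_n(\delta/2)=1$ and $\|\psi_n\|_{H^{1/2}}\le 2^{-n}$; the $\log\log$-type truncations do this. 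Because $\sum|c_n|$ is finite (and $|c_n|\le$ a computable bound), the sum $g=\sum c_n\psi_n$ converges effectively in $H^{1/2}$, and $g(\delta/2)=\sum c_n=\alpha_*$. If each $\psi_n$ is a computable trigonometric-polynomial-approximable function, $g\in\H12c(\TT)$ and $g(0)$ is a computable finite combination — or $0$ if the $\psi_n$ vanish at $0$.

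It remains to secure $g\in\A(\TT)$ and absolute continuity. Since $\sum|c_n|<\infty$ and each $\psi_n$ can be taken Lipschitz with $\|\psi_n\|_{\A}\le C$ uniformly (the truncated $\log\log$ kernels have bounded Wiener norm), $g\in\A$. For absolute continuity I would take $\psi_n$ piecewise linear, so that $g'=\sum c_n\psi_n'$. Its $L^1$ norm is controlled by $\sum |c_n|\,\mathrm{TV}(\psi_n)$, so I need $\psi_n$ with uniformly bounded total variation; a unimodal profile of height $1$ has variation $2$, and this holds. The delicate balance, which I would verify in detail, is simultaneously getting $\psi_n(\delta/2)=1$, total variation $\le 2$, $\|\psi_n\|_{\A}$ bounded, and $\|\psi_n\|_{H^{1/2}}\le 2^{-n}$, computably in $n$. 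I would use tents of height $1$ and width $w_n$: for these the $H^{1/2}$ seminorm tends to $0$ as the width shrinks, which is scale-invariant at about $\sim 1$, so a pure tent fails. I would therefore replace it by the logarithmic profile $\psi_n(t)=\max\bigl(0,1-\log(1+|t-\delta/2|/w_n)/\log(1+R/w_n)\bigr)$, whose Dirichlet energy is $\sim 1/\log(R/w_n)$, and choose $w_n$ doubly exponentially small.
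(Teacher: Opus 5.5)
There is a concrete gap in your reduction step. It is easy to repair, and your construction of $g$ itself is sound. You derive Properties~1) and~2) from ``$g\in\Cc(\TT)$ and integration is computable on $\Cc(\TT)$''. But the $g$ you build is not, and cannot be, in $\Cc(\TT)$ when $\alpha_*\notin\RNc$. A computable continuous function is Banach--Mazur computable, so it takes a computable value at the computable point $\delta/2$, whereas $g(\delta/2)=\alpha_*$. Consistently with this, your series for $g$ converges uniformly only non-effectively. What you need instead is that integration is effectively bounded from $L^2$, hence from $H^{1/2}$, into the sup norm: $\sup_t |\int_0^t \frac{(t-s)^j}{j!}(g-\varphi)(s)\,\d s| \le C_{\delta,j}\|g-\varphi\|_2 \le C'_{\delta,j}\|g-\varphi\|_{H^{1/2}}$. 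So your effective $H^{1/2}$-approximants of $g$ give effective uniform approximants of every $u^{(\ell)}$, $\ell\le m-2$.

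Two caveats remain. First, for $m=1$ nothing can be repaired: then $u=g$, and $u\in\Cc(\TT)$ is incompatible with $u(\delta/2)=\alpha_*\notin\RNc$. The paper's construction likewise only works for $m>1$. Second, the paper defines $\Cc(\TT)$ via $\delta$-periodic trigonometric polynomials, and antiderivatives taken from $0$ are in general not periodic, since $u^{(m-2)}(\delta)=\int_0^\delta g\neq 0$. To meet that definition, make $g$ mean-zero: replace $\psi_n$ by $\psi_n-\mu_n\chi$, with $\mu_n$ the (computable, effectively small) mean of $\psi_n$ and $\chi$ a fixed bump of mean $1$ vanishing near $0$ and $\delta/2$. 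Then take mean-zero periodic antiderivatives. The paper's $p_{m,k}$ (coefficients $n^{-(m-\ell)}$, no constant term) do this automatically.

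With this repair, your argument is the paper's idea in physical-space form. The paper's $p^{(m-1)}_{m,k}=\frac{1}{C(k)}\sum_{n=1}^{k}\frac{\cos(nt)}{n}$ is exactly a normalised truncated logarithmic cusp: roughly $\log(1/|t|)/\log k$ for $|t|\gtrsim 1/k$, with value $1$ at $t=0$. Its $w_{m-1}=\sum_k(\alpha_k-\alpha_{k-1})p^{(m-1)}_{m,k}$ is your $g=\sum_n c_n\psi_n$ up to rescaling, reindexing and the shift of the cusp to $\delta/2$.

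Defining the cusp by its Fourier coefficients makes $\|p^{(m-1)}_{m,k}\|_{\A}=1$ and $\|p^{(m-1)}_{m,k}\|_{H^{1/2}}=(2C(k))^{-1/2}$ exact, and yields the lower derivatives for free. The cost is Dirichlet-kernel estimates, needed for the $L^1$ bound on $p^{(m)}_{m,k}$ and for computability of $u^{(m-1)}(0)$ away from the cusp.

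Your profile makes $\psi_n(\delta/2)=1$, $\|\psi_n'\|_1=2$ and $g(0)=0$ trivial. But the two Fourier-side facts you left to verify do need proof, and both hold:
\begin{itemize}
\item As a $\delta$-periodic function of $x=t-\delta/2$, $\psi_n$ is convex on $0<x<\delta$. Hence its cosine coefficients are nonnegative and $\|\psi_n\|_{\A}=\psi_n(\delta/2)=1$. The Lipschitz property you cite gives no uniform bound, since the Lipschitz constants blow up.
\item Summing over the dyadic scales between $w_n$ and $R$ gives $\Energy(\psi_n)\asymp 1/\log(R/w_n)$.
\end{itemize}
Your doubly exponential $w_n$ buys a rate $2^{-n}$ instead of the paper's $Q_\alpha/\sqrt{2\log(R+2)}$. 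Both rates are effective, since only a fixed integer bound on $\sum_n|c_n|$ enters.
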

\begin{remark}
The long and technical proof of this statement can be found in the separate Section~\ref{sec:Proof_AuxResultsAndMainLemma} at the end of this paper.
\end{remark}
For us, the following conclusion of Lemma~\ref{lem:MainLemma} will be of importance:
Since $\Cweak$ is a proper subset of $\RNc$, we can always choose an $\alpha_{*} \in \Cweak$ that is not a computable number.
Then Lemma~\ref{lem:MainLemma} shows that on any arbitrary computable interval $\TT = [0,\delta]\subset\RN$ there exists a point $\tau = \delta/2\in\RNc$ such that $u^{(m-1)}(\tau) \notin \RNc$.

\subsection{Solutions of the elementary nilpotent equation}

We start by considering elementary equations of the form \eqref{equ:SolutionElementaryN}.
As mentioned before (cf. discussion following Lemma~\ref{lem:SolutionElementaryEqu}), the initial values $\vx_{0}\in\RN^{m}$ and the input $\vu\in\Cc(\TT,\RN^{m})$ need to be compatible in order that \eqref{equ:SolutionElementaryN} is solvable.
Thus, fixing the initial values $\vx_{0} \in \RN^{(m)}$ implies restrictions on $\vu \in \Cc(\TT,\RN^{m})$ for \eqref{equ:SolutionElementaryN} to be solvable.
The following auxiliary lemma shows how, for fixed initial values $\vx_{0}$, one can construct an admissible $\vu$ (with strong regularity properties) .
This result is needed later to prove our main results in this section.

\begin{lemma}
\label{equ:Construct_u}
Fix arbitrary numbers $m\in\NN$, $\delta\in\RNc$ with $\delta > 0$,
and consider \eqref{equ:SolutionElementaryN} on the interval $\TT = [0,\delta]$ with arbitrary computable initial values
\begin{equation*}
	\vx_{0} = (x_{0,1}, x_{0,2}, \dots, x_{0,m})^{\T} \in \RNc^{m}\,.
\end{equation*}
There exists an admissible $\vu\in\Cc(\TT,\RN^{m}) \cap \C^{\infty}(\TT,\RN^{m})$ with
\begin{itemize}
\item $\vu^{(\ell)} \in \Cc(\TT,\RN^{m})$
\item $\vu^{(\ell)}$ is absolute continuous
\item $\vu^{(\ell)} \in \W(\TT,\RN^{m})$
\item $\vu^{(\ell)} \in \H12c(\TT,\RN^{m})$
\end{itemize}
for every $\ell = 0,1,2,\dots$, and such that
the unique solution $\vx$ of \eqref{equ:SolutionElementaryN} belongs to $\Cc(\TT,\RN^{m})\cap\C^{\infty}(\TT,\RN^{m})$ with
$\vx^{(\ell)} \in \Cc(\TT,\RN^{m})$ for all $\ell = 1,2,\dots$.
\end{lemma}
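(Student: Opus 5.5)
The plan is to construct $\vu$ explicitly as a vector of polynomials in $t$ with computable coefficients. Such a $\vu$ automatically lies in $\C^{\infty}$, and every derivative is again a computable polynomial vector. By Lemma~\ref{lem:SolutionElementaryEqu}, the solution is $\vx = -\sum_{k=0}^{m-1}\mN_m^k\vu^{(k)}$, which is then also a computable polynomial vector. The only constraint that must be met is admissibility, i.e. \eqref{equ:InitialCondN}:
\begin{equation*}
\vx_0 = -\textstyle\sum_{k=0}^{m-1}\mN_m^k\,\vu^{(k)}(0)\,.
\end{equation*}

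\textbf{Choice of $\vu$.} The simplest option is a constant input, $\vu(t) \equiv -\vx_0$ for all $t\in\TT$. Then $\vu^{(k)} = 0$ for $k\geq 1$, so \eqref{equ:InitialCondN} reduces to $\vx_0 = -\vu(0)$, which holds. This gives $\vx(t) = \vx_0$, a computable constant solution.

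If a non-trivial example is preferred, one can take $u_j(t) = \sum_{k=0}^{m-1} c_{j,k}t^k/k!$. Since $\mN_m^k$ shifts indices upward, \eqref{equ:InitialCondN} reads componentwise $x_{0,j} = -\sum_{k=0}^{m-j} c_{j+k,k}$. This triangular system can be solved by choosing, for instance, $c_{j,0} = -x_{0,j}$ and $c_{j,k}=0$ for $k\ge1$. This is again the constant solution, so the constant choice is sufficient.

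\textbf{Verification of the regularity properties.} Each property is checked for a polynomial $p$ with computable coefficients on $\TT=[0,\delta]$.
\begin{enumerate}
\item Absolute continuity and $\C^\infty$ are immediate.
\item For $\Cc(\TT)$-computability, we need a computable sequence of trigonometric polynomials converging effectively to $p$ in sup norm. A constant is itself a trigonometric polynomial with $a_0 = 2c$ computable, so for the constant choice this step is immediate.
\item For a general polynomial, one would approximate $p$ by Fejér or Bernstein-type means of an even smooth periodic extension. The Fourier coefficients $\frac{2}{\delta}\int_0^\delta p(t)\cos(\frac{2\pi}{\delta}nt)\,\d t$ are computable by closed-form integration. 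The error rate is controlled by explicit bounds on $p'$. However, $p$ is not periodic unless $p(0)=p(\delta)$, which makes the $\W$ and $H^{1/2}$ membership delicate for nonconstant polynomials.
\item This is precisely why the constant choice is adopted. Its Fourier series has only the term $a_0$, so membership in $\W(\TT)$, $\H12c(\TT)$, and $\Cc(\TT)$ holds trivially. The same applies to all derivatives, which vanish identically.
\end{enumerate}

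\textbf{Conclusion and main obstacle.} It follows that $\vx = \vx_0$ belongs to $\Cc\cap\C^\infty$, with all derivatives zero and hence computable. The only genuine obstacle is the non-periodicity issue in step 3, namely that the Wiener-algebra and $H^{1/2}$ norms are taken with respect to the $\delta$-periodic Fourier basis, so a nonconstant polynomial has a jump in its periodic extension. The constant input avoids this entirely. If a nonconstant admissible input were required, I would instead use $u_j(t) = -x_{0,j} + \sin^2(\pi t/\delta)\,q_j(t)$ with trigonometric $q_j$. Such inputs are trigonometric polynomials, and admissibility is preserved because all derivatives up to order $m-1$ can be made to vanish at $0$ by using a sufficiently high power of $\sin(\pi t/\delta)$.
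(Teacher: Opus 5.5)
Your argument is correct, and it takes a different and more elementary route than the paper's.

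\textbf{The paper's route.} The paper puts a single $\delta$-periodic trigonometric polynomial $p$ of degree $d$, with $2d+1\ge m$, into the last component: $\vu=(0,\dots,0,p)^{\T}$. The solution is then $\vx=-(p^{(m-1)},\dots,p',p)^{\T}$. Admissibility is enforced by solving the $m$ conditions $p^{(m-\ell)}(0)=-x_{0,\ell}$ for the Fourier coefficients. This needs a Vandermonde-type solvability argument, plus the fact that a linear system with computable entries has a computable solution.

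\textbf{Your route.} With the constant input $\vu\equiv-\vx_0$, every derivative term in \eqref{equ:InitialCondN} vanishes, and $\vx\equiv\vx_0$ satisfies $\mN_m\vx'=\mathbf{0}=\vx+\vu$. All the regularity and computability properties are immediate, because a computable constant is a degree-zero trigonometric polynomial. The same choice works verbatim for any nilpotent $\mN$ in \eqref{equ:OurEquation}.

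\textbf{What each buys.} The paper's construction produces an admissible input confined to one scalar channel. It acts through the same derivative cascade $x_j=-u^{(m-j)}$ that Lemma~\ref{lem:NotCompSolutionN} later uses with $\vu_1=(0,\dots,0,u)^{\T}$. The lemma as stated does not need this, so your constant input suffices.

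\textbf{Two minor remarks on your optional paragraphs.} Neither affects your main proof.
\begin{enumerate}
\item Under the paper's definition, $\Cc(\TT)$ means approximation by $\delta$-periodic trigonometric polynomials, so every $f\in\Cc(\TT)$ satisfies $f(0)=f(\delta)$. A nonconstant algebraic polynomial with $p(0)\neq p(\delta)$ is therefore not even in $\Cc(\TT)$, not merely outside $\W$ and $H^{1/2}$. An even extension of period $2\delta$ would also not fit the definition.
\item In your last alternative, the power of $\sin(\pi t/\delta)$ must be even, namely $\sin^{2r}(\pi t/\delta)$ with $2r\ge m$. Otherwise the input is not a $\delta$-periodic trigonometric polynomial.
\end{enumerate}
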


\begin{IEEEproof}
Choose $d\in\NN$ such that $2d+1 \geq m$ and let $p \in \Cc(\TT)$ be a trigonometric polynomial of degree $d\in\NN$ 
with coefficients $\left\{ a_{k} \right\}^{d}_{k=0}\subset\RNc$ and $\left\{ b_{k} \right\}^{d}_{k=1}\subset\RNc$,
i.e.
\begin{equation}
\label{equ:TrigPolyProof}
	p(t) = \textstyle\frac{a_{0}}{2} + \sum^{d}_{k=1}  \big[ a_{k} \cos\left(\tfrac{2\pi}{\delta}kt\right) + b_{k} \sin\left(\tfrac{2\pi}{\delta}kt\right) \big]\,,	
\end{equation}
with $t\in\TT$ and such that
\begin{equation}
\label{equ:CondTrigPoly1}
	p^{(m-\ell)}(0) = - x_{0,\ell},
	\quad \ell = 1,2,\dots,m\,.
\end{equation} 
Since $2d+1 \geq m$, such a polynomial $p$ always exists and one can use the $m$ linear equations \eqref{equ:CondTrigPoly1} to determine the coefficients of \eqref{equ:TrigPolyProof} (which are not unique if $2d+1 > m$).
Writing the equations \eqref{equ:CondTrigPoly1} in details, one sees that the corresponding matrix has a Vandermonde-like structure and that all of its entries are computable numbers, because $\delta \in \RNc$.
Therefore \eqref{equ:CondTrigPoly1} has always at least one solution, and since $\RNc$ forms a real closed field (in particular, $\RNc$ is closed with respect to the basic arithmetic operations) the solution consists always of computable numbers.
Consequently, the coefficients of the polynomial \eqref{equ:TrigPolyProof} that satisfies \eqref{equ:CondTrigPoly1} always has computable coefficients. 

Finally, we define $\vu \in \Cc^{\infty}(\TT,\RN^{m})$, by
\begin{equation*}
	\vu(t) = \big(0,0,\dots,0, p(t) \big)^{\T}\,,
	\qquad t\in\TT\,.
\end{equation*}
Then Lemma~\ref{lem:SolutionElementaryEqu} shows that the solution of \eqref{equ:SolutionElementaryN} with this particular $\vu$ is given by
\begin{equation*}
	\vx(t) = -\big(p^{(m-1)}(t),p^{(m-2)}(t),\dots,p'(t), p(t) \big)^{\T}\,,
	\quad t\in\TT\,,
\end{equation*}
and because of \eqref{equ:CondTrigPoly1}, this solution satisfies the initial condition $\vx(0) = \vx_{0}$.
The other properties of $\vu$ and the statements that $\vx\in\Cc(\TT,\RN^{m})\cap\C^{\infty}(\TT,\RN^{m})$
with $\vx^{(\ell)}\in\Cc(\TT,\RN^{m})$
are apparent because $p$ is a computable trigonometric polynomial.
\end{IEEEproof}

The previous lemma showed that for arbitrary computable initial values $\vx_{0} \in \RNc^{m}$, one can always construct a computable input $\vu\in\Cc^{\infty}(\TT,\RN^{m})$ satisfying strong regularity properties such that the corresponding solution $\vx$ of \eqref{equ:SolutionElementaryN} is a computable continuous function.

The next lemma shows that for arbitrary computable initial values $\vx_{0} \in \RNc^{m}$, there always exists a computable input $\vu\in\Cc(\TT,\RN^{m})$ which has almost similar strong regularity properties as the input function in Lemma~\ref{equ:Construct_u} but such that the solution of \eqref{equ:SolutionElementaryN} is not computable.

\begin{lemma}
\label{lem:NotCompSolutionN}
Fix arbitrary numbers $m\in\NN$, $\delta\in\RNc$ with $\delta > 0$,
and consider \eqref{equ:SolutionElementaryN} on the interval $\TT = [0,\delta]$ with arbitrary computable initial values
$$\vx_{0} = (x_{0,1}, x_{0,2}, \dots, x_{0,m})^{\T} \in \RNc^{m}\,.$$
There exists an admissible $\vu \in \Cc(\TT,\RN^{m}) \cap \C^{m-1}(\TT,\RN^{m})$ with 
\begin{itemize}
\item $\vu^{(m-1)}$ is absolute continuous
\item $\vu^{(m-1)} \in \W(\TT,\RN^{m})$
\item $\vu^{(m-1)} \in \H12c(\TT,\RN^{m})$ 
\end{itemize}
such that the unique solution $\vx\in\C(\TT,\RN^{m})$ of \eqref{equ:SolutionElementaryN} 
is not a computable function on $\TT$, and in particular such that
\begin{equation}
\label{equ:xATdelta2}
	\vx(\delta/2)
	= \big( x_{1}(\delta/2), \dots, x_{m}(\delta/2) \big)^{\T} \notin \RNc^{m}\,.
\end{equation}
\end{lemma}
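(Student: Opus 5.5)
The plan is to superpose the smooth admissible input of Lemma~\ref{equ:Construct_u} with a perturbation built from Lemma~\ref{lem:MainLemma}. The perturbation vanishes to order $m-1$ at $t=0$, so it leaves the initial values unchanged, but its $(m-1)$-st derivative is non-computable at $\delta/2$.

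\textbf{Step 1 (admissible computable part).} By Lemma~\ref{equ:Construct_u} there is an admissible input $\vu_{1}(t) = (0,\dots,0,p(t))^{\T}$. Here $p$ is a computable trigonometric polynomial, and the solution $\vx_{1} = -(p^{(m-1)},\dots,p)^{\T}$ is computable and satisfies $\vx_{1}(0)=\vx_{0}$.

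\textbf{Step 2 (non-computable perturbation).} Fix a non-computable $\alpha_{*}\in\Cweak$. Let $u$ be the function from Lemma~\ref{lem:MainLemma} on $[0,\delta]$ with this $\alpha_{*}$. Its values $u^{(\ell)}(0)$, $0\le \ell\le m-1$, are computable: for $\ell\le m-2$ because $u^{(\ell)}\in\Cc(\TT)$ and computable continuous functions are Banach--Mazur computable, and for $\ell=m-1$ by item 5. As in the proof of Lemma~\ref{equ:Construct_u}, solve the Vandermonde-type linear system over $\RNc$ to obtain a computable trigonometric polynomial $q$ with $q^{(\ell)}(0)=u^{(\ell)}(0)$ for $\ell=0,\dots,m-1$. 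Set $v=u-q$. Then $v^{(\ell)}(0)=0$ for $\ell\le m-1$, and $v$ inherits items 1--4 of Lemma~\ref{lem:MainLemma}, since subtracting a computable trigonometric polynomial preserves computability in $\C$, $\A$ and $H^{1/2}$ as well as absolute continuity. Moreover $v^{(m-1)}(\delta/2) = \alpha_{*} - q^{(m-1)}(\delta/2) \notin \RNc$, because $\RNc$ is a field.

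\textbf{Step 3 (assemble).} Define $\vu = \vu_{1} + (0,\dots,0,v)^{\T}$. Lemma~\ref{lem:SolutionElementaryEqu} gives $\vx = -(p^{(m-1)}+v^{(m-1)},\dots,p+v)^{\T}$. At $t=0$ this equals $\vx_{0}$, so $\vu$ is admissible. The regularity claims on $\vu$ and $\vu^{(m-1)}$ follow componentwise from Steps 1 and 2; in particular $\vu\in\Cc$ because $v\in\Cc$ (for $m=1$ this is item 4 with $H^{1/2}$-computability, and otherwise item 2). For the first component we get $x_{1}(\delta/2) = -p^{(m-1)}(\delta/2) - v^{(m-1)}(\delta/2)$. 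This is non-computable, since the first term is computable and the second is not, which gives \eqref{equ:xATdelta2}. Since $\delta/2\in\RNc$ and every computable continuous function is Banach--Mazur computable, $x_{1}\notin\Cc(\TT)$, so $\vx$ is not computable.

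\textbf{Main obstacle.} The real difficulty sits entirely inside Lemma~\ref{lem:MainLemma}, which we assume. Within the present proof, the delicate point is $m=1$: there $u=u^{(m-1)}$, and membership in $\Cc(\TT)$ must be derived from $H^{1/2}$-computability or from computability in the Wiener algebra, which dominates the sup norm. A second point to check is that item 5 indeed supplies the computability of $u^{(m-1)}(0)$ needed for the interpolation step.
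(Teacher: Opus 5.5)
Your construction is the paper's, up to bookkeeping. The paper keeps $(0,\dots,0,u)^{\T}$ unchanged and applies Lemma~\ref{equ:Construct_u} to the shifted initial data $\vx_{0}+\bigl(u^{(m-1)}(0),\dots,u(0)\bigr)^{\T}$. The trigonometric polynomial this produces satisfies exactly the interpolation conditions of your $p-q$. Your explicit check that $u^{(\ell)}(0)\in\RNc$ for $0\le\ell\le m-1$ is what makes that shifted data computable, a point the paper leaves implicit. For $m\ge 2$ your proof is complete.

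The step that fails is your treatment of $m=1$, and it cannot be repaired.
\begin{itemize}
\item $H^{1/2}$-computability does not give membership in $\Cc(\TT)$. The paper's whole mechanism is that $w_{m-1}\in\H12c$ while $w_{m-1}(0)=\alpha\notin\RNc$.
\item Lemma~\ref{lem:MainLemma} only asserts $u^{(m-1)}\in\A(\TT)$, not computability in the Wiener norm. Such computability would, as you note, control the sup norm, and hence contradict item 5.
\item Your own Step~2 yields $v(\delta/2)=v^{(m-1)}(\delta/2)\notin\RNc$ when $m=1$. So $v\notin\Cc(\TT)$ by the same Banach--Mazur remark you use at the end of Step~3.
\end{itemize}
More fundamentally, the statement itself is false for $m=1$. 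Then $\mN_{1}$ is the $1\times 1$ zero matrix, so the solution is $\vx=-\vu$. Hence $\vu\in\Cc(\TT,\RN)$ would force $\vx(\delta/2)\in\RNc$, contradicting \eqref{equ:xATdelta2}.

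The paper's proof has the same hole. It relies on the claim $u\in\Cc(\TT)$ in Lemma~\ref{lem:MainLemma}, which for $m=1$ is incompatible with item 5 of that lemma; Subsection~\ref{app:TrigPolynomials} is also only set up for $m>1$. So you should either restrict the statement to $m\ge 2$, or, for $m=1$, drop the requirement $\vu\in\Cc$. In that case keep only that $\vu\in\A\cap\H12c$ is absolutely continuous, which your construction does deliver.
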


\begin{IEEEproof}
After fixing $m\in\NN$ and $\delta\in\RNc$, we choose an arbitrary $\alpha_{*} \in \Cweak\backslash\RNc$.
Based on these numbers, let $u \in \Cc(\TT)$ be the corresponding function as given in Lemma~\ref{lem:MainLemma}.
Therewith, we define $\vu_{1}(t) = \big(0,0,\dots,0,u(t)\big)^{\T}$ and consider the corresponding equation \eqref{equ:SolutionElementaryN_a},
i.e. $\mN_{m}\, \vx'_{1}(t) = \vx_{1}(t) + \vu_{1}(t)$.
According to Lemma~\ref{lem:SolutionElementaryEqu}, this equation has a unique solution 
\begin{equation}
\label{equ:Solution_x1}
	\vx_{1}(t) = -\left( u^{(m-1)}(t), u^{(m-2)}(t), \dots, u'(t), u(t) \right)^{\T},
\end{equation}
for $t\in\TT$.
Therewith, we define $\widehat{\vx}_{0} = \vx_{0} - \vx_{1}(0)$ and consider the initial value problem 
\begin{subequations}
\begin{align*}
	\mN_{m}\, \vx'_{2}(t) &= \vx_{2}(t) + \vu_{2}(t)\,, \qquad t\in\TT\,,\\
	\text{with}\quad \vx_{2}(0) &= \widehat{\vx}_{0}\,.
\end{align*}
\end{subequations}
For fixed initial values $\widehat{\vx}_{0}$, we construct an admissible input $\vu_{2}\in\Cc(\TT,\RN^{m}) \cap \C^{\infty}(\TT,\RN^{m})$ using Lemma~\ref{equ:Construct_u}.

Therewith, we define $\vu = \vu_{1} + \vu_{2}$ and consider \eqref{equ:SolutionElementaryN}.
The defined $\vu$ is admissible for the given initial values $\vx_{0}$. 
Indeed, by the linearity of \eqref{equ:SolutionElementaryN_a}, we have $\vx(t) = \vx_{1}(t) + \vx_{2}(t)$, $t\in\TT$ and in particular
$\vx(0) = \vx_{1}(0) + \vx_{2}(0) = \vx_{1}(0) + \widehat{\vx}_{0} = \vx_{0}$.

%
The properties of $\vu_{1}$ are determined by the properties of the function $u$ in its last component. These properties are listed in Lemma~\ref{lem:MainLemma}.
In particular, Properties 1) -- 4) of Lemma~\ref{lem:MainLemma}, show immediately that $\vu_{1} \in \Cc(\TT,\RN^{m}) \cap \C^{m-1}(\TT,\RN^{m})$ and that $\vu_{1}^{(m-1)}$ has the three properties claimed in the statement of Lemma~\ref{lem:NotCompSolutionN}. By Lemma~\ref{equ:Construct_u}, $\vu_{2}\in\Cc(\TT,\RN^{m}) \cap \C^{\infty}(\TT,\RN^{m})$ also posses all theses properties.
This shows that also $\vu = \vu_{1} + \vu_{2} \in\Cc(\TT,\RN^{m}) \cap \C^{m-1}(\TT,\RN^{m})$ has these three properties.

Property~5) of Lemma~\ref{lem:MainLemma} shows that $u^{(m-1)}(\delta/2) \notin\RNc$. 
This implies that the first component of $\vx_{1}$, given in \eqref{equ:Solution_x1}, is not computable at $\delta/2$.
Consequently $\vx_{1}(\delta/2) \notin\RNc^{m}$.
However, Lemma~\ref{equ:Construct_u} shows that $\vx_{2}$ is a computable function. Consequently $\vx_{2}(\delta/2) \in \RNc^{m}$ and so we have
$\vx(\delta/2) = \vx_{1}(\delta/2) + \vx_{2}(\delta/2) \notin\RNc^{m}$.
\end{IEEEproof}

\subsection{Solution of general singular LDAEs}

Equation \eqref{equ:SolutionElementaryN} is only one diagonal block of the more general equation \eqref{equ:OurEquation}.
Since all diagonal blocks of \eqref{equ:OurEquation} have the form \eqref{equ:SolutionElementaryN}, but with potentially different sizes $m_{1}, \dots, m_{\ell}$,  
it is straight forward to generalize the statement of Lemma~\ref{lem:NotCompSolutionN} to equations of the form \eqref{equ:OurEquation}:
\begin{lemma}
\label{lem:N}
Fix an arbitrary $\delta\in\RNc$ with $\delta > 0$ and
consider \eqref{equ:OurEquation} with a nilpotent, block-diagonal coefficient matrix $\mN \in \CN^{\nb\times \nb}$ of the form \eqref{equ:Nk}
with index of nilpotency $m\leq \nb$ and with arbitrary initial values $\widetilde{\vx}_{\mathrm{b},0} \in \RNc^{\nb}$, on the interval $\TT = [0,\delta]$.\\
There exists an admissible $\widetilde{\vu}_{\mathrm{b}} \in \Cc(\TT,\RN^{\nb}) \cap \C^{m-1}(\TT,\RN^{\nb})$ with 
\begin{itemize}
\item $\widetilde{\vu}^{(m-1)}_{\mathrm{b}}$ is absolutely continuous
\item $\widetilde{\vu}^{(m-1)}_{\mathrm{b}} \in \W(\TT,\RN^{\nb})$
\item $\widetilde{\vu}^{(m-1)}_{\mathrm{b}} \in \H12c(\TT,\RN^{\nb})$
\end{itemize}
such that the unique solution $\widetilde{\vx}_{\mathrm{b}}\in\C(\TT,\RN^{\nb})$ of \eqref{equ:OurEquation} is not a computable continuous function.
In particular, we have
\begin{equation}
\label{equ:xATdelta2_N}
	\widetilde{\vx}(\delta/2)
	= \big( \widetilde{x}_{1}(\delta/2), \dots, \widetilde{x}_{\nb}(\delta/2) \big)^{\T} \notin \RNc^{\nb}\,.
\end{equation}
\end{lemma}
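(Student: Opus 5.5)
The plan is to reduce Lemma~\ref{lem:N} to Lemma~\ref{lem:NotCompSolutionN} and Lemma~\ref{equ:Construct_u}, using the block-diagonal structure \eqref{equ:Nk} of $\mN$. Write $\widetilde{\vx}_{\mathrm{b}}$, $\widetilde{\vu}_{\mathrm{b}}$ and $\widetilde{\vx}_{\mathrm{b},0}$ as stacks of subvectors of lengths $m_{1},\dots,m_{\ell}$, matching the diagonal blocks $\mN_{m_1},\dots,\mN_{m_\ell}$. Then \eqref{equ:OurEquation} decouples into $\ell$ independent initial value problems of the form \eqref{equ:SolutionElementaryN}. The $k$-th problem has coefficient matrix $\mN_{m_k}$, input $\vu_{k}$ and initial value $\vx_{0,k}\in\RNc^{m_k}$; the latter is computable because it is a subvector of the computable vector $\widetilde{\vx}_{\mathrm{b},0}$. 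Since $m=\max_k m_k$, we may fix an index $k_{*}$ with $m_{k_*}=m$.

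For the block $k_{*}$ we apply Lemma~\ref{lem:NotCompSolutionN} with $m_{k_*}=m$, the given $\delta$, and initial value $\vx_{0,k_*}$. This yields an admissible input $\vu_{k_*}\in\Cc(\TT,\RN^{m})\cap\C^{m-1}(\TT,\RN^{m})$. Its derivative $\vu^{(m-1)}_{k_*}$ is absolutely continuous, lies in $\W$ and in $\H12c$, and the corresponding solution $\vx_{k_*}$ satisfies $\vx_{k_*}(\delta/2)\notin\RNc^{m}$. For every other block $k\neq k_{*}$ we apply Lemma~\ref{equ:Construct_u} with $m_{k}$ and $\vx_{0,k}$. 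This gives admissible inputs $\vu_{k}$ that are $\C^{\infty}$ with all derivatives computable, absolutely continuous, and in $\W$ and $\H12c$. In particular they have these properties for the derivative of order $m-1$, even when $m-1>m_k-1$. Their solutions $\vx_{k}$ are computable continuous functions.

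Next we stack the blocks: $\widetilde{\vu}_{\mathrm{b}}=(\vu_1^{\T},\dots,\vu_\ell^{\T})^{\T}$. Membership in $\Cc(\TT,\RN^{\nb})\cap\C^{m-1}(\TT,\RN^{\nb})$ and the three listed properties of $\widetilde{\vu}^{(m-1)}_{\mathrm{b}}$ hold componentwise, since all vector-valued spaces are defined entrywise. Admissibility follows because each block solution satisfies its own initial condition. By Lemma~\ref{lem:SolutionNEqu}, the unique solution is the stack of the block solutions $\vx_{k}$. Its $k_{*}$-block at the computable time $\delta/2$ is not in $\RNc^{m}$, so $\widetilde{\vx}_{\mathrm{b}}(\delta/2)\notin\RNc^{\nb}$, which is \eqref{equ:xATdelta2_N}.

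Finally, we conclude non-computability of the function itself. Every computable continuous function is Banach--Mazur computable and so maps the computable point $\delta/2$ to a computable value. Since $\widetilde{\vx}_{\mathrm{b}}(\delta/2)$ is not computable, $\widetilde{\vx}_{\mathrm{b}}$ cannot be a computable continuous function. The only point needing care is the bookkeeping of regularity orders: blocks smaller than $m$ must still provide inputs whose $(m-1)$-st derivative has the required properties. This is exactly why the $\C^{\infty}$ inputs from Lemma~\ref{equ:Construct_u} are used for those blocks. Otherwise the argument is routine.
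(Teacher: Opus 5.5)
Your argument is the paper's proof. You decouple \eqref{equ:OurEquation} along the Jordan blocks, apply Lemma~\ref{lem:NotCompSolutionN} to a block with $m_{k}=m$ and Lemma~\ref{equ:Construct_u} to the others, stack the inputs, and read off \eqref{equ:xATdelta2_N} from the distinguished block. Your explicit Banach--Mazur step and your remark that the $\C^{\infty}$ inputs also cover the $(m-1)$-st derivative on the smaller blocks spell out what the paper leaves implicit.

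One caveat, which you inherit from the paper: the reduction only works for $m\geq 2$. For $m=1$ one has $\mN=0$, hence $\widetilde{\vx}_{\mathrm{b}}=-\widetilde{\vu}_{\mathrm{b}}$. Every $\widetilde{\vu}_{\mathrm{b}}\in\Cc(\TT,\RN^{\nb})$ then yields a computable solution, so by your own Banach--Mazur observation the statement is false in that case.

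The defect sits in Lemma~\ref{lem:MainLemma}, and hence in Lemma~\ref{lem:NotCompSolutionN}. For $m=1$ its requirements $u\in\Cc(\TT)$ and $u^{(0)}(\delta/2)=\alpha_{*}\notin\RNc$ are incompatible. Its proof in fact assumes $m>1$ in Subsection~\ref{app:TrigPolynomials}, and Lemma~\ref{lem:Properties_w_ell} only gives computability of $w_{\ell}$ for $\ell\leq m-2$.
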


\begin{IEEEproof}
Assume the coefficient matrix $\mN \in \CN^{\nb\times \nb}$ has block-diagonal form \eqref{equ:Nk} with $1 \leq \ell \leq \nb$ diagonal blocks, and let $m_{i} > 0$, $i \in \left\{ 1,\dots,\ell \right\}$ be the index of nilpotency of the $i$-th diagonal block.
Since $m = \max_{i\in\left\{1,\dots,\ell \right\}} m_{i}$, there exists at least one diagonal block of $\mN$ with index of nil-potency $m$.
Assume this is the $k$-th block, i.e. $m_k = m$.

Thus the whole initial value problem \eqref{equ:OurEquation} splits into $\ell$ independent initial value problems of the form \eqref{equ:SolutionElementaryN}.
For the $k$-th  initial value problem, we construct an admissible $\widetilde{\vu}_{\mathrm{b},k} \in \Cc(\TT,\RN^{m}) \cap \C^{m-1}(\TT,\RN^{m})$ according to Lemma~\ref{lem:NotCompSolutionN}.
For all other initial value problem, we construct an admissible $\widetilde{\vu}_{\mathrm{b},i}\in\Cc(\TT,\RN^{m_i}) \cap \C^{\infty}(\TT,\RN^{m_i})$ according to Lemma~\ref{equ:Construct_u}.
Therewith, we define
\begin{equation*}
	\widetilde{\vu}_{\mathrm{b}}(t)
	= \left(\widetilde{\vu}_{\mathrm{b},1}(t), \dots, \widetilde{\vu}_{\mathrm{b},\ell}(t)\right)^{\T},
	\quad t\in\TT\;.
\end{equation*}
By Lemmas~\ref{equ:Construct_u} and \ref{lem:NotCompSolutionN}, all subblocks $\widetilde{\vu}_{\mathrm{b},i}$, $i = 1,2,\dots\ell$ have the required properties. Consequently, also $\widetilde{\vu}_{\mathrm{b}}$ has these properties.

It is clear, that the corresponding solution of \eqref{equ:OurEquation} is given by
\begin{equation*}
	\widetilde{\vx}_{\mathrm{b}}(t)
	= \left(\widetilde{\vx}_{\mathrm{b},1}(t), \dots, \widetilde{\vx}_{\mathrm{b},\ell}(t)\right)^{\T},
	\quad t\in\TT\;.
\end{equation*}
where $\widetilde{\vx}_{\mathrm{b},i}$ is the unique solution of the $i$-th initial value problem of the form \eqref{equ:SolutionElementaryN}.
From Lemma~\ref{equ:Construct_u}, we know that $\widetilde{\vx}_{\mathrm{b},i}(\delta/2) \in \RNc^{m_{1}}$ for all $i\neq k$, 
whereas Lemma~\ref{lem:NotCompSolutionN} shows that $\widetilde{\vx}_{\mathrm{b},k}(\delta/2) \notin \RNc^{m_{k}}$.
This shows finally that $\widetilde{\vx}_{\mathrm{b}}(\delta/2) \notin \RNc^{\nb}$.
\end{IEEEproof}

Finally we show that the unique solution of singular LDAEs with regular matrix pairs $(\mE,\mA)$ are generally not computable.

\begin{theorem}
\label{thm:MainTheorem}
Let \eqref{equ:LDAE} be a singular LDAE with a regular pair $(\mE,\mA)$ of computable coefficient matrices of size $n\times n$, with $\ind(\mE,\mA) = m$,
and with initial value $\vx_{0}\in\RNc^{n}$.
Let $\delta\in\RNc$, $\delta > 0$ be arbitrary and let $\TT = [0,\delta] \subset \RN$.
Then there always exist an admissible $\vu\in\Cc(\TT,\RN^{n}) \cap \C^{m-1}(\TT,\RN^{n})$ such that the unique solutions $\vx$ of \eqref{equ:LDAE} satisfy
\begin{equation*}
	\vx(\delta/2) \notin\RNc^{n}\,.
\end{equation*}
\end{theorem}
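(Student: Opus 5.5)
The plan is to reduce to the canonical form of Theorem~\ref{thm:CanonicalRegularPencil} and then combine the earlier results block by block. By Lemma~\ref{lem:Computable_PQ}, there are computable non-singular matrices $\mP,\mQ$ transforming $(\mE,\mA)$ into the form \eqref{equ:MazrixWeierstrassCF}. Their inverses are computable as well, since $\RNc$ is a field and the inverses are obtained by Cramer's rule.

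We set $\widetilde{\vx}_{0} = \mQ^{-1}\vx_{0} \in \RNc^{n}$ and split it as $(\widetilde{\vx}_{\mathrm{a},0},\widetilde{\vx}_{\mathrm{b},0})$. Since the LDAE is singular, $\nb \geq 1$ and the index $m\geq1$ is the nilpotency index of $\mN$. We then choose the two input blocks separately.

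\emph{Regular part.} We take $\widetilde{\vu}_{\mathrm{a}} \equiv \vzero$, which is trivially computable and smooth. By the discussion of the regular part, the ODE $\widetilde{\vx}'_{\mathrm{a}} = \mJ\widetilde{\vx}_{\mathrm{a}}$ with computable initial value has a unique solution, and this solution is computable \cite{Graca_CompDGL_2021}. Explicitly, $\widetilde{\vx}_{\mathrm{a}}(t) = e^{\mJ t}\widetilde{\vx}_{\mathrm{a},0}$. Hence $\widetilde{\vx}_{\mathrm{a}}(\delta/2)\in\RNc^{\na}$.

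\emph{Singular part.} We apply Lemma~\ref{lem:N} on $[0,\delta]$ with initial values $\widetilde{\vx}_{\mathrm{b},0}$. This yields an admissible $\widetilde{\vu}_{\mathrm{b}} \in \Cc(\TT,\RN^{\nb}) \cap \C^{m-1}(\TT,\RN^{\nb})$ whose unique solution satisfies $\widetilde{\vx}_{\mathrm{b}}(\delta/2)\notin\RNc^{\nb}$.

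We then define $\vu = \mP\,\widetilde{\vu}$ with $\widetilde{\vu} = (\widetilde{\vu}_{\mathrm{a}},\widetilde{\vu}_{\mathrm{b}})^{\T}$. A linear combination with computable coefficients of computable continuous functions is computable, and it inherits $\C^{m-1}$ regularity, so $\vu\in\Cc(\TT,\RN^{n})\cap\C^{m-1}(\TT,\RN^{n})$. By the equivalence \eqref{equ:CoordinateTranfor}--\eqref{equ:TrafoMatrices}, $\vx = \mQ\widetilde{\vx}$ is the unique solution of \eqref{equ:LDAE} with $\vx(0)=\vx_{0}$, so $\vu$ is admissible.

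The remaining step is to transfer non-computability through $\mQ$, and this is the only point needing care. Suppose $\vx(\delta/2)\in\RNc^{n}$. Then $\widetilde{\vx}(\delta/2) = \mQ^{-1}\vx(\delta/2)$ would be in $\RNc^{n}$, because $\mQ^{-1}$ is a computable matrix and $\RNc$ is closed under the arithmetic operations. In particular its $\mathrm{b}$-block $\widetilde{\vx}_{\mathrm{b}}(\delta/2)$ would be computable, which contradicts Lemma~\ref{lem:N}. Hence $\vx(\delta/2)\notin\RNc^{n}$.

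I expect no substantial obstacle. The real difficulty is deferred to Lemma~\ref{lem:MainLemma}, and the remaining steps use only the computability of $\mP$, $\mQ$ and $\mQ^{-1}$ and the computability of the regular-part solution.
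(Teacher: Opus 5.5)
Your proposal is correct and follows essentially the same route as the paper: pass to the canonical form with the computable $\mP,\mQ$ of Lemma~\ref{lem:Computable_PQ}, take the input of Lemma~\ref{lem:N} on the nilpotent block and a computable input on the regular block, and map back. Your contradiction argument via the computable inverse $\mQ^{-1}$ is a cleaner justification of the final step than the paper's, which only asserts $\mQ\widetilde{\vx}(\delta/2)\notin\RNc^{n}$ from the computability of $\mQ$; note that your argument does not actually need the computability of $\widetilde{\vx}_{\mathrm{a}}(\delta/2)$.
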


\begin{IEEEproof}
Consider \eqref{equ:LDAE} in its Weierstrass canonical form \eqref{equ:WeierstrassCanonForm}.
Since the LDAE is assumed to be singular, $(\mE,\mA)$ has at least one eigenvalue at infinity. 
Therefore the second line in \eqref{equ:WeierstrassCanonForm} is non-zero with a nilpotent matrix $\mN \in \RN^{n_{\mathrm{b}}\times n_{\mathrm{b}}}$ with index of nilpotency $m$.

Now, we choose for $\widetilde{\vu}_{\mathrm{b}} \in \Cc(\TT,\RN^{n_\mathrm{b}}) \cap \C^{m-1}(\TT,\RN^{n_{\mathrm{b}}})$ to be the vector constructed in the proof of Lemma~\ref{lem:N}.
For $\widetilde{\vu}_{\mathrm{a}}$, we can take any function from $\Cc(\TT,\RN^{\na})$ for the ODE in the first line of \eqref{equ:WeierstrassCanonForm} to have a unique solution $\widetilde{\vx}_{\mathrm{a}} \in \Cc(\TT,\RN^{\na})$.
Thus we choose in particular $\widetilde{\vu}_{\mathrm{a}} \in \Cc(\TT,\RN^{\na}) \cap \C^{m-1}(\TT,\RN^{\na})$.

However, according to Lemma~\ref{lem:N}, the solution $\widetilde{\vx}_{\mathrm{b}}$ of the second line of \eqref{equ:WeierstrassCanonForm} has the property that $\widetilde{\vx}_{\mathrm{b}}(\delta/2) \notin \RNc^{n_{\mathrm{b}}}$.
Since the matrices $\mQ$ and $\mP$ in \eqref{equ:TrafoToCanonicalForm}, that transform \eqref{equ:LDAE} into the canonical form \eqref{equ:WeierstrassCanonForm}, are computable, it follows that the solution of \eqref{equ:LDAE} always satisfies
\begin{equation*}
	\vx(\delta/2)
	= \mQ \left[\begin{array}{c}
	\vx_{\mathrm{a}}(\delta/2) \\ \vx_{\mathrm{b}}(\delta/2)
	\end{array}\right] \notin \RNc^{n}\,,
\end{equation*}
independent on the solution $\widetilde{\vx}_{\mathrm{a}}$, i.e. independent on the chosen $\widetilde{\vu}_{\mathrm{a}}$ for the first line in \eqref{equ:WeierstrassCanonForm}.
\end{IEEEproof}

\subsection{Discussion}

Let $(\mE,\mA)$ be a regular pair of computable matrices of size $n\times n$ with a singular $\mE$,  and let $\vx_{0} \in \RN^{n}$ be an arbitrary vector of computable initial values.
Then $(\mE,\mA)$ and $\vx_{0}$ define a singular LDAE \eqref{equ:LDAE} with a certain $\vu\in\C(\TT,\RN^{n})$.
Equation \eqref{equ:LDAE} will not be solvable for any arbitrary $\vu\in\C(\TT,\RN^{n})$, but $\vu$ has to be compatible with the initial values $\vx_{0}$ in order that \eqref{equ:LDAE} has a unique solution.
Let $\mathcal{U}(\vx_{0}) \subset \C(\TT,\RN^{n})$ be the set of all those functions for which  \eqref{equ:LDAE} has a unique solution.
Then Theorem~\ref{thm:MainTheorem} shows that the set $\mathcal{U}(\vx_{0})$ always contains functions $\vu$ such that the unique solution $\vx$ of \eqref{equ:LDAE} is not computable at some points on any 
arbitrary interval $[0,\delta] \subset\RN$.
Since \eqref{equ:LDAE} is linear, it is clear that this happens also for function $\vu$ with arbitrary absolute value of its small entries.

Theorem~\ref{thm:MainTheorem} shows in particular, that the non-computability of solutions of LDAEs occurs on arbitrary intervals $[0,\delta]\subset\RN$ with arbitrary small $\delta > 0$.
Therefore, our results are also of interest for non-linear differential-algebraic equations (DAEs) which frequently arise in robotics, molecular chemistry, and many other areas.
Because these non-linear DAEs are often linearized about a nominal operating state, assuming that the variation of the system states remain small about this nominal state. 
Then in a small neighborhood of the nominal state, the behavior of the system can be described approximately by a linear singular DAE. 
A typical example arises in control:  Assume one might to try to stabilize a certain system (that is described by a non-linear DAE) at a certain nominal state.
Then for small deviations from this nominal state, the dynamics of the system can be approximated by a singular LDAE.
Based on this LDAE, one can try to design an appropriate controller that keeps the system in the nominal state. (see, e.g., \cite{MillsGoldenberg_TransRA89}).
Since the non-computability of solutions of LDAEs as stated in Theorem~\ref{thm:MainTheorem} holds for arbitrary small intervals $[0,\delta]\subset\RN$ it shows that the solution of the LDAE can generally not be used to compute an approximative solution of non-linear singular DAE in a neighborhood of the nominal state. 
In fact, even the solution at nominal state itself is generally not computable.

This observation is also of interest, because non-linear DAEs can generally not be solved in closed form and there exists no general applicable numerical methods to solve such equations.
Therefore, one often approximates the non-linear DAE locally by LDAE and uses then standard numerical algorithms for LDAEs.

\section{Extensions and Further Discussions}

The linear differential-algebraic equation of the form \eqref{equ:LDAE} is only a special case of the more general differential-algebraic equation of the form
\begin{align}
\label{equ:DAE_F}
\begin{split}
	\mE \vx'(t) &= \mF(t,\vx(t))\,,
	\qquad t\in\TT\;,\\
	\text{with}\qquad
	\vx(0) &= \vx_{0}\;,
\end{split}
\end{align}
where $\mE \in \RN^{n\times n}$ is a coefficient matrix and $\mF : \TT\times\RN^{n} \to \RN^{n}$ is a time-dependent function that maps $\RN^{n}$ into $\RN^{n}$.
As before, $\vx_{0} \in \RN^{n}$ is the vector of initial values.
For better comparison with LDAEs \eqref{equ:LDAE} with a regular matrix pair $(\mE,\mA)$, we consider here only the case where $\mE$ is a square matrix.
As before, we always assume that $\mE$ and $\vx_{0}$ is a computable matrix and computable vector, respectively.

It is clear, that \eqref{equ:LDAE} is a special case of \eqref{equ:DAE_F} in which $\mF : \TT\times\RN^{n} \to \RN^{n}$ is given by
\begin{equation*}
	\mF(t,\vx) = \mA\, \vx + \vu(t)\,,
	\qquad \text{for}\ t\in\TT\ \text{and}\ \vx\in\RN^{n}\,,
\end{equation*}
where $\mA\in\RN^{n\times n}$ is a coefficient matrix and $\vu : \TT \to \RN^{n}$ a vector valued function on $\TT\subset\RN$.
Therefore Theorem~\ref{thm:MainTheorem} immediately implies the following observation for differential-algebraic equations of the form \eqref{equ:DAE_F}.

\begin{corollary}
Let $n\in\NN$ with $n\geq 3$, let $\delta\in\RNc$ with $\delta > 0$ and $\TT = [0,\delta]$.
Let $\mE\in\RN^{n\times n}$ be a singular, computable matrix and let $\vx_{0}\in\RN^{n}$ be an arbitrary computable vector.
There always exists an $\mF : \TT\times \RN^{n} \to \RN^{n}$ with
\begin{itemize}
\item $\mF(\cdot,\vx) \in \Cc(\TT,\RN^{n}) \cap \C^{(m-1)}(\TT,\RN^{n})$ for every $\vx\in\RNc^{n}$
\item ...
\end{itemize} 
for some $0 < m < n$ and
such that \eqref{equ:DAE_F} has a unique solution $\vx\in\C(\TT,\RN^{n})$ which is not a computable continuous function with the property
\begin{equation*}
	\vx(\delta/2) \notin\RNc^{n}\,.
\end{equation*}
\end{corollary}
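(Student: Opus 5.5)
The plan is to reduce the corollary to Theorem~\ref{thm:MainTheorem}. I will construct a computable matrix $\mA\in\RN^{n\times n}$ such that $(\mE,\mA)$ is a regular pair of index $m\geq 1$. Then I set $\mF(t,\vx)=\mA\vx+\vu(t)$, where $\vu$ is the admissible input supplied by Theorem~\ref{thm:MainTheorem}.

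\emph{Step 1 (normal form of $\mE$).} Let $r=\operatorname{rank}\mE<n$. I will show that there are non-singular computable matrices $\mP,\mQ$ with $\mP^{-1}\mE\mQ=\operatorname{diag}(\mI_{r},\mNull)$. The idea is Gaussian elimination with full pivoting. At each step we only need to know \emph{which} entries are non-zero. This is a finite amount of information, so it can be fixed non-uniformly, because we only claim existence. Every operation actually performed is then a field operation on computable numbers, so $\mP$ and $\mQ$ stay computable because $\RNc$ is a field.

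I expect this to be the one subtle point. Rank and zero-tests are not decidable uniformly, but the corollary is purely existential, so no algorithm has to find $\mP,\mQ$; it suffices that they exist with computable entries.

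\emph{Step 2 (choice of $\mA$).} I define $\mA=\mP\mQ^{-1}$, so that $\mP^{-1}\mA\mQ=\mI_n$. Then
\[
\det(\lambda\mE-\mA)=\det(\mP)\det(\mQ^{-1})\,(\lambda-1)^{r}(-1)^{n-r}\not\equiv 0 .
\]
Hence $(\mE,\mA)$ is regular and already in the canonical form \eqref{equ:MazrixWeierstrassCF}, with $\mJ=\mI_r$ and $\mN=\mNull\in\RN^{(n-r)\times(n-r)}$. This gives index $m=1$, which satisfies $0<m<n$. If a larger index is wanted, I would instead place a nilpotent Jordan block $\mN_{k}$ inside the singular part by a suitable choice of the lower block. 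This is not needed for the statement.

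\emph{Step 3 (apply the theorem).} Theorem~\ref{thm:MainTheorem} gives an admissible $\vu\in\Cc(\TT,\RN^{n})\cap\C^{m-1}(\TT,\RN^{n})$ whose unique solution satisfies $\vx(\delta/2)\notin\RNc^{n}$. For every computable $\vx\in\RNc^{n}$, the map $t\mapsto\mA\vx+\vu(t)$ is a computable constant vector plus a function in $\Cc\cap\C^{m-1}$. It therefore lies in $\Cc(\TT,\RN^{n})\cap\C^{(m-1)}(\TT,\RN^{n})$, which is the listed property of $\mF$.

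Uniqueness of the solution of \eqref{equ:DAE_F} is inherited directly, since \eqref{equ:DAE_F} with this $\mF$ is exactly \eqref{equ:LDAE}. Concretely, the regular part is an ODE with a unique solution, and the algebraic part is determined by Lemma~\ref{lem:SolutionNEqu}. Since $\vx(\delta/2)\notin\RNc^n$ and $\delta/2\in\RNc$, the solution is not Banach--Mazur computable, hence not a computable continuous function. The hypothesis $n\geq 3$ is not used beyond $n\geq 2$; it is needed only so that a singular $\mE$ has a nontrivial index.
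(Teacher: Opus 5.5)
\textbf{The choice of index $m=1$ in Step~2 is fatal.} Your Step~1 is fine: the rank normal form over the field $\RNc$, with the zero tests fixed non-uniformly, is a valid argument. But by making $(\mE,\mA)$ a pencil of index $m=1$, you land exactly in the case where the conclusion is impossible. Step~3 therefore invokes Theorem~\ref{thm:MainTheorem} where it cannot hold.

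To see this, take $\mP^{-1}\mE\mQ=\operatorname{diag}(\mI_{r},\mNull)$, $\mP^{-1}\mA\mQ=\mI$ and $\widetilde{\vu}=\mP^{-1}\vu$. The equation splits into $\widetilde{\vx}_{\mathrm{a}}'=\widetilde{\vx}_{\mathrm{a}}+\widetilde{\vu}_{\mathrm{a}}$ and $\mNull=\widetilde{\vx}_{\mathrm{b}}+\widetilde{\vu}_{\mathrm{b}}$. Your bullet property forces $\vu=\mF(\cdot,\vzero)\in\Cc(\TT,\RN^{n})$, and $\mP^{-1}$ is computable. Hence $\widetilde{\vx}_{\mathrm{b}}=-\widetilde{\vu}_{\mathrm{b}}$ is a computable continuous function. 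So is $\widetilde{\vx}_{\mathrm{a}}(t)=e^{t}\widetilde{\vx}_{\mathrm{a}}(0)+\int_{0}^{t}e^{t-s}\widetilde{\vu}_{\mathrm{a}}(s)\,\mathrm{d}s$. Therefore $\vx(\delta/2)=\mQ\widetilde{\vx}(\delta/2)\in\RNc^{n}$ for every admissible computable $\vu$.

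The same argument shows that Theorem~\ref{thm:MainTheorem} cannot hold for any pencil of index $1$. At $m=1$, Lemma~\ref{lem:MainLemma} would assert both $u\in\Cc(\TT)$ and $u(\delta/2)=\alpha_{*}\notin\RNc$. That contradicts Banach--Mazur computability of computable continuous functions. In its proof, item 1 of Lemma~\ref{lem:Properties_w_ell} is vacuous for $m=1$, so nothing shows $w_{0}\in\Cc$. The paper's mechanism is a computable $u$ whose derivative $u^{(m-1)}$, with $m-1\geq 1$, takes a non-computable value at a computable point. Without any differentiation in the algebraic part, there is no mechanism. Your remark that a higher index ``is not needed'' is thus exactly backwards.

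\textbf{How to repair it within your framework.} Aim Step~1 at $\operatorname{diag}(\mI_{r-1},\mN_{2},\mNull_{n-r-1})$ instead of $\operatorname{diag}(\mI_{r},\mNull)$. Here $n-r-1\geq 0$ because $\mE$ is singular. This matrix also has rank $r$, so there are computable non-singular $\mP,\mQ$ with $\mP^{-1}\mE\mQ$ equal to it. With $\mA=\mP\mQ^{-1}$, the pencil is already in the form \eqref{equ:MazrixWeierstrassCF} with $\mJ=\mI_{r-1}$ and $\mN=\operatorname{diag}(\mN_{2},\mNull_{n-r-1})$. That gives index $m=2$, where Lemma~\ref{lem:MainLemma} and Theorem~\ref{thm:MainTheorem} do apply.

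Now $0<m<n$ holds precisely because $n\geq 3$, so, contrary to your last sentence, that hypothesis is used. The repair needs $r\geq 1$. For $\mE=\mNull$, every regular pair $(\mNull,\mA)$ has index $1$, so no affine $\mF$ works; that case must be excluded or handled by a genuinely nonlinear $\mF$.

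\textbf{Comparison with the paper.} The paper simply takes $\mA=\mI$, so $m$ is the nilpotency index of the nilpotent part of $\mE$. This can be $1$ (e.g.\ $\mE=\operatorname{diag}(1,1,0)$), which runs into the same obstruction. It can also be $n$ (e.g.\ $\mE=\mN_{n}$), which violates $m<n$. Your idea of engineering $\mA$ from a normal form of $\mE$ controls the index, and is the better route once it targets index $2$.
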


\begin{IEEEproof}
We can choose $\mF$ as 
\begin{equation}
\label{equ:AffineRightSide}
	\mF(t,\vx) = \vx + \vu(t)\,,
	\qquad t\in\TT\,.
\end{equation}
Therewith \eqref{equ:DAE_F} becomes a LDAS \eqref{equ:LDAE} with matrix pair $(\mE,\mI)$.
Since $\det(\lambda\mE + I)$ is not the zero polynomial, $(\mE,\mI)$ is a regular matrix pair and because $\mE$ is assumed to be singular, we get a singular LDAE with index $m = \ind(\mE,\mI)$ that satisfies $0< m < n$.

Thus the resulting singular LDAE satisfies the conditions on Theorem~\ref{thm:MainTheorem}, and so we choose the vector valued function $\vu$ in \eqref{equ:AffineRightSide} as the $\vu\in\Cc(\TT,\RN^{n})\cap \C^{m-1}(\TT,\RN^{n})$ given by Theorem~\ref{thm:MainTheorem}.
Then by the properties of this $\vu$, it is clear that $\mF(\cdot,\vx)$ is a computable continuous functions and that it is $(m-1)$-times continuously differentiable for every $\vx\in\RNc^{n}$.
From the affine structure of \eqref{equ:AffineRightSide} it is immediately clear that 
\end{IEEEproof}

\section{Proof of Lemma~\ref{lem:MainLemma}}
\label{sec:Proof_AuxResultsAndMainLemma}

In this section, we will prove our central Lemma~\ref{lem:MainLemma}.
This proof will explicitly construct the function $u\in\Cc(\TT)$ whose properties are listed in Lemma~\ref{lem:MainLemma}.

The construction of $u$ is based on three main ingredients:
\begin{enumerate}
\item A particular computable sequence $\left\{ \alpha_{n} \right\}^{\infty}_{n=0}$ that is associated with a number $\alpha\in\Cweak$.
\item A particular set of trigonometric polynomials $p_{m,k}$.
\item A set $2\pi$-periodic functions $\left\{ w_{\ell} \right\}$ which is defined based on the sequence $\left\{ \alpha_{n} \right\}^{\infty}_{n=0}$ and the trigonometric polynomials $p_{m,k}$.
\end{enumerate}

In the attempt to make the construction in the proof of Lemma~\ref{lem:MainLemma} as clear as possible, we split the proof into several auxiliary statements which are given the following subsections:
Subsection~\ref{app:LemmaSequence} will define the sequence $\left\{ \alpha_{n} \right\}^{\infty}_{n=0}$ associated with a number $\alpha\in\Cweak$ and proves the necessary properties of this sequences.
Subsection~\ref{app:TrigPolynomials} defines the trigonometric polynomials $p_{m,k}$ and derives their main properties.
Subsection~\ref{app:Properties_wl} defines the functions $w_{\ell}$ and proves some properties of these functions.
After these preparations, Subsection~\ref{sec:ProofMainLemma} will finally present the proof of Lemma~\ref{lem:MainLemma}.

\subsection{A sequence associated with $\alpha\in\Cweak$}
\label{app:LemmaSequence}

Let $\alpha\in\Cweak$, $\alpha\neq 0$ be arbitrary. 
By the definition of $\Cweak$, there exists $\beta\in \Pu$ and $\gamma\in\Pu$ such that $\alpha = \beta - \gamma$.
Since $\beta,\gamma \in \Pu$, there exist two monotone non-decreasing computable sequences $\left\{ \beta_{n} \right\}_{n\in\NN}$ and $\left\{ \gamma_{n} \right\}_{n\in\NN}$ of computable numbers such that 
\begin{equation}
\label{equ:limitBetaAndGamma}
	\textstyle\lim_{n\to\infty} \beta_{n} = \beta
	\qquad\text{and}\qquad
	\lim_{n\to\infty} \gamma_{n} = \gamma\,.
\end{equation}
Therewith, we define the sequence $\left\{ \alpha_{n} \right\}^{\infty}_{n=0}$ by
\begin{equation}
\label{equ:DefSeqAlpha}
	\alpha_{0} = 0
	\qquad\text{and}\qquad
	\alpha_{n} = \beta_{n} - \gamma_{n}\,,\quad n = 1,2,\dots\;.
\end{equation}
The following lemma gives properties $\left\{ \alpha_{n} \right\}^{\infty}_{n=0}$, frequently needed in the subsequent proofs.

\begin{lemma}
\label{lem:SeqAlpha}
Let $\alpha\in\Cweak$ be arbitrary.
There exists a computable sequence $\left\{ \alpha_{n} \right\}^{\infty}_{n=0} \subset \RN_{c}$ with the following properties:
\begin{enumerate}
\item $\lim_{n\to\infty} \alpha_{n} = \alpha$.
\item $\lim_{R\to\infty} \sum^{\infty}_{n=R+1} \left|\alpha_{n} - \alpha_{n-1} \right| = 0$.
\item There exists a $Q_{\alpha} \in\NN$ such that
\begin{equation*}
	\textstyle\sum^{\infty}_{n=R+1} \left|\alpha_{n} - \alpha_{n-1}\right| \leq Q_{\alpha}\,,
	\qquad\text{for all}\ R\in\NN\,.
\end{equation*}
\end{enumerate}
\end{lemma}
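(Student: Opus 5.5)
The plan is to take the sequence $\left\{ \alpha_{n} \right\}^{\infty}_{n=0}$ defined in \eqref{equ:DefSeqAlpha} and check the three properties one by one. The only real tool is the monotonicity of $\left\{ \beta_{n} \right\}$ and $\left\{ \gamma_{n} \right\}$, which lets us bound the variation of $\left\{ \alpha_{n} \right\}$ by a telescoping sum.

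\emph{Computability.} Since $\left\{ \beta_{n} \right\}_{n\in\NN}$ and $\left\{ \gamma_{n} \right\}_{n\in\NN}$ are computable sequences, a single Turing machine can approximate $\beta_{n}-\gamma_{n}$ to accuracy $2^{-M}$. It does so by approximating each term to accuracy $2^{-M-1}$ and subtracting. Setting $\alpha_{0}=0$ only changes one entry, so $\left\{ \alpha_{n} \right\}^{\infty}_{n=0}$ is a computable sequence of computable numbers.

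\emph{Property 1.} This follows at once from \eqref{equ:limitBetaAndGamma}, because $\alpha_{n}=\beta_{n}-\gamma_{n}\to\beta-\gamma=\alpha$.

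\emph{Key estimate.} For $n\geq 2$, monotonicity gives $\beta_{n}-\beta_{n-1}\geq 0$ and $\gamma_{n}-\gamma_{n-1}\geq 0$. Hence
\begin{equation*}
	\left|\alpha_{n}-\alpha_{n-1}\right| \leq (\beta_{n}-\beta_{n-1}) + (\gamma_{n}-\gamma_{n-1}).
\end{equation*}
Summing over $R+1\leq n\leq N$ with $R\geq 1$ telescopes to
\begin{equation*}
	\textstyle\sum^{N}_{n=R+1}\left|\alpha_{n}-\alpha_{n-1}\right| \leq (\beta_{N}-\beta_{R})+(\gamma_{N}-\gamma_{R}) \leq (\beta-\beta_{R})+(\gamma-\gamma_{R}).
\end{equation*}
The last inequality holds because non-decreasing sequences lie below their limits.

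\emph{Property 2.} Letting $N\to\infty$ shows that the series converges. Its tail is bounded by $(\beta-\beta_{R})+(\gamma-\gamma_{R})$, which tends to $0$ as $R\to\infty$.

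\emph{Property 3.} The tails are non-increasing in $R$, so it suffices to bound the full sum starting at $n=1$. Here the term $n=1$ is treated separately, since $\alpha_{0}=0$ is not of the form $\beta_{0}-\gamma_{0}$. This gives
\begin{equation*}
	\textstyle\sum^{\infty}_{n=1}\left|\alpha_{n}-\alpha_{n-1}\right| \leq \left|\beta_{1}-\gamma_{1}\right| + (\beta-\beta_{1})+(\gamma-\gamma_{1}) < \infty.
\end{equation*}
We then take $Q_{\alpha}\in\NN$ to be any integer at least this finite number. Only the existence of $Q_{\alpha}$ is claimed, so $Q_{\alpha}$ need not be computable.

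There is no genuine obstacle in this argument. The only points needing care are the special first step from $\alpha_{0}=0$ to $\alpha_{1}$ and remembering that $\left\{ \alpha_{n} \right\}$ itself need not be monotone, which is why we bound its increments through $\left\{ \beta_{n} \right\}$ and $\left\{ \gamma_{n} \right\}$ separately.
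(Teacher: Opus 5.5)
Your proof is correct and follows the paper's argument: set $\alpha_0=0$ and $\alpha_n=\beta_n-\gamma_n$, then bound the increments by telescoping sums of the non-decreasing sequences $\{\beta_n\}$ and $\{\gamma_n\}$. You treat the first step $|\alpha_1-\alpha_0|=|\beta_1-\gamma_1|$ separately and write an inequality where the paper writes the equality $\sum_{n=R+1}^{\infty}|\alpha_n-\alpha_{n-1}|=\beta+\gamma-(\beta_R+\gamma_R)$; on both points your version is slightly more careful than the paper's.
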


\begin{IEEEproof}
Let $\left\{ \alpha_{n} \right\}^{\infty}_{n=0}$ be defined by \eqref{equ:DefSeqAlpha}.
From its definition, it is clear that $\left\{ \alpha_{n} \right\}^{\infty}_{n=0}$ is computable sequence of computable numbers with $\lim_{n\to\infty} \alpha_{n} = \lim_{n\to\infty} \beta_{n} - \lim_{n\to\infty} \gamma_{n} = \beta - \gamma = \alpha$.
To verify 2), we notice that for $R,N\in\NN$ with $N>R$, we have by the triangle inequality
\begin{multline*}
	\sum^{N}_{n=R+1} \left| \alpha_{n} - \alpha_{n-1}\right|
	\leq \sum^{N}_{n=R+1} \left| \beta_{n} - \beta_{n-1}\right| + \left| \gamma_{n} - \gamma_{n-1}\right|
	= \sum^{N}_{n=R+1} \left( \beta_{n} - \beta_{n-1}\right) + \left( \gamma_{n} - \gamma_{n-1}\right)
	= \beta_{N} - \beta_{R} + \gamma_{N} - \gamma_{R}\;,
\end{multline*}
using that $\left\{ \beta_{n} \right\}_{n\in\NN}$ and $\left\{ \gamma_{n} \right\}_{n\in\NN}$ are non-decreasing.
Because of \eqref{equ:limitBetaAndGamma}, this shows that
\begin{equation*}
	\textstyle\lim_{N,R\to\infty} \sum^{N}_{n=R+1} \left|\alpha_{n} - \alpha_{n-1} \right| = 0
\end{equation*}
and which proves in particular Statement 2).

Let $N\to\infty$ and using that $\left\{ \beta_{n} \right\}_{n\in\NN}$ and $\left\{ \gamma_{n} \right\}_{n\in\NN}$ are non-decreasing, we get
\begin{equation*}
	\textstyle\sum^{\infty}_{n=R+1} \left| \alpha_{n} - \alpha_{n-1}\right|
	= \beta + \gamma - \left(\beta_{R} + \gamma_{R}\right)
	\leq \beta + \gamma - \left(\beta_{1} + \gamma_{1}\right)
	\leq Q_{\alpha}
\end{equation*}
where $Q_{\alpha} \in \NN$ is defined to be the smallest integer that satisfies $\beta + \gamma - \left(\beta_{1} + \gamma_{1}\right) \leq Q_{\alpha}$.
\end{IEEEproof}

\begin{remark}
Note that the convergence in Statement 2) of Lemma~\ref{lem:SeqAlpha} is not necessarily effective.
\end{remark}

\subsection{A set of trigonometric polynomials}
\label{app:TrigPolynomials}

Let $\TT = [0,2\pi] \subset \RN$, and let $m\in\NN$, $m>1$ be an arbitrary but fixed positive integer.
Let $r\in\NN$ be the largest integer that satisfies $0 \leq 4r < m$.
Depended on $m$, we define for any $k\in\NN$, $k>1$, the following trigonometric polynomial
\begin{equation}
\label{equ:definition_pmk}
	p_{m,k}(t) = \left\{\begin{array}{lll}
	+ \frac{1}{C(k)} \sum^{k}_{n=1} \frac{\cos(nt)}{n^{m}} & \text{if} & m = 4r + 1\\[0.7ex]
	+ \frac{1}{C(k)} \sum^{k}_{n=1} \frac{\sin(nt)}{n^{m}} & \text{if} & m = 4r + 2\\[0.7ex]
	- \frac{1}{C(k)} \sum^{k}_{n=1} \frac{\cos(nt)}{n^{m}} & \text{if} & m = 4r + 3\\[0.7ex]
	- \frac{1}{C(k)} \sum^{k}_{n=1} \frac{\sin(nt)}{n^{m}} & \text{if} & m = 4r + 4
	\end{array}\right.
\end{equation}
%
for $t\in\TT$ and with the constant 
\begin{equation}
\label{equ:Bound_Ck}
	C(k)
	:= \sum^{k}_{n=1} \frac{1}{n}
	\geq \int^{k+1}_{1} \frac{\d\tau}{\tau}
	= \log(k+1)\,.
\end{equation} 
Note the definition \eqref{equ:definition_pmk} yields
\begin{equation}
\label{equ:deribative_pml_mMinus1}
	p^{(m-1)}_{m,k}(t) = + \frac{1}{C(k)} \sum^{k}_{n=1} \frac{\cos(nt)}{n}\,,\quad t\in\TT\,,
\end{equation}
for the $(m-1)$-th derivative of $p_{m,k}$, independent on the specific $m$.
All other derivatives $p^{(\ell)}_{m,k}$ with $0\leq \ell \leq m-2$ have the form
\begin{equation}
\label{equ:derivatives_pml}
	p^{(\ell)}_{m,k}(t) = \textstyle\pm\frac{1}{C(k)} \sum^{k}_{n=1} \frac{\cos(nt)}{n^{m-\ell}} \qquad\text{or}\qquad
	p^{(\ell)}_{m,k}(t) = \textstyle\pm\frac{1}{C(k)} \sum^{k}_{n=1} \frac{\sin(nt)}{n^{m-\ell}}
\end{equation}
depended on the concrete $m$ and $\ell$.

Next, we derive properties of these polynomials that are needed in the subsequent proof.
From the formulas \eqref{equ:derivatives_pml} for the derivatives, we immediately get for all $0\leq \ell \leq m-2$
\begin{equation}
\label{equ:Bound_pmkell}
	\left|p^{(\ell)}_{m,k}(t)\right|
	\leq \frac{1}{C(k)} \sum^{k}_{n=1} \frac{1}{n^{m-\ell}} 
	\leq \frac{1}{C(k)} \sum^{\infty}_{n=1} \frac{1}{n^{2}}
	\leq \frac{1}{\log(k+1)}\frac{\pi^{2}}{6}\;,
	\qquad\text{for all}\ t\in\TT\;.
\end{equation}
In the case $\ell = (m-1)$, given by \eqref{equ:deribative_pml_mMinus1}, we obtain
\begin{equation}
\label{equ:Bound_pmkm-1}
	\left|p^{(m-1)}_{m,k}(t)\right|
	\leq \frac{1}{C(k)} \sum^{k}_{n=1} \frac{1}{n^{m-\ell}} 
	\leq \frac{1}{C(k)} \sum^{k}_{n=1} \frac{1}{n^{}} 
	= 1\;,
\end{equation}
for all $t\in\TT$.
We also need a slightly better upper bound on $\big|p^{(m-1)}_{m,k}(t)\big|$.
To this end, we note first that \eqref{equ:deribative_pml_mMinus1} yields
\begin{equation*}
	\Big| p^{(m-1)}_{m,k}(t) \Big| = \tfrac{1}{C(k)} \Big| \textstyle\sum^{k}_{n=1} \frac{\cos(nt)}{n} \Big|\,.	
\end{equation*}
The trigonometric polynomial on the right hand side is not uniformly bounded with respect to $k$ on the whole interval $\TT$, but only on intervals of the form $[\epsilon, 2\pi-\epsilon] \subset \TT$.
The precise statement is given by Lemma~\ref{lem:AppUniformBound} in Appendix~\ref{app:Dirichlte}.
Using this lemma and the bound \eqref{equ:Bound_Ck} on $C(k)$, we get
\begin{equation}
\label{equ:UpperBound_pmmMinus1}
	\big| p^{(m-1)}_{m,k}(t) \big|
	\leq \frac{C_{2}(\epsilon)}{\log(k+1)}\,,
	\qquad\text{for all}\ t\in [\epsilon,2\pi - \epsilon]
\end{equation}
where $0 < \epsilon < \pi$ is an arbitrary real number.

Finally, we notice from \eqref{equ:deribative_pml_mMinus1} that $p^{(m-1)}_{m,k} \in \W(\TT)$ with
\begin{equation}
\label{equ:Wiener_pmk_mMinus1}
	\big\| p^{(m-1)}_{m,k} \big\|_{\W(\TT)} = 1\,,
	\qquad \text{for all}\ m,k\in\NN\,.
\end{equation}

\subsection{Properties of the functions $w_k$}
\label{app:Properties_wl}

Let $\alpha\in\Cweak$ be arbitrary and let $\left\{ \alpha_{n}\right\}^{\infty}_{n=0} \subset\RNc$ be a corresponding computable sequence as constructed in Subsection~\ref{app:LemmaSequence}.
For an arbitrary fixed $m\in\NN$, let $p_{m,k}$, $k\in\NN$ be the trigonometric polynomial defined in Subsection~\ref{app:TrigPolynomials}.
Therewith, we define for $\ell = 0,1,\dots,m$ the $2\pi$-periodic functions
\begin{equation}
\label{equ:Def_well_2}
	w_{\ell}(t) = \textstyle\sum^{\infty}_{k=1} \left(\alpha_{k} - \alpha_{k-1} \right) p^{(\ell)}_{m,k}(t)\,,
	\quad t\in [0,2\pi]\,,
\end{equation}
For us, the following properties of these functions will be of importance:

\begin{lemma}
\label{lem:Properties_w_ell}
For an arbitrary $\alpha\in\Cweak$ and a fixed $m\in\NN$, let $w_{\ell}$, $0 \leq \ell \leq m$ be the functions defined by \eqref{equ:Def_well_2} on $\TT = [0,2\pi]$.
Then
\begin{enumerate}
\item $w_{\ell} \in \Cc(\TT)$ for all $0 \leq \ell \leq m-2$.
\item If $w := w_{0}$, then $w^{(\ell)} = w_{\ell}$ for all $1\leq \ell \leq m-1$.
\item $w_{m-1} \in \W(\TT)$ is absolutely continuous.
\item $w_{m-1} \in \H12c(\TT)$
\item $w_{m-1}(t) \in \RNc$ for all $t\in (0,2\pi) \cap \RNc$
\item $w_{m-1}(0) = \alpha$.
\end{enumerate}
\end{lemma}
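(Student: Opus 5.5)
The plan is to treat every $w_\ell$ as a series $\sum_k d_k\, p^{(\ell)}_{m,k}$ with coefficients $d_k := \alpha_k-\alpha_{k-1}$. Lemma~\ref{lem:SeqAlpha} gives $\sum_k |d_k| \le Q_\alpha<\infty$ with an integer $Q_\alpha$, and each partial sum $\sum_{k\le K} d_k\, p^{(\ell)}_{m,k}$ is a trigonometric polynomial. Since $\{\alpha_k\}$ is a computable sequence and $C(k)$ is rational, these partial sums form a computable sequence of trigonometric polynomials. All properties will then follow from uniform (or $L^1$, $H^{1/2}$) tail estimates. Those estimates are \emph{effective} whenever they decay in $K$ like $1/\log(K+1)$ times a known constant. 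They are merely convergent when they are only controlled by $\sum_{k>K}|d_k|$, because of the remark after Lemma~\ref{lem:SeqAlpha}.

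\textbf{Properties 1), 2) and 6).} For 1), take $0\le \ell\le m-2$. By \eqref{equ:Bound_pmkell} and \eqref{equ:Bound_Ck},
$\bigl\|w_\ell-\sum_{k\le K} d_k p^{(\ell)}_{m,k}\bigr\|_\infty \le \frac{\pi^2}{6}\,\frac{Q_\alpha}{\log(K+1)}$.
This is an effective modulus, because $Q_\alpha$ is a fixed integer that may be hard-wired into the Turing machine. Hence $w_\ell\in\Cc(\TT)$.

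For $\ell=m-1$, the bound \eqref{equ:Bound_pmkm-1} together with $\sum|d_k|<\infty$ still gives uniform convergence, though not an effective one. So all series for $0\le\ell\le m-1$ converge uniformly, and term-by-term differentiation of uniformly convergent series of $C^1$ functions whose derivative series also converges uniformly yields 2).

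For 6), formula \eqref{equ:deribative_pml_mMinus1} gives $p^{(m-1)}_{m,k}(0)=C(k)/C(k)=1$. Therefore $w_{m-1}(0)=\sum_k d_k=\lim_K \alpha_K-\alpha_0=\alpha$, using $\alpha_0=0$ and Lemma~\ref{lem:SeqAlpha} 1).

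\textbf{Property 5).} Fix a computable $t\in(0,2\pi)$. We can compute a rational $\epsilon>0$ with $t\in[\epsilon,2\pi-\epsilon]$. By \eqref{equ:UpperBound_pmmMinus1}, the tail is bounded by $C_2(\epsilon)Q_\alpha/\log(K+1)$. Provided Lemma~\ref{lem:AppUniformBound} gives an explicit computable bound $C_2(\epsilon)$ (e.g.\ of order $1/\epsilon$ from Abel summation), this is an effective estimate, so $w_{m-1}(t)\in\RNc$. The point $t=0$ is exactly where this fails, which is consistent with 6).

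\textbf{Properties 3) and 4).} For the Wiener algebra, \eqref{equ:Wiener_pmk_mMinus1} shows the series converges absolutely in $\W(\TT)$ with $\|w_{m-1}\|_{\W}\le Q_\alpha$.

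For 4), the $n$-th cosine coefficient of $w_{m-1}$ is $c_n/n$ with $c_n=\sum_{k\ge n} d_k/C(k)$. Hence $|c_n|\le Q_\alpha/\log(n+1)$ and
$\Energy(w_{m-1})\le \tfrac12 Q_\alpha^2\sum_n \frac{1}{n\log^2(n+1)}<\infty$.
To prove $H^{1/2}$-computability I will approximate by the Fourier truncations $\sum_{n\le N}(c_n/n)\cos(nt)$. Their coefficients $c_n$ are themselves only limits, so I will instead use the double truncation $\sum_{k\le K}$. I must then bound the $H^{1/2}$ norm of the tail $\sum_{k>K} d_k p^{(m-1)}_{m,k}$. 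Its coefficients are bounded by $Q_\alpha/\log(\max(n,K)+1)$, which gives an effective bound of the form $Q_\alpha^2\bigl(\sum_{n\le K}\frac1{n\log^2 K}+\sum_{n>K}\frac{1}{n\log^2 n}\bigr)\lesssim Q_\alpha^2/\log K$.

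For absolute continuity, let $q_k:=\bigl(p^{(m-1)}_{m,k}\bigr)'=-\frac1{C(k)}\sum_{n\le k}\sin(nt)$. The conjugate Dirichlet kernel satisfies $\|\sum_{n\le k}\sin(n\cdot)\|_{L^1}=O(\log k)$, so $\sup_k\|q_k\|_1<\infty$. Then $g=\sum_k d_k q_k$ converges absolutely in $L^1$. Passing to the limit in $p^{(m-1)}_{m,k}(t)=p^{(m-1)}_{m,k}(0)+\int_0^t q_k$, using uniform convergence on the left, gives $w_{m-1}(t)=w_{m-1}(0)+\int_0^t g$.

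I expect the main obstacle to be the absolute continuity in 3), specifically the uniform $L^1$ bound on $q_k$. The naive route via an $L^2$ derivative fails, because $\sum c_n^2$ need not converge. So the $O(\log k)$ estimate for the $L^1$ norm of the conjugate Dirichlet kernel, matched against the normalisation $C(k)\sim\log k$, is exactly what makes the argument work. I would prove it by splitting $[0,\pi]$ at $1/k$ and using $|\sum_{n\le k}\sin(nt)|\le \min\bigl(kt\cdot k,\ 1/\sin(t/2)\bigr)$.
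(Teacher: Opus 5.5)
Your proposal is correct and follows the paper's proof essentially step by step. It uses the same partial sums with the effective $Q_{\alpha}/\log$ tail bounds for 1) and 5), the same term-by-term differentiation (with only non-effective uniform convergence at level $m-1$) for 2), the same uniform $L^1$ bound on $\widetilde{D}_k/C(k)$ for absolute continuity, and the same telescoping sum at $t=0$ for 6). The only real deviation is in 4): you bound the Fourier coefficients of the tail and sum the Dirichlet energy directly, whereas the paper applies the triangle inequality with the exact value $\|p^{(m-1)}_{m,k}\|_{H^{1/2}}=(2C(k))^{-1/2}$; both give the same effective rate $O(Q_{\alpha}/\sqrt{\log K})$, and the paper's version is shorter.
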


\begin{IEEEproof}
1)
Let $0 \leq \ell \leq m-2$ be arbitrary.
For every $R\in\NN$ and every $t\in\TT$, we have
\begin{align*}
	\left| w_{\ell}(t) - \textstyle\sum^{R}_{k=1} \left(\alpha_{k} - \alpha_{k-1} \right) p^{(\ell)}_{m,k}(t)\right|
	&\leq \textstyle\sum^{\infty}_{k=R+1} \big|\alpha_{k} - \alpha_{k-1} \big| \big| p^{(\ell)}_{m,k}(t) \big|\\
	&\leq \textstyle\frac{\pi^{2}}{6} \sum^{\infty}_{k=R+1} \big|\alpha_{k} - \alpha_{k-1} \big| \frac{1}{\log(k+1)}
	\leq \frac{\pi^{2}}{6} \frac{Q_{\alpha}}{\log(R+2)}
\end{align*}
using Point 3) of Lemma~\ref{lem:SeqAlpha} and \eqref{equ:Bound_pmkell} to obtain the last line.
This last inequality thus shows that the computable sequence
\begin{equation*}
	\left\{  \textstyle \sum^{R}_{k=1} \left(\alpha_{k} - \alpha_{k-1} \right) p^{(\ell)}_{m,k} \right\}_{R\in\NN}
\end{equation*}
of computable trigonometric polynomials converges effectively and uniformly to $w_{\ell}$ on $\TT$.
This proves that $w_{\ell} \in \Cc(\RN)$.

2) 
For an arbitrary $0 \leq \ell \leq m-2$, we set $f = w_{\ell}$, $g = w_{\ell+1}$ and for $R\in \NN$, we set
$\psi_{R} = \textstyle \sum^{R}_{k=1} \left(\alpha_{k} - \alpha_{k-1} \right) p^{(\ell)}_{m,k}$.
Then Part 1) of this proof shows that the functions $f,g$ and the sequence $\left\{ \psi_{R} \right\}_{R\in\NN}$ satisfies the conditions of Lemma~\ref{lem:AppUniformConv}.
This lemma implies that $w'_{\ell} = w_{\ell+1}$ and consequently proves Statement 2). 
In particular, we have $w^{(\ell)} \in \Cc(\RN)$ for all $0 \leq \ell \leq m-2$.

3)
To simplify notation, we define for $R\in\NN$
\begin{equation}
\label{equ:varphiR}
	\varphi_{R}(t) = \textstyle\sum^{R}_{k=1} \left(\alpha_{k} - \alpha_{k-1}\right) p^{(m-1)}_{m,k}(t)\,,
	\qquad t\in\TT\,.
\end{equation}
Then for all $R_{2} > R_{1} \geq 1$, we have
\begin{equation*}
	\big\| \varphi_{R_2} - \varphi_{R_1}\big\|_{\W(\TT)}
	= \big\| \textstyle \sum^{R_2}_{k=R_1+1} \left(\alpha_{k} - \alpha_{k-1}\right) p^{(m-1)}_{m,k}\big\|_{\W(\TT)}
	\leq \textstyle\sum^{R_2}_{k=R_1+1} \left| \alpha_{k} - \alpha_{k-1}\right|
\end{equation*}
using the triangle inequality and \eqref{equ:Wiener_pmk_mMinus1}.
According to Lemma~\ref{lem:SeqAlpha}, Point 2), the right hand side converges to zero as $R_{1}$ and $R_{2}$ go to infinity.
Thus $\left\{ \varphi_{R} \right\}_{R\in\NN}$ is a Cauchy sequence in the Wiener algebra $\W(\TT)$. 
Since $\A(\TT)$ is complete, we have
\begin{equation*}
	w_{m-1} = \textstyle\lim_{R\to\infty} \sum^{R}_{k=1} \left(\alpha_{k} - \alpha_{k-1}\right) p^{(m-1)}_{m,k} \in \A(\TT)\,.
\end{equation*}
Moreover, similarly as in Part 1) of this proof, we have for arbitrary $R\in\NN$ and every $t\in\TT$ 
\begin{equation}
\label{equ:UniformConv_w}
	\big| w_{m-1}(t) - \varphi_{R}(t)\big|
	\leq \textstyle\sum^{\infty}_{k=R+1} \big|\alpha_{k} - \alpha_{k-1} \big| \big| p^{(m-1)}_{m,k}(t) \big|
	\leq \textstyle\sum^{\infty}_{k=R+1} \big|\alpha_{k} - \alpha_{k-1} \big|
\end{equation}
using \eqref{equ:Bound_pmkm-1} to obtain the last inequality.
The right hand side converges to zero as $R\to\infty$ by Point 2) of Lemma~\ref{lem:SeqAlpha}, showing that the computable sequence $\left\{  \varphi_{R} \right\}_{R\in\NN}$
of computable trigonometric polynomials converges uniformly to $w_{m-1}$ on $\TT$.
Consequently Lemma~\ref{lem:AppUniformConv} implies $w_{m-1} \in \C(\TT)$ and that $w_{m-1} = w'_{m-2}$. So in particular $w_{m-1} = w^{(m-1)}$.

To verify that $w^{(m-1)}$ is absolute continuous, we need to show that $w^{(m)}$ exists almost everywhere on $\TT$ and that
\begin{equation}
\label{equ:AbsolutCont_w_mMinus1}
	w^{(m-1)}(t) = w^{(m-1)}(0) + \int^{t}_{0} w^{(m)}(\tau)\,\d\tau\,,
	\quad t\in\TT\;.
\end{equation}
To this end, we note from \eqref{equ:deribative_pml_mMinus1} that 
\begin{equation*}
	p^{(m)}_{m,k}(t)
	= \textstyle-\frac{1}{C(k)} \sum^{k}_{n=1} \sin(nt)
	= -\frac{1}{C(k)} \widetilde{D}_{k}(t)\,,
	\quad t\in\TT\;.
\end{equation*}
where $\widetilde{D}_{k}$ denotes the conjugate Dirichlet's  kernel (cf. Appendix~\ref{app:Dirichlte}).
By the properties of this $\widetilde{D}_{k}$ as given in Lemma~\ref{lem:L1Dirichlet}, we have
\begin{equation*}
	\big\| p^{(m)}_{m,k} \big\|_{1}
	= \frac{1}{C(k)} \big\| \widetilde{D}_{k} \big\|_{1}
	\leq \frac{1}{\log(k+1)} \left[C_{3} + C_{4}\log(k)\right]
	\leq \frac{C_{3}}{\log 2} + C_{4} =: C_{5}\,,
	\qquad \text{for all}\ k\in\NN\,,
\end{equation*}
with positive constants $C_{3}, C_{4}$ that do not depend on $k$.

Let $R_{1}, R_{2}\in\NN$ with $R_{2} > R_{1}$ be arbitrary.
Then
\begin{equation*}
	\left\|\varphi'_{R_2} - \varphi'_{R_1}\right\|_{1}
	= \left\| \textstyle\sum^{R_2}_{k=R_{1}+1} \left(\alpha_{k} - \alpha_{k-1}\right) p^{(m)}_{m,k} \right\|_{1}
	\leq \textstyle\sum^{R_2}_{k=R_{1}+1} \left|\alpha_{k} - \alpha_{k-1}\right| \big\|p^{(m)}_{m,k}\big\|_{1}
	\leq C_{5} \textstyle\sum^{R_2}_{k=R_{1}+1} \left|\alpha_{k} - \alpha_{k-1}\right|
\end{equation*}
The right hand side of the last inequality converges to zero as $R_{1}, R_{2} \to\infty$, by Point 2) of Lemma~\ref{lem:SeqAlpha}. 
This shows that $\left\{ \varphi'_{R} \right\}_{R\in\NN}$ is a Cauchy sequence in $L^{1}(\TT)$ and so the limit
\begin{equation*}
	w_{m} = w^{(m)} = \textstyle\sum^{\infty}_{k=1} \left(\alpha_{k} - \alpha_{k-1}\right) p^{(m)}_{m,k}
\end{equation*}
is an $L^{1}(\TT)$-function.
Finally, we proceed similarly as in the proof of Lemma~\ref{lem:AppUniformConv}:
\begin{multline}
\label{equ:wm}
	\left| \int^{t}_{0} w^{(m)}(\tau)\, \d\tau - \left[w^{(m-1)}(t) - w^{(m-1)}(0) \right]\right|\\
	= \left| \int^{t}_{0} \left[ w^{(m)}(\tau) - \varphi'_{R}(\tau) \right] \d\tau + \left[ \varphi_{R}(t) - w^{(m-1)}(t)\right] - \left[\varphi_{R}(0) - w^{(m-1)}(0) \right] \right|\\
	\leq \big\| w^{(m)} - \varphi'_{R} \big\|_{1} + 2\, \big\| w^{(m-1)} - \varphi_{R} \big\|_{\infty}\,.
\end{multline}
We already showed that $\left\{ \varphi_{R} \right\}_{R\in\NN}$ converges to $w^{(m-1)}$ uniformly on $\TT$ and that $\left\{ \varphi'_{R} \right\}_{R\in\NN}$ converges to $w^{(m)}$ in $L^{1}(\TT)$.
Therefore, the right hand side of \eqref{equ:wm} converges to zero as $R\to\infty$, proving \eqref{equ:AbsolutCont_w_mMinus1}.

4) We consider the trigonometric polynomial $p^{(m-1)}_{m,k}$ given by \eqref{equ:deribative_pml_mMinus1}. 
Its Fourier coefficients are $a_{n} = \frac{1}{n\, C(k)}$, $n = 1,2,\dots,k$ and so its Dirichlet energy satisfies
\begin{equation*}
	\left\| p^{(m-1)}_{m,k} \right\|_{H^{1/2}}
	= \left( \textstyle\frac{1}{2}\sum^{k}_{n=1} n\, \left|a_{n}\right|^{2} \right)^{1/2}
	= \frac{1}{C(k)} \left( \textstyle\frac{1}{2}\sum^{k}_{n=1} \frac{1}{n} \right)^{1/2}
	= \frac{1}{\sqrt{2\, C(k)}}
	\leq \frac{1}{\sqrt{2\,\log(k+1)}}\,.
\end{equation*}
Now we show that the sequence $\left\{ \varphi_{R} \right\}_{R\in\NN}$ of the computable trigonometric polynomials \eqref{equ:varphiR} converge effectively to $w_{m-1}$ in the $H^{1/2}$-norm as $R\to\infty$.
Indeed,
\begin{multline*}
	\Big\| w^{(m-1)} - \varphi_{R} \Big\|_{H^{1/2}}
	= \Big\|\sum^{\infty}_{k=R+1} \left(\alpha_{k} - \alpha_{k-1}\right) p^{(m-1)}_{m,k}\Big\|_{H^{1/2}}
	\leq \sum^{\infty}_{k=R+1} \left| \alpha_{k} - \alpha_{k-1} \right| \big\| p^{(m-1)}_{m,k}\big\|_{H^{1/2}}\\
	\leq \frac{1}{\sqrt{2\, \log(R+2)}} \sum^{\infty}_{k=R+1} \left| \alpha_{k} - \alpha_{k-1} \right|
	\leq \frac{Q_{\alpha}}{\sqrt{2\, \log(R+2)}}\;,
\end{multline*}
which shows that $\varphi_{R}$ effectively converges to $w^{(m-1)}$ in the $H^{1/2}$-norm as $R\to\infty$ and proves that $w^{(m-1)}$ is a computable function in $H^{1/2}(\TT)$.

5)
Note that the uniform convergence proved in \eqref{equ:UniformConv_w} is not effective. 
However, using Lemma~\ref{lem:AppUniformBound}, we can show that for every $\epsilon> 0$
the sequence $\left\{ \varphi_{R} \right\}_{R\in\NN}$ of trigonometric polynomials defined in \eqref{equ:varphiR} converges effectively (and uniformly) to $w_{m-1}$ on every interval $[\epsilon, 2\pi-\epsilon]$.
Indeed, using triangle inequality and \eqref{equ:UpperBound_pmmMinus1}, we get
\begin{multline*}
	\left|w_{m-1}(t) - \varphi_{R}(t)\right| = 
	 \left| \textstyle\sum^{\infty}_{k=R+1} \left( \alpha_{k} - \alpha_{k-1} \right)  p^{(m-1)}_{m,k}(t) \right|
	\leq \textstyle\sum^{\infty}_{k=R+1} \big| \alpha_{k} - \alpha_{k-1} \big| \big| p^{(m-1)}_{m,k}(t) \big|\\
	\leq \textstyle\sum^{\infty}_{k=R+1} \big| \alpha_{k} - \alpha_{k-1} \big| \frac{1}{C(k)} C_{\epsilon}
	\leq \tfrac{C_{2}(\epsilon)\cdot Q_{\alpha}}{C(R+1)}
	\leq \tfrac{C_{2}(\epsilon)\cdot Q_{\alpha}}{\log(R+2)}\,,
\end{multline*}
where in the last line, we also used Property 3) of Lemma~\ref{lem:SeqAlpha}.
Thus we showed that for every $0 < \epsilon <\pi$, the function $w_{m-1} = w^{(m-1)}$ is a computable continuous functions on $[\epsilon,2\pi-\epsilon]$.
This implies in particular that $w^{(m-1)}(t) \in \RNc$ for all $t \in [\epsilon,2\pi-\epsilon] \cap \RNc$. Since this holds for arbitrary $\epsilon > 0$, we thus have
\begin{equation*}
	w^{(m-1)}(t) \in \RNc\,,
	\qquad\text{for all}\ t \in (0,2\pi) \cap \RNc\;.
\end{equation*}

6) Now we consider $t=0$.
By the definition of $w_{m-1}$ in \eqref{equ:Def_well_2}, we have
\begin{equation*}
	w_{m-1}(0) = \textstyle\sum^{\infty}_{k=1} \left(\alpha_{k} - \alpha_{k-1} \right) p^{(m-1)}_{m,k}(0)\;.
\end{equation*}
Moreover, \eqref{equ:deribative_pml_mMinus1} shows that $p^{(m-1)}_{m,k}(0) = 1$ and so we get
\begin{equation*}
	w_{m-1}(0)
	= \textstyle\lim_{N\to\infty} \sum^{N}_{k=1} \left(\alpha_{k} - \alpha_{k-1}\right)
	= \lim_{N\to\infty} \alpha_{N} - \alpha_{0} = \alpha\;,
\end{equation*}
proving Point 6) of Lemma~\ref{lem:Properties_w_ell}.
\end{IEEEproof}

\subsection{Proof of Lemma~\ref{lem:MainLemma}}
\label{sec:ProofMainLemma}

We fix a positive $m\in\NN$, an arbitrary $\delta\in\RNc$, and an arbitrary $\alpha_{*} \in \Cweak$.
Then we define
\begin{equation}
\label{equ:DefAlpha}
	\alpha = \alpha_{*}\left( \tfrac{\delta}{2\pi} \right)^{m-1}\,.
\end{equation}
Since $(\delta/(2\pi))^{m-1}$ is a computable number, it follows that $\alpha\in\Cweak$.
With this $\alpha$ we associate a computable sequence of computable numbers $\left\{ \alpha_{n} \right\}^{\infty}_{n=0}$ whose properties are given in Lemma~\ref{lem:SeqAlpha} in Subsection~\ref{app:LemmaSequence}. 
Based on this sequence and using the trigonometric polynomials defined in Subsection~\ref{app:TrigPolynomials}, we define for $0 \leq \ell \leq m-1$ the $2\pi$-periodic functions given in \eqref{equ:Def_well_2}, i.e.
\begin{equation}
\label{equ:Def_well}
	w_{\ell}(t) = \textstyle\sum^{\infty}_{k=1} \left(\alpha_{k} - \alpha_{k-1} \right) p^{(\ell)}_{m,k}(t)\,,
	\qquad t\in \RN\;,
\end{equation}
wherein $p^{(\ell)}_{m,k}$ denotes the $\ell$-th derivative of the trigonometric polynomial $p_{m,k}$ defined by \eqref{equ:definition_pmk} in Subsection~\ref{app:TrigPolynomials} dependent on the actual $m$.
Properties of the so defined functions are given in Lemma~\ref{lem:Properties_w_ell}.
This lemma shows in particular, that if we set $w = w_{0}$, then
\begin{equation*}
	w^{(\ell)} = w_{\ell}\,,\quad \text{for all}\ 0\leq \ell \leq m-1\,,
\end{equation*}
that all $w^{(l)}$ with $0\leq \ell \leq m-2$ are computable continuous functions on $[0,2\pi]$,
and that $w^{(m-1)}\in \A([0,2\pi])$ is absolute continuous with $w^{(m-1)} \in \H12c([0,2\pi])$.

Based on this function $w = w_{0} \in \C^{{m-1}}(\RN)$, we define now the function
\begin{equation*}
	u(t) = w\left(\tfrac{2\pi}{\delta} t - \pi\right)\,,\qquad t\in\TT = [0,\delta]\,.
\end{equation*}
Then it is clear that for all $0\leq \ell\leq m-1$
\begin{equation*}
	u^{(\ell)}(t) = \left(\tfrac{2\pi}{\delta}\right)^{\ell}\, w^{(\ell)}\left(\tfrac{2\pi}{\delta} t - \pi\right)\,,
	\qquad t\in[0,2\pi]\,,
\end{equation*}
and Properties 1) -- 4) follow directly from the properties of $w$ given in Lemma~\ref{lem:Properties_w_ell}.
Moreover, we also have
\begin{equation*}
	u^{(m-1)}(0) = \left(\tfrac{2\pi}{\delta}\right)^{m-1} w(-\pi) = \left(\tfrac{2\pi}{\delta}\right)^{m-1} w(\pi) \in \RNc
\end{equation*}
by Property 5) of Lemma~\ref{lem:Properties_w_ell} and because $\left(\tfrac{2\pi}{\delta}\right)^{m-1} \in \RNc$.
Finally
\begin{equation*}
	u^{(m-1)}\left(\tfrac{\delta}{2}\right)
	= \left(\tfrac{2\pi}{\delta}\right)^{m-1} w^{(m-1)}(0)
	= \left(\tfrac{2\pi}{\delta}\right)^{m-1} \alpha
	= \alpha_{*}
\end{equation*}
using Property 6) of Lemma~\ref{lem:Properties_w_ell} and recalling that $\alpha$ was defined by \eqref{equ:DefAlpha}.


\section{Conclusions}
\label{sec:Conclusion}

In this paper, we have investigated to computability of solutions of singular linear differential-algebraic equations (LDAEs) with constant coefficients.
Assuming that all coefficients of the LDAE are computable numbers, we have shown that for arbitrary computable initial values there always exist admissible and computable input functions such that the LDAE has a unique solution, yet this solution is not computable. In fact the solution is even not Banach-Mazur computable on arbitrary small time intervals.
This non-computability of the solution occurs even through the input $\vu$ has strong regularity properties.
Namely, if the given LDAE has index $m$, then the non-computability occurs for a computable input $\vu$ that is $m-1$-times continuously differentiable and for which its $(m-1)$th derivative $\vu^{(m-1)}$ is absolute continuous, admits an absolutely convergent Fourier series and has a computable finite Dirichlet energy.

This behavior is quite remarkable in view of known computability results for regular LDAEs.
There it is known that if the LDAE has a unique solution, then this solution is necessarily Turing computable.

%

\appendix

\subsection{The matrices $\widetilde{\mE}$, $\widetilde{\mA}$, $\mP$, and $\mQ$ are computable}
\label{app:ComputablePQ}

Let $(\mE,\mA)$ be a regular matrix pair with computable matrices $\mE,\mA \in \RNc^{n \times n}$,
and let $\mP$ and $\mQ$ be the matrices in $\RN^{n\times n}$ that transform $(\mE,\mA)$ onto its Weiserstrass canonical form \eqref{equ:MazrixWeierstrassCF} via \eqref{equ:TrafoMatrices}.
We want to show that $\mP$ and $\mQ$ are also computable matrices, i.e. all of its entries are again computable numbers, i.e. $\mP, \mQ \in \RNc^{n\times n}$.
Thereby, we follow the construction of the canonical form in \cite{Gantmacher1} but only sketch the main arguments. 

To this end, we recall first that two quadratic matrices $\mB_{1}, \mB_{2} \in \FN^{n\times n}$, with entries from an arbitrary number field $\FN$, are said to be \emph{similar}, if there exists an invertible matrix $\mP \in \FN^{n\times n}$ such that $\mB_{2} = \mP^{-1} \mB_{1} \mP$.
The important point for us is that the transformation matrix of two similar matrices has always entries from the same field as the matrices $\mB_{1}$ and $\mB_{2}$ itself.
In our case, the field $\FN$ will be the field of computable numbers $\RNc$.

However, we are interested in the similarity of matrix pairs and not just matrices.
Therefore, we follow \cite{Gantmacher1} and consider the matrix polynomial (or matrix pencil)
\begin{equation}
\label{equ:MatrixPencil}
	\lambda \mE - \mA
\end{equation}
associated with the matrix pair $(\mE,\mA)$.

1) Assume first that $\mE$ is non-singular. Then we can multiply \eqref{equ:MatrixPencil} from the left with $\mE^{-1}$ and obtain $\lambda\mI - \mA_{1}$ with $\mA_{1} = \mE^{-1}\mA$ and it is clear that all entries of $\mE^{1}$ and consequently of $\mA_{1}$ are computable numbers, i.e. $\mA_{1} \in \RNc^{n\times n}$.
Now, every matrix is similar to its Jordan canonical form. Consequently, there exists $\mP_{1} \in \RNc^{n\times n}$ such that $\mP_{1}^{-1}\mA_{1}\mP_{1} = \mJ$, where $\mJ$ is the Jordan canonical form of $\mA_{1}$.
So overall, we get
\begin{equation*}
	\mP^{-1}_{1}\mE^{-1}\left( \lambda \mE - \mA\right) \mP_{1}
	= \mP^{-1} \left( \lambda \mE - \mA\right) \mQ
	= \lambda \mI - \mJ\;.
\end{equation*}
The right hand side is the desired canonical form \eqref{equ:MazrixWeierstrassCF} with $\nb = 0$ because $\mE$ was assumed to be regular.
The corresponding transformation matrices $\mP = \mE\mP_{1}$ and $\mQ = \mP_{1}$ are computable, because $\mE$ and $\mP_{1}$ are computable matrices.

2) Assume now that $\mE$ is singular. Since $(\mE,\mA)$ is regular, there always exists an $c \in \RNc$ such that $\det(c\mE - \mA) \neq 0$.
With this $c$, we rewrite the matrix polynomial \eqref{equ:MatrixPencil} as
\begin{equation*}
	\lambda\mE - \mA = (\lambda - c)\mE - \mA_{1}
\end{equation*}
with $\mA_{1} = \mA - c\mE \in \RNc^{n\times n}$ and $\det(\mA_{1})\neq 0$.
Now we multiply the matrix pencil with $\mA^{-1}_{1} \in \RNc^{n\times n}$ which yields
\begin{equation*}
	\mA^{-1}_{1}\left( \lambda\mE - \mA \right) = (\lambda - c)\mE_{1} - \mI
\end{equation*}
with $\mE_{1} = \mA^{-1}_{1}\mE \in \RNc^{n\times n}$.
Matrix $\mE_{1}$ is singular, but as any matrix, it is similar to its Jordan canonical form $\mJ$, i.e. there exists $\mP_{1}\in\RN^{n\times n}$ such that $\mP^{-1}_{1}\mE_{1}\mP_{1} = \mJ$.
Therewith, the previous matrix pencil can be transformed as
\begin{equation*}
	\mP^{-1}_{1}\mA^{-1}_{1}\left( \lambda\mE - \mA \right)\mP_{1} = (\lambda - c)\mJ - \mI\;.
\end{equation*}
Since $\mE_{1}$ is singular, its Jordan normal form has the block diagonal form
\begin{equation*}
	\mJ
	= \begin{pmatrix}
	\mJ_{\mathrm{a}} & \mNull\\
	\mNull & \mJ_{\mathrm{b}}
	\end{pmatrix}
\end{equation*}
wherein $\mJ_{\mathrm{a}} \in \RNc^{\na\times\na}$ is invertible, $\mJ_{\mathrm{b}} \in \RNc^{\nb\times\nb}$ is a nilpotent, and $\na+\nb = n$.
Therewith, we get
\begin{equation}
\label{equ:MatrixPencil_Block}
	\mP^{-1}_{2}\left( \lambda\mE - \mA \right)\mQ_{2}
	= \begin{pmatrix}
	(\lambda - c)\mJ_{\mathrm{a}} - \mI & \mNull\\
	\mNull & \lambda\mJ_{\mathrm{b}} - c(\mI + \mJ_{\mathrm{b}})
	\end{pmatrix}\,,
\end{equation}
where we defined $\mP_{2} = \mA_{1}\mP_{1} \in \RNc^{n\times n}$ and $\mQ_{2} = \mP_{1} \in \RNc^{n\times n}$.
Thus the matrix pencil separates into two smaller independent matrix pencils which are we consider separately next.

Since $\mJ_{\mathrm{a}}$ is non-singular, we can multiply the first matrix polynomial from the left with $\mJ^{-1}_{\mathrm{a}} \in \RNc^{\na\times\na}$ which yields
$\lambda\mI - \mK_{\mathrm{a}}$ with $\mK_{\mathrm{a}} = \left(c\mI + \mJ^{-1}_{\mathrm{a}}\right)\in\RNc^{\na\times\na}$.
Finally, we can transform the invertible $\mK_{0}$ into its Jordan canonical form $\mJ$ according to $\mP^{-1}_{3}\mK_{0}\mP_{3} = \mJ$ with some invertible transformation matrix $\mP_{3} \in \RNc^{\na\times\na}$. Thus overall, we have for the first line
\begin{equation*}
	\mP_{4}^{-1} \left[(\lambda - c)\mJ_{\mathrm{a}} - \mI\right]\mQ_{4}
	= \lambda\mI - \mJ
\end{equation*}
with $\mP_{4} = \mJ_{\mathrm{a}}\mP_{3} \in \RNc^{\na\times\na}$ and $\mQ_{4} = \mP_{3}$.

For the second line in \eqref{equ:MatrixPencil_Block}, we notice first that since $\mJ_{\mathrm{b}}$ is nilpotent, the matrix $\mI+\mJ_{\mathrm{b}}$ is always invertible.
Thus, multiplying the second line of \eqref{equ:MatrixPencil_Block} from the left with $[c(\mI+\mJ_{\mathrm{b}})]^{-1}$ yields
$\lambda\mK_{\mathrm{b}} - \mI$ with $\mK_{\mathrm{b}} = [c(\mI+\mJ_{\mathrm{b}})]^{-1} \mJ_{\mathrm{b}}$.
Finally, $\mK_{\mathrm{b}}$ is similar to its Jordan canonical form $\mN$ so there exist a corresponding transformation matrix $\mP_{5} \in \RNc^{\nb\times\nb}$ and therewith we get for the second line of \eqref{equ:MatrixPencil_Block}
\begin{equation*}
	\mP_{6}^{-1} \left[ \lambda\mJ_{\mathrm{b}} - c(\mI + \mJ_{\mathrm{b}}) \right] \mQ_{6}
	= \lambda\mN - \mI
\end{equation*}
with $\mP_{6} = c(\mI + \mJ_{\mathrm{b}})\mP_{5} \in \RNc^{\na\times\na}$ and $\mQ_{6} = \mP_{5}$.

Putting all together, we thus have found the transformation matrices $\mP, \mQ \in \RNc^{n\times n}$, given by
\begin{equation*}
	\mP = \mP_{2}\begin{pmatrix}\mP_{4} & \mNull\\ \mNull & \mP_{6}\end{pmatrix}
	\quad\text{and}\quad
	\mQ = \mQ_{2}\begin{pmatrix}\mQ_{4} & \mNull\\ \mNull & \mQ_{6}\end{pmatrix}
\end{equation*}
and such that
\begin{equation*}
	\mP^{-1}\left(\lambda\mE - \mA\right)\mQ 
	= \lambda\begin{pmatrix} \mI & \mNull\\ \mNull & \mN\end{pmatrix} - \begin{pmatrix} \mJ & \mNull\\ \mNull & \mI\end{pmatrix}
	= \lambda\widetilde{\mE} - \widetilde{\mA}
\end{equation*}
where the right hand side of the last equation is the desired Weierstrass canonical form \eqref{equ:MazrixWeierstrassCF}, and the previous derivation showed that all entries of $\mP$ and $\mP$ are in $\RNc$.

To verify that also the entries of $\widetilde{\mE}$ and $\widetilde{\mA}$ are computable numbers, we notice first that $\mN$ is the Jordan canonical form of a nilpotent matrix. 
Therefore, it has the form \eqref{equ:Nk}. 
Therefore, all entries of $\widetilde{\mE}$ are either zero or one, which shows that $\widetilde{\mE} \in \RNc^{n\times n}$.
The matrix $\mJ$ within $\widetilde{\mA}$ is the Jordan normal form of a regular matrix which contains (apart form zero and ones) the finite eigenvalues of $(\mE,\mA)$ on its diagonal (see, e.g., \cite{Gantmacher1} for details).
These finite eigenvalues are the zeros of the characteristic polynomial $p(\lambda) = \det(\lambda\mE - \mA)$.
Since $\mE$ and $\mA$ are computable numbers, it follows that the coefficients of the characteristic polynomial are also computable numbers.
Consequently and because the computable numbers are algebraic closed, all of its zeros are also computable (complex) numbers \cite{Ko_Complexity91} which shows that also $\widetilde{\mA} \in \RNc$.

\subsection{Dirichlets's  kernels}
\label{app:Dirichlte}

Consider the interval $\TT = [0,2\pi] \subset\RN$ of the real axis. The trigonometric polynomials
\begin{equation*}
	D_{n}(t)
	= \frac{1}{2} + \sum^{n}_{k=1}\cos(kt)
	= \frac{\sin([n+1/2])t}{2\sin(t/2)}\,,
	\quad t\in\TT\,,
\end{equation*}
and
\begin{equation*}
	\widetilde{D}_{n}(t)
	= \sum^{n}_{k=1} \sin(k t)
	= \frac{\cos(t/2) - \cos([n+1/2]t)}{2\sin(t/2)}\,,
	\quad t\in\TT\,,	
\end{equation*}
are known \emph{Dirichlet's kernel} and \emph{Dirichlet's conjugate kernel}, respectively.
We need two well known statements related to these kernels, namely Lemmas~\ref{lem:AppUniformBound} and \ref{lem:L1Dirichlet}, below.
For completeness and to be self contained, we also provide short proofs of the results which are mainly taken from \cite{Zygmund}.

First we need the observation that both kernels are uniformly bounded with respect to $n\in\NN$, on every interval of the form $[\epsilon,2\pi-\epsilon] \subset\TT$:

\begin{lemma}
\label{lem:DUniformUpBound}
Let $\epsilon > 0$ be arbitrary, then there exists a constant $C_{1}(\epsilon) > 0$ such that for every $n\in\NN$ and all $t\in [\epsilon,2\pi-\epsilon]$, we have
\begin{equation*}
	\big|D_{n}(t)\big| \leq C_{1}(\epsilon)
	\quad\text{and}\quad
	\big|\widetilde{D}_{n}(t)\big| \leq C_{1}(\epsilon)\,.
\end{equation*}
\end{lemma}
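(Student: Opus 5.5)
We will use the closed forms of the two kernels given just before the statement:
\begin{equation*}
	D_{n}(t) = \frac{\sin((n+\frac{1}{2})t)}{2\sin(t/2)}\,,
	\qquad
	\widetilde{D}_{n}(t) = \frac{\cos(t/2) - \cos((n+\frac{1}{2})t)}{2\sin(t/2)}\,,
\end{equation*}
valid for $t\in(0,2\pi)$. The idea is that the only source of growth in $n$ is the vanishing of the denominator $2\sin(t/2)$ at $t=0$ and $t=2\pi$. Away from these endpoints the denominator is bounded below, while the numerators are bounded independently of $n$.

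First we verify the two identities, since the paper only states them. For $D_n$, multiply $\frac12+\sum_{k=1}^n\cos(kt)$ by $2\sin(t/2)$. The product formula $2\sin(t/2)\cos(kt)=\sin((k+\frac12)t)-\sin((k-\frac12)t)$ makes the sum telescope, and the $\frac12$ term contributes $\sin(t/2)$; the result is $\sin((n+\frac12)t)$. For $\widetilde D_n$, the product formula $2\sin(t/2)\sin(kt)=\cos((k-\frac12)t)-\cos((k+\frac12)t)$ telescopes the sum to $\cos(t/2)-\cos((n+\frac12)t)$.

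Next we bound the denominator. Let $0<\epsilon<\pi$; for $\epsilon\geq\pi$ the interval is empty or a single point and the claim is trivial. For $t\in[\epsilon,2\pi-\epsilon]$ we have $t/2\in[\epsilon/2,\pi-\epsilon/2]$. On this interval $\sin$ is bounded below by $\sin(\epsilon/2)>0$, by concavity and symmetry of $\sin$ on $[0,\pi]$. Hence $|2\sin(t/2)|\geq 2\sin(\epsilon/2)$.

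The numerators satisfy $|\sin((n+\frac12)t)|\leq 1$ and $|\cos(t/2)-\cos((n+\frac12)t)|\leq 2$. Therefore, for all $n$ and all $t\in[\epsilon,2\pi-\epsilon]$,
\begin{equation*}
	|D_n(t)|\leq \frac{1}{2\sin(\epsilon/2)}
	\qquad\text{and}\qquad
	|\widetilde D_n(t)|\leq \frac{1}{\sin(\epsilon/2)}\,.
\end{equation*}
We take $C_1(\epsilon)=1/\sin(\epsilon/2)$, which covers both kernels.

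There is no real obstacle here. The only point needing care is the lower bound on $\sin(t/2)$ uniformly on the interval, which the endpoint and concavity argument settles. The identities themselves are routine telescoping sums.
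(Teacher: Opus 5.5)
Your proof is correct and is essentially the paper's argument: the numerators of the closed forms for $D_{n}$ and $\widetilde{D}_{n}$ are bounded independently of $n$, while $\sin(t/2)\geq\sin(\epsilon/2)>0$ on $[\epsilon,2\pi-\epsilon]$. You go slightly further than the paper by verifying the closed forms via telescoping, giving the explicit constant $C_{1}(\epsilon)=1/\sin(\epsilon/2)$, and disposing of the trivial case $\epsilon\geq\pi$.
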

To prove this statement, one only has to note that the nominators of the closed form expressions for $D_{n}$ and $\widetilde{D}_{n}$ are uniformly bounded whereas the denominator satisfies $\sin(t/2) \geq \sin(\epsilon/2)$ for all $t\in [\epsilon,2\pi-\epsilon]$.

Based on the previous lemma, one can now show \cite{Zygmund}:

\begin{lemma}
\label{lem:AppUniformBound}
To every $\epsilon > 0$ there exists a constant $C_{2}(\epsilon)$ such that for all $k\in\NN$, we have
\begin{equation*}
	\textstyle\left|\sum^{k}_{n=1} \frac{1}{n} \cos(nt)\right| \leq C_{2}(\epsilon) \qquad\text{and}\qquad 
	\textstyle\left|\sum^{k}_{n=1} \frac{1}{n} \sin(nt)\right| \leq C_{2}(\epsilon)\,, \qquad\text{for all}\ t\in [\epsilon, 2\pi-\epsilon]\,.
\end{equation*} 
\end{lemma}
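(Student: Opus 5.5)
The plan is to prove the bound for the sine series first and then derive the cosine series from it. Both rest on Abel summation combined with Lemma~\ref{lem:DUniformUpBound}.

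Fix $\epsilon>0$ and $t\in[\epsilon,2\pi-\epsilon]$. Write $\frac{1}{n}\sin(nt)$ in terms of the partial sums $\widetilde{D}_{n}(t)=\sum_{j=1}^{n}\sin(jt)$ with $\widetilde{D}_{0}=0$. Summation by parts then gives
\begin{equation*}
	\sum^{k}_{n=1}\frac{\sin(nt)}{n}
	= \frac{\widetilde{D}_{k}(t)}{k} + \sum^{k-1}_{n=1}\widetilde{D}_{n}(t)\Big(\frac{1}{n}-\frac{1}{n+1}\Big).
\end{equation*}
By Lemma~\ref{lem:DUniformUpBound}, every $|\widetilde{D}_{n}(t)|$ is at most $C_{1}(\epsilon)$. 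The weights $\frac1n-\frac1{n+1}$ are nonnegative and telescope to $1-\frac1k$. Hence the absolute value of the right-hand side is at most $C_{1}(\epsilon)\big(\frac1k+1-\frac1k\big)=C_{1}(\epsilon)$, uniformly in $k$ and $t$.

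For the cosine series, apply the same summation by parts with $D_{n}(t)-\tfrac12=\sum_{j=1}^{n}\cos(jt)$ in place of $\widetilde{D}_{n}$. These partial sums are bounded by $C_{1}(\epsilon)+\tfrac12$ on $[\epsilon,2\pi-\epsilon]$, again by Lemma~\ref{lem:DUniformUpBound}. The identical telescoping estimate then bounds the cosine series by $C_{1}(\epsilon)+\tfrac12$. Setting $C_{2}(\epsilon):=C_{1}(\epsilon)+\tfrac12$ handles both series at once.

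The argument has no real obstacle beyond writing the Abel summation correctly. The one point to check is that Lemma~\ref{lem:DUniformUpBound} is used only on the compact interval away from $0$ and $2\pi$, where $\sin(t/2)\ge\sin(\epsilon/2)>0$. This is exactly why the bound must be restricted to $[\epsilon,2\pi-\epsilon]$: at $t=0$ the cosine series is the harmonic sum and is unbounded in $k$.
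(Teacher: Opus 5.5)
Your proof is correct and matches the paper's argument: summation by parts against the partial sums $D_{n}(t)-\tfrac{1}{2}$ and $\widetilde{D}_{n}(t)$, the uniform bound from Lemma~\ref{lem:DUniformUpBound}, and telescoping of $\frac{1}{n}-\frac{1}{n+1}$, giving $C_{2}(\epsilon)=C_{1}(\epsilon)+\tfrac{1}{2}$. The only difference is order: you write out the sine case first and get the slightly sharper bound $C_{1}(\epsilon)$ there, whereas the paper writes out the cosine case and says the sine case is identical.
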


\begin{IEEEproof}
To proof this statement, we use summation by parts (cf. \cite[Capter~I.2]{Zygmund}), which shows that
\begin{equation*}
	\left| \textstyle\sum^{k}_{n=1} \frac{1}{n}\cos(nt) \right|
	= \left| \textstyle\sum^{k-1}_{n=1} \left[D_{n}(t)-\tfrac{1}{2}\right]\,\big(\tfrac{1}{n} - \tfrac{1}{n+1}\big) + \left[D_{k}(t) - \tfrac{1}{2}\right]\tfrac{1}{k} \right|
\end{equation*}
Choose an arbitrary $t\in [\epsilon,2\pi-\epsilon]$, applying triangle inequality, and using Lemma~\ref{lem:DUniformUpBound}, one obtains
\begin{equation*}
	\left|\textstyle\sum^{k}_{n=1} \tfrac{1}{n}\cos(nt)\right|
	\leq \left[C_{1}(\epsilon)+\tfrac{1}{2}\right] \left(\textstyle\sum^{k-1}_{n=1} \left[\tfrac{1}{n}-\tfrac{1}{n+1}\right] + \tfrac{1}{k}\right)
	= C_{1}(\epsilon)+\tfrac{1}{2}.
\end{equation*}
which proves the statement for the Dirichlet kernel.
The proof for the conjugate Dirichlet kernel is exactly the same.
\end{IEEEproof}

In this paper, will will need an upper bound on its $L^{1}(\TT)$-norm. It is well known that this norm grows logarithmically in $n$ \cite[Chapter~II.12]{Zygmund}.
More precisely, one has the following result:

\begin{lemma}
\label{lem:L1Dirichlet}
There exists positive constants $C_{3}$ and $C_{4}$ such that for all $n\in\NN$
\begin{equation}
\label{equ:BoundDirichlet}
	\big\|\widetilde{D}_{n}\big\|_{1}
	= \int^{2\pi}_{0} \big|\widetilde{D}_{n}(t)\big|\, \d t 
	\leq C_{3} + C_{4} \log(n)\,.
\end{equation}
\end{lemma}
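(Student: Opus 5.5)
We prove Lemma~\ref{lem:L1Dirichlet} directly from the closed-form expression of the conjugate Dirichlet kernel,
\begin{equation*}
	\widetilde{D}_{n}(t) = \frac{\cos(t/2) - \cos([n+1/2]t)}{2\sin(t/2)}\,,
\end{equation*}
together with the symmetry $|\widetilde{D}_{n}(2\pi - t)| = |\widetilde{D}_{n}(t)|$, which holds because $\widetilde{D}_{n}$ is odd and $2\pi$-periodic. By this symmetry, $\|\widetilde{D}_{n}\|_{1} = 2\int^{\pi}_{0} |\widetilde{D}_{n}(t)|\,\d t$, so it suffices to bound the integral over $[0,\pi]$. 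We split this interval at $t = 1/n$ (for $n \geq 1$ we have $1/n \leq 1 < \pi$).

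On $[0,1/n]$ we use the trivial bound coming from the definition as a sum: $|\widetilde{D}_{n}(t)| \leq \sum^{n}_{k=1} |\sin(kt)| \leq \sum^{n}_{k=1} kt \leq n^{2}t$. Integrating over $[0,1/n]$ gives at most $n^{2}\cdot\frac{1}{2n^{2}} = \frac{1}{2}$. The cruder estimate $|\widetilde{D}_{n}(t)| \leq n$ also suffices here and yields the bound $1$.

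On $[1/n,\pi]$ we use the closed form. Its numerator is bounded by $2$, and the concavity of the sine on $[0,\pi/2]$ gives $\sin(t/2) \geq t/\pi$ for $t\in[0,\pi]$. Hence $|\widetilde{D}_{n}(t)| \leq \frac{2}{2t/\pi} = \frac{\pi}{t}$, and
\begin{equation*}
	\int^{\pi}_{1/n} |\widetilde{D}_{n}(t)|\,\d t \leq \pi \int^{\pi}_{1/n} \frac{\d t}{t} = \pi\left(\log\pi + \log n\right).
\end{equation*}

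Combining the two pieces and doubling yields
\begin{equation*}
	\|\widetilde{D}_{n}\|_{1} \leq 2\left(1 + \pi\log\pi\right) + 2\pi\log n\,,
\end{equation*}
so \eqref{equ:BoundDirichlet} holds with $C_{3} = 2 + 2\pi\log\pi$ and $C_{4} = 2\pi$. The only step that needs care is the elementary inequality $\sin(t/2) \geq t/\pi$ on $[0,\pi]$, which follows from concavity. Beyond that, the proof only requires choosing the splitting point $1/n$ so that both pieces are controlled. No results from earlier in the paper are needed beyond the closed form of $\widetilde{D}_{n}$.
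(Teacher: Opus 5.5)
Your proof is correct. It handles the singularity at $t=0$ differently from the paper.

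The paper first passes to the modified kernel $\widetilde{D}^{*}_{n}(t)=\widetilde{D}_{n}(t)-\tfrac{1}{2}\sin(nt)=\frac{[1-\cos(nt)]\cos(t/2)}{2\sin(t/2)}$, whose numerator is nonnegative and vanishes to second order at $0$. Using the same inequality $\sin(t/2)\geq t/\pi$ and the substitution $s=nt$, it reduces to bounding $\pi\int_{0}^{n\pi}\frac{1-\cos s}{s}\,\mathrm{d}s$. It then uses $1-\cos s\leq s^{2}/2$ on $[0,\pi]$ and $1/s\leq 1/(k\pi)$ on each $[k\pi,(k+1)\pi]$ to get a harmonic sum, and finally adds back $\tfrac{1}{2}\|\sin(n\,\cdot)\|_{1}=2$ by the triangle inequality.

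You instead work with $\widetilde{D}_{n}$ itself and split at $t=1/n$. You control the small-$t$ piece from the sine-sum definition, and bound the rest by $\pi/t$ using only that the numerator of the closed form is at most $2$ in modulus.

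Both arguments effectively split at scale $1/n$, since the paper's split at $s=\pi$ is $t=\pi/n$. Yours needs no auxiliary kernel, rescaling, or harmonic-sum estimate. It also covers $n=1$ directly, whereas the paper's intermediate bound $\widetilde{C}_{3}+\pi\log(n-1)$ tacitly needs $n\geq 2$.

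What the paper's route buys is a better constant. It integrates $1-\cos(nt)$, which has mean $1$ over each period, rather than bounding the numerator by its supremum $2$, and so gets $C_{4}=\pi$ instead of your $2\pi$. For the lemma's only use, namely $\sup_{k}\|p^{(m)}_{m,k}\|_{1}<\infty$ after dividing by $C(k)\geq\log(k+1)$, either constant suffices.
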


For completeness and to be self-contained, we include the standard proof.

\begin{IEEEproof}
It will be convenient to consider first the \emph{modified conjugate Dirichlet kernel} given by
\begin{equation}
\label{equ:ModConjDirichKern}
	\widetilde{D}^{*}_{n}(t)
	= \widetilde{D}_{n}(t) - \tfrac{1}{2}\sin(nt)
	= \frac{[1 - \cos(nt)]\cos(t/2)}{2 \sin(t/2)}\,.
\end{equation}
Because $\widetilde{D}^{*}_{n}$ is a $2\pi$-periodic and odd function (i.e. $\widetilde{D}^{*}_{n}(-t) = - \widetilde{D}^{*}_{n}(t)$) for all $t\in\RN$, we have
\begin{equation*}
	\big\|\widetilde{D}^{*}_{n}\big\|_{1}
	= \int^{2\pi}_{0} \big|\widetilde{D}^{*}_{n}(t)\big|\, \d t 
	= 2\int^{\pi}_{0} \big|\widetilde{D}^{*}_{n}(t)\big|\, \d t \;.
\end{equation*}
Inserting the closed form expression \eqref{equ:ModConjDirichKern} for $\widetilde{D}^{*}_{n}$, using that $\left|\cos(t/2)\right|\leq 1$, 
and $\sin(t/2) \geq \frac{1}{\pi} t$ for all $t\in [0,\pi]$,
we get
\begin{equation*}
	\big\|\widetilde{D}^{*}_{n}\big\|_{1}	
	\leq \pi \int^{\pi}_{0} \frac{\left|1 - \cos(nt)\right|}{t}\, \d t
	= \pi \int^{n \pi}_{0} \frac{1 - \cos(t)}{t}\, \d t
	= \pi \int^{\pi}_{0} \frac{1 - \cos(t)}{t}\, \d t + \pi \sum^{n-1}_{k=1}\int^{(k+1)\pi}_{k\pi} \frac{1 - \cos(t)}{t}\, \d t\,.
\end{equation*}
To upper bound the first term on the right hand side, we use that $1 - \cos(t) \leq \frac{1}{2} t^{2}$.
For the second term, we use the trivial observation that $\frac{1}{t} \leq \frac{1}{k\pi}$ for all $t \in [k\pi,(k+1)\pi]$.
This yields
\begin{equation*}
	\big\|\widetilde{D}^{*}_{n}\big\|_{1}
	\leq \frac{\pi}{2} \int^{\pi}_{0} t\, \d t + \pi \sum^{n-1}_{k=1}\frac{1}{k}
	= \frac{\pi^{3}}{4} + \pi + \pi \sum^{n-1}_{k=2}\frac{1}{k}
	\leq \pi\left( \tfrac{\pi^{2}}{4} + 1 + \int^{n-1}_1\frac{\d\tau}{\tau}\right)
	\leq \widetilde{C}_{3} + \pi\log(n-1)
\end{equation*}
with $\widetilde{C}_{3} = \pi\left( \tfrac{\pi^{2}}{4} + 1\right)$.
Finally, applying triangle inequality to \eqref{equ:ModConjDirichKern} gives
\begin{equation*}
	\big\|\widetilde{D}_{n}\big\|_{1} 
	\leq \big\|\widetilde{D}^{*}_{n}\big\|_{1} + \tfrac{1}{2} \big\|\sin(n\cdot)\big\|_{1}
\end{equation*}
which yields \eqref{equ:BoundDirichlet} with $C_{3} = \pi\left( \tfrac{\pi^{2}}{4} + 1\right) + 2$ and $C_{4} = \pi$.
\end{IEEEproof}

\begin{remark}
The constants $C_{1}$ and $C_{2}$ derived in the previous proof are not optimal.
\end{remark}

\subsection{Some Auxiliary Results}
\label{app:AuxResults}

\begin{lemma}
\label{lem:AppUniformConv}
Let $f,g$ be functions on $\TT = [0,T] \subset \RN$, and let $\left\{ \psi_{n} \right\}_{n\in\NN} \subset \C^{1}(\TT)$ be a sequence of continuously differentiable functions on $\TT$ such that
\begin{align}
\label{equ:appUnifomConv}
\begin{split}
	&1)\qquad \lim_{n\to\infty} \max_{t\in\TT} \left| f(t) - \psi_{n}(t)\right| = 0\\
	&2)\qquad \lim_{n\to\infty} \max_{t\in\TT} \left| g(t) - \psi'_{n}(t)\right| = 0
	\end{split}
\end{align}
Then $f,g\in\C(\TT)$ and $f$ is continuously differentiable with $f'(t) = g(t)$, $t\in\TT$.
\end{lemma}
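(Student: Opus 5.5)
This is the classical theorem on uniform convergence of derivatives, so I will prove it directly from the fundamental theorem of calculus.

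\emph{Continuity of the limits.} Each $\psi_n$ and each $\psi_n'$ is continuous on the compact interval $\TT$. By assumptions 1) and 2), $f$ and $g$ are uniform limits of continuous functions, so $f,g\in\C(\TT)$ by the standard $\epsilon/3$ argument. It follows that $g$ is Riemann integrable on $\TT$.

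\emph{Integral identity for $f$.} Since $\psi_n\in\C^{1}(\TT)$, the fundamental theorem of calculus gives
\begin{equation*}
	\psi_n(t) = \psi_n(0) + \int_0^t \psi_n'(\tau)\,\d\tau, \qquad t\in\TT.
\end{equation*}
I will let $n\to\infty$ in this identity for fixed $t$.
\begin{itemize}
\item By 1), the left-hand side tends to $f(t)$.
\item By 1), $\psi_n(0)$ tends to $f(0)$.
\item For the integral term,
\begin{equation*}
	\left|\int_0^t \psi_n'(\tau)\,\d\tau-\int_0^t g(\tau)\,\d\tau\right| \le T\max_{\tau\in\TT}\left|\psi_n'(\tau)-g(\tau)\right|,
\end{equation*}
and the right-hand side tends to $0$ by 2).
\end{itemize}
Hence
\begin{equation*}
	f(t) = f(0) + \int_0^t g(\tau)\,\d\tau \qquad \text{for all } t\in\TT.
\end{equation*}

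\emph{Conclusion.} Because $g$ is continuous, the fundamental theorem of calculus applied to this representation shows that $f$ is differentiable on $\TT$ with $f'=g$, using one-sided derivatives at the endpoints $0$ and $T$. Since $g\in\C(\TT)$, $f$ is continuously differentiable.

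\emph{Main obstacle.} The only step that needs care is passing to the limit under the integral. Here it reduces to the elementary bound by $T$ times the sup-norm, so no genuine difficulty arises. The conclusion is exactly the identity $w_\ell'=w_{\ell+1}$ that Lemma~\ref{lem:Properties_w_ell} uses.
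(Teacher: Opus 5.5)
Your proof is correct and follows essentially the same route as the paper. Both arguments obtain $f(t)=f(0)+\int_0^t g(\tau)\,\d\tau$ by passing to the limit in the fundamental theorem of calculus for $\psi_n$, and both then read off $f'=g$ from the continuity of $g$; the only difference is that the paper packages your three term-by-term limits into the single estimate $t\,\|g-\psi_n'\|_\infty + 2\,\|f-\psi_n\|_\infty$.
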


\begin{IEEEproof}
The continuity of $f$ and $g$ follows from the uniform convergence of sequences of continuous functions on $\TT$ \cite{Rudin1}.
We show next that 
\begin{equation}
\label{eqe:App_fStich_g}
	f(t) = f(0) + \int^{t}_{0}g(\tau)\, \d\tau\;,
	\qquad t\in \TT\,.
\end{equation}
This implies that $g = f'$, and since $g$ is continuous, it follows that $f \in \C^{1}(\TT)$.

Let $t \in\TT$ be arbitrary, then by triangle inequality, we have
\begin{multline*}
	\left| \int^{t}_{0} g(\tau)\, \d\tau - \left[ f(t) - f(0) \right] \right|
	\leq \left| \int^{t}_{0} g(\tau)\, \d\tau - \int^{t}_{0} f'_{n}(\tau)\d\tau \right| + \left| \int^{t}_{0} f'_{n}(\tau)\d\tau - \left[ f(t) - f(0) \right] \right|\\
	\leq \int^{t}_{0} \left|g(\tau) - f'_{n}(\tau)\left| \d\tau + \left| f'_{n}(t) - f(t) \right| + \right| f_{n}(0) - f(0)\right|
	\leq t\, \left\|g - f_{n} \right\|_{\infty} + 2\, \left\|f - f_{n} \right\|_{\infty}\,.
\end{multline*}
Because of \eqref{equ:appUnifomConv}, the right hand side converges to zero as $n\to\infty$ which proves \eqref{eqe:App_fStich_g}.
\end{IEEEproof}


\end{document}